\definecolor{LightPink}{rgb}{0.858, 0.188, 0.478}
\definecolor{RED4}{rgb}{0.55,0,0}
 \definecolor{BLACK}{gray}{0}
 \definecolor{WHITE}{gray}{1}
 \definecolor{RED}{rgb}{1,0,0}
 \definecolor{GREEN}{rgb}{0,1,0}
 \definecolor{BLUE}{rgb}{0,0,1}
 \definecolor{CYAN}{cmyk}{1,0,0,0}
 \definecolor{MAGENTA}{cmyk}{0,1,0,0}
 \definecolor{YELLOW}{cmyk}{0,0,1,0}
 \definecolor{navyblue}{rgb}{0.0, 0.0, 0.5}
\newcommand{\cL}{\mathcal{L}}
\newcommand{\cS}{\mathcal{S}}
\newcommand{\ra}{\rangle}
\newcommand{\la}{\langle}
\DeclareMathOperator{\poly}{poly}
\newtheorem{theorem}{Theorem}
\newcommand{\renyi}{R\'enyi}
\newcommand{\IId }{\mathbbm{1}}
\pgfplotsset{compat=1.18}
\definecolor{quantumviolet}{RGB}{112, 48, 160}
\newcommand{\ranglee}{\rangle\!\rangle}
\newcommand{\langlee}{\langle\!\langle}
\newsavebox{\@brx}
\newcommand{\llangle}[1][]{\savebox{\@brx}{\m@th{#1\langle}}%
  \mathopen{\copy\@brx\kern-0.5\wd\@brx\usebox{\@brx}}}
\newcommand{\rrangle}[1][]{\savebox{\@brx}{\m@th{#1\rangle}}%
  \mathclose{\copy\@brx\kern-0.5\wd\@brx\usebox{\@brx}}}
\begin{document}
\begin{CJK}{UTF8}{gbsn}
\pagenumbering{arabic}

\hypersetup{
  colorlinks=true,
  linkcolor=quantumviolet,
  citecolor=quantumviolet,
  urlcolor=quantumviolet
}

\title{Instability of steady-state mixed-state symmetry-protected topological order to strong-to-weak spontaneous symmetry breaking}

\author{Jeet Shah\,\orcidlink{0000-0001-5873-8129}}
  \affiliation{Joint Quantum Institute, NIST/University of Maryland, College Park, MD, 20742, USA}\affiliation{Joint Center for Quantum Information and Computer Science, NIST/University of Maryland, College Park, MD, 20742, USA}

\author{Christopher Fechisin\,\orcidlink{0000-0002-2590-8221}}
  \affiliation{Joint Quantum Institute, NIST/University of Maryland, College Park, MD, 20742, USA}\affiliation{Joint Center for Quantum Information and Computer Science, NIST/University of Maryland, College Park, MD, 20742, USA}

\author{Yu-Xin Wang (王语馨)\,\orcidlink{0000-0003-2848-1216}}
  \affiliation{Joint Center for Quantum Information and Computer Science, NIST/University of Maryland, College Park, MD, 20742, USA}

\author{Joseph~T.~Iosue\,\orcidlink{0000-0003-3383-1946}}
  \affiliation{Joint Quantum Institute, NIST/University of Maryland, College Park, MD, 20742, USA}\affiliation{Joint Center for Quantum Information and Computer Science, NIST/University of Maryland, College Park, MD, 20742, USA}

\author{James~D.~Watson\,\orcidlink{0000-0002-6077-4898}}
  \affiliation{Joint Center for Quantum Information and Computer Science, NIST/University of Maryland, College Park, MD, 20742, USA}
    \affiliation{Department of Computer Science and Institute for Advanced Computer Studies, University of Maryland, College Park, MD 20742, USA}

\author{Yan-Qi Wang\,\orcidlink{0000-0003-3900-0836}}
  \affiliation{Joint Quantum Institute, NIST/University of Maryland, College Park, MD, 20742, USA}

\author{Brayden Ware\,\orcidlink{0000-0002-3321-3198}}
  \affiliation{Joint Center for Quantum Information and Computer Science, NIST/University of Maryland, College Park, MD, 20742, USA}\affiliation{Google Quantum AI, California, USA}

\author{Alexey~V.~Gorshkov\,\orcidlink{0000-0003-0509-3421}}
  \affiliation{Joint Quantum Institute, NIST/University of Maryland, College Park, MD, 20742, USA}\affiliation{Joint Center for Quantum Information and Computer Science, NIST/University of Maryland, College Park, MD, 20742, USA}

\author{Cheng-Ju Lin\,\orcidlink{0000-0001-7898-0211}}
  \affiliation{Joint Quantum Institute, NIST/University of Maryland, College Park, MD, 20742, USA}\affiliation{Joint Center for Quantum Information and Computer Science, NIST/University of Maryland, College Park, MD, 20742, USA}

\twocolumn[
\begin{@twocolumnfalse}
\begin{abstract}
Recent experimental progress in controlling open quantum systems enables the pursuit of mixed-state nonequilibrium quantum phases.  
We investigate whether open quantum systems hosting mixed-state symmetry-protected topological states as steady states retain this property under symmetric perturbations. 
Focusing on the \textit{decohered cluster state}---a mixed-state symmetry-protected topological state protected by a combined strong and weak symmetry---we construct a parent Lindbladian that hosts it as a steady state. This Lindbladian can be mapped onto exactly solvable reaction-diffusion dynamics, even in the presence of certain perturbations, allowing us to solve the parent Lindbladian in detail and reveal previously-unknown steady states. Using both analytical and numerical methods, we find that typical symmetric perturbations cause strong-to-weak spontaneous symmetry breaking at arbitrarily small perturbations, destabilize the steady-state mixed-state symmetry-protected topological order.  However, when perturbations introduce only weak symmetry defects, the steady-state mixed-state symmetry-protected topological order remains stable. Additionally, we construct a quantum channel which replicates the essential physics of the Lindbladian and can be efficiently simulated using only Clifford gates, Pauli measurements, and feedback.
\end{abstract}
\end{@twocolumnfalse}
]

{\begingroup
		\hypersetup{urlcolor=quantumviolet}
\maketitle
		\endgroup}
\end{CJK}
\hypersetup{linkcolor=quantumviolet, urlcolor=quantumviolet, citecolor=quantumviolet, unicode=true} 

\setcounter{tocdepth}{2}
\tableofcontents

\section{Introduction}
\label{sec:intro}

Dissipation is typically viewed as detrimental to quantum information, but recent advances in experimental control have begun to harness its potential to create well-controlled nontrivial open quantum systems~\cite{ma2019dissipative,gertler2021protecting,harringtonEngineered2022,brown2022trade,cole2022resource,malinowski2022generation,vanmourik2024experimental}. 
This raises an intriguing question: can an open quantum system host exotic quantum phases that are inherently mixed and nonthermal~\cite{coser2019classification}? 
Namely, are there phases that cannot be understood within the typical equilibrium framework, where the phases are described by either a pure state or a Gibbs thermal state?
This question becomes particularly compelling if the mixed-state phase leads to a degenerate steady-state manifold, offering potential for classical or quantum memory~\cite{PhysRevLett.81.2594,Lidar2003,byrd2004overview,albert2018lindbladians,lieu2020symmetry,rakovszky_defining_2024}.
In such cases, the stability of the phase would ensure the preservation of the degeneracy.

Recently, \textit{mixed-state symmetry-protected topological (SPT) states}~\cite{ma2023average,ma2023topological, niu_strange_2023,chirame_stable_2024,guo_locally_2024, paszko_edge_2023, verissimo_dissipative_2023,zhang2023fractonic, xue_tensor_2024,zhang_quantum_2024,zhang2024strange,lee2024symmetry,ma2024symmetry,kuno2024strong,chirame2024stabilizing} have emerged as promising candidates for inherently mixed-state phases.
Mixed-state SPT phases consist of short-range entangled density matrices which cannot be connected to density matrices in other phases by symmetric two-way dissipative evolution~\cite{coser2019classification}. 
The related topics of anomaly~\cite{hsin2023anomalies,lessa_mixed-state_2024,wang_anomaly_2024} and intrinsic topological order~\cite{bao2023mixed-state,lee2023quantum,ellison_towards_2024,sohal2024noisy} in mixed states have also been investigated in several recent works. 
Tensor network descriptions of mixed-state SPTs have also been used recently~\cite{Garre_Rubio_2023,guo_locally_2024, xue_tensor_2024}. 

When considering open quantum systems or density matrices, the notion of symmetry is augmented compared to the pure state setting, which leads to two types of symmetries~\cite{buca2012note,albert2014symmetries,albert2018lindbladians,lieu2020symmetry}.
Specifically, if the density matrix needs to be conjugated by a unitary of a symmetry group from both the ket and the bra sides to get back the same density matrix, then the symmetry is called a \textit{weak symmetry}.
On the other hand, if we can also get back the density matrix by applying different unitaries on the ket and the bra sides, then the symmetry is called a \textit{strong symmetry}. 
As pointed out in Refs.~\cite{coser2019classification,degroot2022symmetry}, there are no nontrivial mixed-state SPT phases if only weak symmetry is imposed. However, a mixed state can exhibit SPT order if the imposed symmetry is strong, although this order is captured by some pure-state SPT phase. The aforementioned mixed-state SPT order, requiring a combination of strong and weak symmetry, provides an opportunity for an inherently mixed-state phase.

The distinction between strong and weak symmetries also introduces a novel type of spontaneous symmetry breaking unique to open quantum systems, where the strong symmetry is spontaneously broken down to weak symmetry, dubbed strong-to-weak spontaneous symmetry breaking (SW-SSB)~\cite{ma2024symmetry,lessa_strong--weak_2024,sala_spontaneous_2024,kuno2024strong}.  
Detecting SW-SSB requires measuring quantities that are nonlinear in the density matrix, a task achievable in quantum devices but not in conventional condensed-matter systems. It has been suggested that a spontaneously broken strong symmetry can lead to a degenerate steady-state manifold~\cite{lieu2020symmetry}, which implies that it should be interesting to explore the steady-state manifold of SW-SSB.

Most studies on mixed-state phases, including mixed-state SPT order and SW-SSB, have focused on the ``quasilocal finite-depth quantum channel'' framework~\cite{coser2019classification}, akin to that used for defining gapped phases of ground states of closed systems~\cite{Hastings2005,chen2013symmetry}. However, given the interest in potential steady-state degeneracy in mixed-state phases, our work focuses on the steady-state phase of an open quantum system. Finding the steady states of an open quantum system can be mapped to finding extremal eigenstates of a non-Hermitian superoperator. Since ground states are extremal eigenstates of a Hamiltonian, the steady-state phase viewpoint provides a different analogy. While in the gapped ground-state scenario, the ``quasilocal finite-depth quantum circuit'' and ``gapped ground-state phase'' definitions are two sides of the same coin, the relationship between the ``quasilocal finite-depth quantum channel'' and ``steady-state phase'' is unclear. 
Our focus on the ``steady-state phase'' perspective therefore complements previous works.

\begin{figure}
    \centering
    \includegraphics[width=\linewidth]{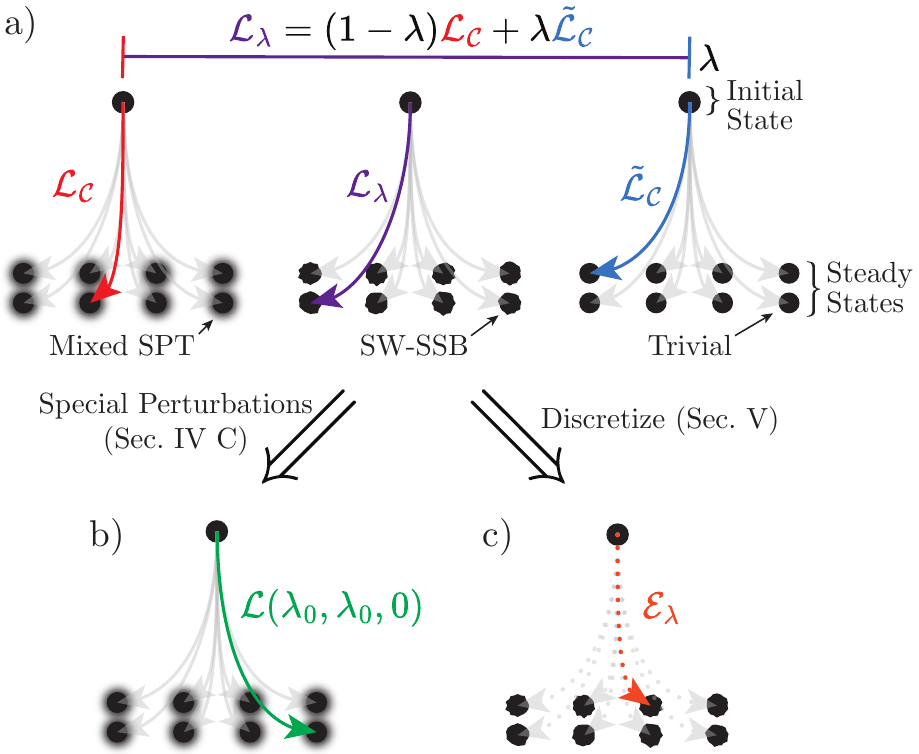}
    \caption{Summary of the main results of this work. a) In \cref{sec_perturbed_Lindbladian}, we study the steady-state phase diagram of a Lindbladian $\mathcal{L}_\lambda$ which interpolates between $\mathcal{L}_\Cl$, whose steady states exhibit mixed SPT order and $\tilde{\mathcal{L}}_\Cl$, whose steady states are trivially ordered. Rather than a finite phase of either order, we find that the entire intermediate region exhibits SW-SSB. b) We show in \cref{sec:weak_defects} that there exists a special class of perturbations to which the steady-state mixed SPT order is stable. c) In \cref{sec_clifford_simulation}, we study a parameterized quantum channel $\mathcal{E}_\lambda$ which approximates the dynamics of $\mathcal{L}_\lambda$ while being amenable to efficient classical simulation.}
    \label{fig:summary}
\end{figure}

Motivated by these considerations, we investigate the stability of mixed-state SPT order of steady-state phases of open quantum systems, using the \textit{decohered cluster state}---a simple mixed-state SPT order protected by a combined strong and weak symmetry~\cite{ma2023average}---as a test bed.
Our main results are summarized in~\cref{fig:summary}.
We first employ the matrix product density operator representation to analyze its mixed-state SPT order  characteristics, such as bulk-edge correspondence, nontrivial string order, and edge states.
We then construct a parent Lindbladian that hosts the decohered cluster state as a steady state with the required symmetry. 

Similar to the pure cluster state~\cite{briegel2001persistent,son2012topological}, the decohered cluster state can be disentangled by a depth-two circuit of controlled-Z gates, yielding the corresponding trivial mixed-state SPT order. This duality allows us to map the open system dynamics to an exactly solvable reaction-diffusion model~\cite{lushnikov1986binary}, enabling a thorough analysis of the Lindbladian, including the degeneracy of the steady-state manifold, previously unknown steady states, and the mixing time of the system.

We also study an interpolation between the parent Lindbladian of the decohered cluster state and that of the trivial mixed-state SPT order, analogous to the pure-state case~\cite{PhysRevLett.93.056402,PhysRevLett.103.020506}. Surprisingly, we find that the steady states at any point in this interpolation are neither nontrivial mixed-state SPT nor trivial mixed-state SPT orders. Instead, they exhibit SW-SSB, which destroys the mixed-state SPT order.
The mapping to the reaction-diffusion model suggests that the proliferation of point-like defects affecting the strong symmetry destroys the mixed-state SPT order, leading to SW-SSB, which is typical for symmetric perturbations. 
However, for perturbations that only affect weak symmetry, the mixed-state SPT order and its associate steady-state degeneracy remain stable.
Finally, we propose a quantum channel-based dynamics that replicates the essential physics of our findings. 
We are able to classically simulate this channel efficiently since it consists of Clifford gates, Pauli measurements, and feedback~\cite{gottesman1998heisenbergrepresentationquantumcomputers}. 

The structure of the paper is as follows. In~\cref{sec_decohered}, we introduce the decohered cluster state and describe its classification and characteristics, aided by tensor network representations. In~\cref{sec_parent_Lindbladian}, we construct a parent Lindbladian which hosts the state we introduced in~\cref{sec_decohered} as a steady state, and solve for the full spectrum of the parent Lindbladian, including all of the steady states, by mapping to an exactly solvable Lindbladian. In~\cref{sec_perturbed_Lindbladian}, we consider different types of perturbations to the parent Lindbladian and analyze the resulting steady states numerically via density matrix renormalization group (DMRG) and analytics, where possible. We find that the mixed-state SPT order of the steady state is unstable to SW-SSB for a wide class of symmetric perturbations to the Lindbladian. In~\cref{sec_clifford_simulation}, we consider a quantum channel which replicates the essential physics of the parent and perturbed Lindbladians and admits efficient Clifford simulation. We summarize the paper and give an outlook in~\cref{sec_discussion}.

\section{Decohered cluster state}\label{sec_decohered}
We begin by introducing the ``decohered cluster state''~\cite{ma2023average}, a certain type of cluster state in the presence of decoherence. We review its symmetries and classification within the framework of mixed-state SPT phases as discussed in  Refs.~\cite{coser2019classification,ma2023average,ma2023topological}. Using tensor network diagrammatics or matrix product density operators, we show that the decohered cluster state exhibits non-trivial string order and bulk-edge correspondence. 

\subsection{Cluster state and decoherence}

The well-known pure cluster state on a closed chain of $2N$ qubits~\cite{briegel2001persistent} can be expressed as
\begin{equation}\label{eq:pure_cluster}
    \ket{\Psi_\Cl}=\frac{1}{2^{N}}\sum_{\{z_{i}=\pm 1\} }\ket{\Psi_{\{z_{i}\}}},
\end{equation}
where 
\begin{equation}\label{eqn:domain wall configuration}
    \ket{\Psi_{\{z_{i}\}}}=\ket{Z\!=\!z_1}\ket{X\!=\!z_1z_2}\ket{Z\!=\!z_2}\cdots\ket{X\!=\!z_Nz_1}
\end{equation}
are \textit{decorated domain wall states}~\cite{chen2014symmetry}. Here, we define $\ket{Z = \pm 1}$ ($\ket{X = \pm}$) to be the eigenstates of $Z$ ($X$) with eigenvalues $\pm 1$. Note that we denote the eigenstates of $X$ with eigenvalues $\pm 1$ by $\ket{\pm}$ in this paper. The decorated domain wall states $\ket{\Psi_{\{z_{i}\}}}$ are constructed by first specifying a configuration $\{z_{i}\}$ which determines the $Z$ charge on the odd sites, and then ``decorating'' each domain wall with a state charged under $X$. More specifically, when $z_{i}\neq z_{i+1}$, the interceding even site $2i$ is fixed to $\ket{-}$, and to $\ket{+}$ otherwise. 

The decohered cluster state is defined by taking the equal-weight incoherent mixture of the decorated domain wall states shown in~\cref{eq:decohered_cluster_state}, rather than their coherent superposition. Defined on a closed chain of $2N$ qubits, the decohered cluster state reads \cite{ma2023average}
\begin{equation}\label{eq:decohered_cluster_state}
    \rhopp=\frac{1}{2^{N}}\sum_{\{z_{i}=\pm 1\} }\ketbra{\Psi_{\{z_{i}\}}}{\Psi_{\{z_{i}\}}}.
\end{equation}
As we will see later, it is exhibits nontrivial mixed-state SPT order. 

\subsection{Symmetry}\label{sec:symmetry}
We hereby discuss the notions of weak and strong symmetry applicable to mixed states~\cite{buca2012note,albert2014symmetries,albert2018lindbladians,lieu2020symmetry}.
In particular, a weakly $G$-symmetric matrix $M$ satisfies $U_g M U_g^\dagger= w(g) M$ for all $g\in G$, where $G$ is an Abelian symmetry group and $w(g)\in U(1)$.
On the other hand, a strongly $G$-symmetric matrix $M$ satisfies $U_g M=s_\text{ket}(g) M$ and $M U_g^\dagger=s_{\text{bra}}(g)M$ for all $g\in G$, where $s_{\text{ket}}(g)$, $s_{\text {bra}}(g) \in U(1)$. When $M$ is Hermitian, $s_\text{ket}(g)=s_\text{bra}(g)^*$. 

Consider the following operators in the Hilbert space of $2N$ qubits on a closed ring,
\begin{equation}\label{eq:Z2S and Z2W}
    S=\prod_{i \in \text{even}}X_i,\quad W=\prod_{i \in\text{odd}}X_i~.
\end{equation}
The decohered cluster state $\rho_{\mathcal{C}}$ is a mixed-state SPT order under the direct product of a strong $\ztwo$ symmetry and a weak $\ztwo$ symmetry, as classified in Refs.~\cite{ma2023average,ma2023topological}, where the strong $\ztwo$ symmetry (denoted as $\ztwo^{S}$ from now on) is generated by $S$, while the weak $\ztwo$ symmetry (denoted as $\ztwo^{W}$ from now on) is generated by $W$. 
We will use $s_{\text{ket}}$, $s_{\text{bra}}$, and $w$ (with argument $g$ suppressed) when we are referring to the charges under the strong symmetry $S$ and weak symmetry $W$.

For the decohered cluster state $\rhopp$ defined in Eq.~\eqref{eq:decohered_cluster_state}, we find that $W\rho_\mathcal{C} W^\dagger = \rho_\mathcal{C}$ and $S\rho_{\mathcal{C}} = \rho_{\mathcal{C}} S^\dagger = \rho_\mathcal{C}$, so that $\rho_\mathcal{C}$ lies in the $(s_{\text{ket}},s_{\text{bra}},w)=(+1, +1, +1)$ symmetry sector.
There is an intuitive way to understand  why $W$ acts as a weak symmetry while $S$ acts as a strong symmetry.
Note that $S$ counts the parity of the number of domain walls in each decorated domain wall state. 
With periodic boundary conditions, the system's topology enforces an even number of domain walls in every state, $S \ket{\Psi_{\{z_i\}}} = \ket{\Psi_{\{z_i\}}}$. 
This will be the case for both the bra states and the ket states in $\rhopp$.  We therefore see that $\rhopp$ must be strongly symmetric with $s_\text{ket} = s_\text{bra} = +1$. 
On the other hand, $W$ acts on the decorated domain wall states as $W \ket{\Psi_{\{z_i\}}}= \ket{\Psi_{\{\overline{z_i}\}}}$, where $\overline{z_i}\vcentcolon =-z_i$. This transformation is only a symmetry if it acts simultaneously from both the bra and the ket sides, indicating that $W$ can only be a weak symmetry.

\subsection{Classification as a mixed-state phase}
One physically motivated approach to define mixed-state phases is to define an equivalence class relation among the states, where the equivalence is defined by a short-time physical processes.
This mimics the definition of ground states being in the same phase if they can be adiabatically perturbed to each other rapidly.
This definition was put forward by Ref.~\cite{coser2019classification}, and we review their definition more rigorously in~\cref{app:string_lieb_robinson}.
Informally, it states that for two mixed states $\rho$ and $\sigma$ to be in the same phase, there has to exist two time-independent, symmetric
geometrically local Lindbladians $\mathcal{L}_1$ and $\mathcal{L}_2$ such that $\|e^{\mathcal{L}_1 t}(\rho) - \sigma \|_1$ and $\|e^{\mathcal{L}_2 t}(\sigma) - \rho \|_1$ are small, where $t$ is some short time (``fast-driven'') and where $\| X \|_1 \vcentcolon = \Tr [\sqrt{X^\dagger  X}]$ is the trace distance.
By geometrically local, we mean strictly geometrically local here, although the definition can be expanded to include quasi-local Lindbladians in certain situations~\cite{coser2019classification}.
Note that the smallness of the trace distance implies the closeness of all the observables $\hat{O}$ between $\rho$ and $\sigma$, i.e., $|\Tr [\rho O]-  \Tr [\sigma O]| \leq \|\rho - \sigma\|_1\cdot \|O\|_{\text{op}}$ by H\"older's inequality, where $\| \cdot \|_{\text{op}}$ is the operator norm. Furthermore, this ``fast-driven'' condition implies stability of local observables to local perturbations in the Lindbladian terms~\cite{cubitt2015stability, coser2019classification, onorati2023provably}.

One can also define the equivalence-class relation through quantum channels, analogous to the Lindbladian equivalence-class relation.
That is, if there exist two symmetric finite-depth local
quantum channels~\cite{ma2023average,ma2023topological} $\mathcal{E}_1$ and ${\mathcal{E}}_2$ 
such that $\|\mathcal{E}_1(\rho) - \sigma \|_1$ and $\|\mathcal{E}_2(\sigma) - \rho\|_1$ are small, then we say $\rho$ and $\sigma$ are in the same phase.
Note that in both the Lindbladian and quantum channel equivalence-class relations, the bi-directional ``fast-driven'' condition is important, as the Lindbladian evolution and the quantum channel are usually not reversible.

Building on the quantum-channel equivalence-class relation, Refs.~\cite{ma2023average,ma2023topological} show that if the strong-string order parameter (an order parameter used to diagnose SPT orders, see~\cref{sec:string_order_parameter}) in the state $\sigma$ is zero, then there does not exist a symmetric finite-depth channel $\mathcal{E}$ such that the string order parameter of $\mathcal{E}(\sigma)$ is of order one.
That is to say, if $\rho$ has a string order parameter of order one, then $\|\mathcal{E}(\sigma) - \rho\|_1$ can never be small.
We know from Refs.~\cite{ma2023average,ma2023topological} that $\rhopp$ has a nonzero string order parameter (see also~\cref{sec:string_order_parameter} for more explanation on this). 
Thus, $\rho_{\mathcal{C}}$ is in a nontrivial mixed-state SPT order protected by $\ztwo^{S} \times \ztwo^{W}$, as stated in Refs.~\cite{ma2023average,ma2023topological}.

Here, we complement the above result by showing that $\rhopp$ has nontrivial SPT order protected by $\ztwo^{S} \times \ztwo^{W}$ symmetry under the Lindbladian equivalence-class relation as well. 
Using Lieb-Robinson bounds \cite{lieb1972finite, poulin2010lieb}, we show that $\|e^{\mathcal{L}t}(\sigma) - \rho \|_1$ can never be small for a symmetric Lindbladian $\mathcal{L}$ within some short time $t$. 
The detailed statement and its proof are presented in~\cref{app:string_lieb_robinson}.

The above results can be interpreted as follows---the nontrivial SPT order of $\ztwo^{S} \times \ztwo^{W}$ is hard to build, which forbids a ``fast-driven'' symmetric Lindbladian or a symmetric finite-depth quantum channel from bringing a trivial state $\sigma$ close to $\rho_{\mathcal{C}}$, which is a nontrivial SPT state.
On the other hand, it is interesting to know if $\rho_{\mathcal{C}}$ can be brought close to a trivial state in a short time or in a short depth.
In~\cref{app:fast destruction string order}, we show that the string order parameter corresponding to the $\ztwo^{S} \times \ztwo^{W}$ symmetry defined in Eq.~(\ref{eq:Z2S and Z2W}) is easy to destroy.
In particular, we show that a simple trace channel and the corresponding trace Lindbladian can destroy the string order in depth one and in a short time, respectively. Interestingly, the destruction of the string order is accompanied by the ``strong-to-weak'' spontaneous symmetry breaking, which has received a lot of attention recently~\cite{ma2024symmetry,lessa_strong--weak_2024,sala_spontaneous_2024}.

\subsection{Tensor network representation}\label{sec:tensors}
It is instructive to express $\rhopp$ as a matrix product density operator, as it helps us expose the edge modes on an open chain, extract the projective representation, and calculate the string order parameters. We first define two families of matrices in the virtual space spanned by $\ket{v_0}=(1,0)^T$ and $\ket{v_1}=(0,1)^T$. These matrices are two-index tensors given by $A^{z,z'}=\Pi_z\delta_{z,z'}$ and $B^{x,x'} = X^{\frac{1-x}{2}}\delta_{x,x'}$, where $\Pi_z\vcentcolon = \delta_{z=+1}|v_0\ra\la v_0| + \delta_{z=-1}|v_1\ra\la v_1|$
. We can then define two four-index tensors:
\begin{align}\label{eq:odd_tensors}
\begin{split}
  \begin{tikzpicture}
    \node[label={[label distance=-0.5em]90:{$\alpha$}}] at (-0.5,0) {};
    \node[label={[label distance=-0.5em]90:{$\beta$}}] at (0.5,0) {};
    \node[label={[label distance=-0.5em]0:{$z$}}] at (0,0.5) {};
    \node[label={[label distance=-0.5em]0:{$z'$}}] at (0,-0.5) {};
    \draw (-0.5,0)--(0.5,0);
    \draw (0,-0.5) -- (0,0.5);
    \node[odd] (t) at (0,0) {};
  \end{tikzpicture}:&=
  \left[A^{z,z'}\right]_{\alpha,\beta}=
  \left[\Pi_z\delta_{z,z'}\right]_{\alpha,\beta},
\,
\end{split}
\end{align}

\begin{align}\label{eq:even_tensors}
\begin{split}
  \begin{tikzpicture}
    \node[label={[label distance=-0.5em]90:{$\alpha$}}] at (-0.5,0) {};
    \node[label={[label distance=-0.5em]90:{$\beta$}}] at (0.5,0) {};
    \node[label={[label distance=-0.5em]0:{$x$}}] at (0,0.5) {};
    \node[label={[label distance=-0.5em]0:{$x'$}}] at (0,-0.5) {};
    \draw (-0.5,0)--(0.5,0);
    \draw (0,-0.5) -- (0,0.5);
    \node[even] (t) at (0,0) {};
  \end{tikzpicture}:&= 
  \left[B^{x,x'}\right]_{\alpha,\beta}=
  \left[X^{\frac{1-x}{2}}\delta_{x,x'}\right]_{\alpha,\beta}
\,~,
\end{split}
\end{align}
where $z,z',x,x'\in\{+1,-1\}$ are the physical indices, $\alpha, \beta \in \{0,1\}$ are the virtual indices, and as above $Z\ket{Z=z} = z\ket{Z=z}$ and $X\ket{X=x} = x\ket{X=x}$. Abbreviating the configuration $\Lambda = (z_1,x_1, \cdots, z_N, x_N)$, we can then write $\rhopp = \sum_{\Lambda,\Lambda'}\rhopp(\Lambda,\Lambda')\ketbra{\Lambda}{\Lambda'}$,
where
\begin{align}
\begin{split}
    \rhopp(\Lambda,\Lambda')&=\Tr[A^{z_1,z_1'}B^{x_1,x'_1}\cdots A^{z_{N},z_{N}'}B^{x_N,x_N'}]~.
\end{split}
\end{align}
The sums over $\Lambda$ and $\Lambda'$ in the definition of $\rhopp$ each go over all $2^{2N}$ possibilities, while the restriction to decorated domain wall states is taken care of by the coefficients $\rhopp(\Lambda,\Lambda')$. 
Pictorially, we write
\begin{align}
\begin{split}
    \rhopp&=\begin{tikzpicture}
    \draw (0,0) rectangle (5.5,-0.75);
    \foreach \x in {0.5,2,4.25}{
		     \draw (\x,-0.55) -- (\x,0.55);
		    \node[odd] at (\x,0) {};
		  }
		  \foreach \x in {1.25,2.75,5}{
		     \draw (\x,-0.55) -- (\x,0.55);
		    \node[even] at (\x,0) {};
		  }
		  \node[fill=white] at (3.5,0) {$\dots$};
  \end{tikzpicture} \ ,
\end{split}
\end{align}
where we have used periodic boundary conditions. When tensors appear without indices explicitly denoted, we assume they are summed over with the corresponding bras and kets.

It is worth noting that the tensors defined in~\cref{eq:odd_tensors,eq:even_tensors} have the following ``pulling-through'' relations:
\begin{subequations}\label{eq:pulling_through_main}
\begin{align}
  \begin{tikzpicture}[baseline={(0,-.1)}]
    \draw (-.55,0)--(.55,0);
    \draw (0,-.55) -- (0,.55);
    \node[odd] (t) at (0,0) {};
    \node[X] at (0,0.325) {};
    \node[X] at (0,-0.325) {};
  \end{tikzpicture} 
\ = \begin{tikzpicture}[baseline={(0,-.1)}]
    \draw (-.55,0)--(.55,0);
    \draw (0,-0.55) -- (0,0.55);
    \node[odd] (t) at (0,0) {};
    \node[Xv] at (-0.325,0) {};
     \node[Xv] at (0.325,0) {};
  \end{tikzpicture}~, \
\end{align}
\vspace{-1 em}
\begin{equation}
  \begin{tikzpicture}[baseline={(0,-.1)}]
    \draw (-.55,0)--(.55,0);
    \draw (0,-0.55) -- (0,.55);
    \node[odd] (t) at (0,0) {};
    \node[Z] at (0,0.325) {};
  \end{tikzpicture} 
\ = \begin{tikzpicture}[baseline={(0,-.1)}]
    \draw (-.55,0)--(.55,0);
    \draw (0,-.55) -- (0,0.55);
    \node[odd] (t) at (0,0) {};
    \node[Z] at (0,-0.325) {};
  \end{tikzpicture}  \ = \begin{tikzpicture}[baseline={(0,-.1)}]
    \draw (-.55,0)--(.55,0);
    \draw (0,-0.55) -- (0,0.55);
    \node[odd] (t) at (0,0) {};
    \node[Zv] at (-0.325,0) {};
  \end{tikzpicture}  \ = \begin{tikzpicture}[baseline={(0,-.1)}]
    \draw (-.55,0)--(.55,0);
    \draw (0,-0.55) -- (0,0.55);
    \node[odd] (t) at (0,0) {};
    \node[Zv] at (0.325,0) {};
  \end{tikzpicture}~,
\end{equation}

\vspace{-1 em}
\begin{equation}
  \begin{tikzpicture}[baseline={(0,-.1)}]
    \draw (-.55,0)--(.55,0);
    \draw (0,-0.55) -- (0,.55);
    \node[even] (t) at (0,0) {};
    \node[X] at (0,0.325) {};
  \end{tikzpicture} 
\ = \begin{tikzpicture}[baseline={(0,-.1)}]
    \draw (-.55,0)--(.55,0);
    \draw (0,-.55) -- (0,0.55);
    \node[even] (t) at (0,0) {};
    \node[X] at (0,-0.325) {};
  \end{tikzpicture} \ =
\begin{tikzpicture}[baseline={(0,-.1)}]
    \draw (-.55,0)--(.55,0);
    \draw (0,-0.55) -- (0,0.55);
    \node[even] (t) at (0,0) {};
    \node[Zv] at (-0.325,0) {};
     \node[Zv] at (0.325,0) {};
  \end{tikzpicture}~, \
\end{equation}

\vspace{-1 em}
\begin{align}
  \begin{tikzpicture}[baseline={(0,-.1)}]
    \draw (-.55,0)--(.55,0);
    \draw (0,-.55) -- (0,.55);
    \node[even] (t) at (0,0) {};
    \node[Z] at (0,0.325) {};
    \node[Z] at (0,-0.325) {};
  \end{tikzpicture} 
\ = \begin{tikzpicture}[baseline={(0,-.1)}]
    \draw (-.55,0)--(.55,0);
    \draw (0,-0.55) -- (0,0.55);
    \node[even] (t) at (0,0) {};
    \node[Xv] at (-0.325,0) {};
  \end{tikzpicture} \ =
\begin{tikzpicture}[baseline={(0,-.1)}]
    \draw (-.55,0)--(.55,0);
    \draw (0,-0.55) -- (0,0.55);
    \node[even] (t) at (0,0) {};
    \node[Xv] at (0.325,0) {};
  \end{tikzpicture} \
    ~.
\end{align}
\end{subequations}
Using these relations, it will be easy to show that the symmetry $\ztwo^S \times \ztwo^W$ forms a projective representation on the virtual space. 

Let us briefly review the relationship between SPT order in 1d and projective representations. Pure-state SPT phases in 1d protected by a global symmetry $G$ are characterized and classified by edge modes: while the ground state is unique when the Hamiltonian is placed on a ring, it becomes degenerate when the Hamiltonian is placed on an open chain. Intuitively, this occurs because a non-trivial symmetry action arises at the edge which would cancel out if the edges were joined together. Algebraically, each symmetry operator $O_g$ labeled by $g\in G$ acts on the ground state space as $O_{g}^LO_{g}^R$, where $O^L_g$ has support only near the left edge and $O^R_g$ has support only near the right edge. While the global symmetries $\{O_g\}$ form a linear representation (i.e. $O_gO_h=O_{gh}$), the edge modes form only a projective representation (i.e. $O^{L/R}_gO^{L/R}_h=\omega(g,h)O^{L/R}_{gh}$, where $\omega\in U(1)$ is a phase factor.) The inequivalent assignments of phase factors $\omega(g,h)$ form a group under multiplication, which is called the second cohomology group $H^2(G,U(1))$. This is precisely the data which classifies 1d pure-state SPTs protected by $G$ symmetry in bosonic systems.

In Ref.~\cite{ma2023average}, it was shown that mixed-state SPT phases protected by a combination of a strong $K$ symmetry and weak $G$ symmetry are also classified by a cohomology group, namely $\bigoplus_{p=1}^2 H^{2-p}(G,H^p(K,U(1))$.
In the present case of the decohered cluster state, we have $G=K=\ztwo$, so that the above classification evaluates to $H^{1}(\ztwo,H^{1}(\ztwo,U(1)))=\ztwo$. This means that there is a single nontrivial mixed-state SPT phase with symmetry $\ztwo^\text{S}\times\ztwo^\text{W}$, with respect to the definition given in Ref.~\cite{ma2023average}. 
And indeed a nontrivial SPT order can be constructed from the domain-wall construction between the strong $\ztwo$ and the weak $\ztwo$ symmetries such as $\rhopp$.

To see that the decohered cluster state indeed realizes the $\ztwo^S \times \ztwo^W$ symmetry projectively at its boundary, we first construct the following family of density matrices on open boundary conditions:
\begin{equation}\label{eq:open_MPDO}
    \rho_{\alpha\beta} \vcentcolon  = \begin{tikzpicture}
    \draw (0,0) -- (5.5,0.0);
    \foreach \x in {0.5,2,4.25}{
		     \draw (\x,-0.55) -- (\x,0.55);
		    \node[odd] at (\x,0) {};
		  }
		  \foreach \x in {1.25,2.75,5}{
		     \draw (\x,-0.55) -- (\x,0.55);
		    \node[even] at (\x,0) {};
		  }
		  \node[fill=white] at (3.5,0) {$\dots$};
    \node[edge,label=above:$\alpha$] at (0,0) {};
    \node[edge,label=above:$\beta$] at (5.5,0) {};
  \end{tikzpicture} \ ,
\end{equation}
where $\begin{tikzpicture}[baseline={(0,-.1)}]
		  \draw[] (0,0) -- (0.35,0);
		  \node[edge,label=above:$\alpha$] at (0,0) {};
		\end{tikzpicture}=\bra{v_\alpha}$, $\begin{tikzpicture}[baseline={(0,-.1)}]
    \draw[] (0.65,0) -- (1,0);
	\node[edge,label=above:$\beta$] at (1,0) {};
	\end{tikzpicture}=\ket{v_\beta}$ and $\alpha,\beta \in \{ 0,1\}$.
The states in~\cref{eq:open_MPDO} span a 4-dimensional linear space with complex coefficients, where the physical density matrices correspond to the  convex space of Hermitian matrices with unit trace. This space can encode two classical bits of information.

We then act with the symmetry operators on $\rho_{\alpha\beta}$. Akin to the pure state case, we see via the pulling-through relations~\cref{eq:pulling_through_main} that the global symmetry factors to a product of operators acting on each virtual edge state:
\begin{align*}
\begin{split} 
W\rho_{\alpha\beta} W^\dagger &=
\begin{tikzpicture}[baseline={(0,-.1)}]
    \draw (-0.1,0) -- (5.6,0);
    \node[edge,label=above:$\alpha$] at (-0.1,0) {};
    \node[edge,label=above:$\beta$] at (5.6,0) {};
    \foreach \x in {0.5,2,3.5,5}{
        \draw (\x,-.55) -- (\x,.55);
		    \node[X] (t\x) at (\x,0.325) {};
		    \node[odd] at (\x,0) {};
            \node[X] (t\x) at (\x,-0.325) {};
		  }
	\foreach \x in {1.25,4.25}{
		     \draw (\x,-0.5) -- (\x,0.5);
		    \node[even] at (\x,0) {};
		  }
		  \node[fill=white] at (2.75,0) {$\dots$};
  \end{tikzpicture} \\ &=
  \begin{tikzpicture}[baseline={(0,-.1)}]
    \draw (-0.1,0) -- (5.6,0);
    \node[edge,label=above:$\alpha$] at (-0.1,0) {};
    \node[edge,label=above:$\beta$] at (5.6,0) {};
    \foreach \x in {0.5,2,3.5,5}{
        \draw (\x,-0.5) -- (\x,0.5);
		    \node[odd] at (\x,0) {};
		  }
	\foreach \x in {1.25,4.25}{
		     \draw (\x,-0.5) -- (\x,0.5);
		    \node[even] at (\x,0) {};
		  }
		  \node[fill=white] at (2.75,0) {$\dots$};
        \node[Xv] at (0.175,0) {};
        \node[Xv] at (5.5-.175,0) {};
  \end{tikzpicture} \ , 
  \end{split}
  \end{align*}
\begin{align*}
\begin{split}
S\rho_{\alpha\beta} &=
\begin{tikzpicture}[baseline={(0,-.1)}]
    \draw (-0.1,0) -- (5.6,0);
    \node[edge,label=above:$\alpha$] at (-0.1,0) {};
    \node[edge,label=above:$\beta$] at (5.6,0) {};
    \foreach \x in {0.5,2,3.5,5}{
        \draw (\x,-0.55) -- (\x,0.55);
		    \node[odd] at (\x,0) {};
		  }
	\foreach \x in {1.25,4.25}{
		     \draw (\x,-0.5) -- (\x,.55);
		    \node[even] at (\x,0) {};
      \node[X] (t\x) at (\x,0.325) {};
		  }
		  \node[fill=white] at (2.75,0) {$\dots$};
  \end{tikzpicture} \\ &=
  \begin{tikzpicture}[baseline={(0,-.1)}]
    \draw (-0.1,0) -- (5.6,0);
    \node[edge,label=above:$\alpha$] at (-0.1,0) {};
    \node[edge,label=above:$\beta$] at (5.6,0) {};
    \foreach \x in {0.5,2,3.5,5}{
        \draw (\x,-0.55) -- (\x,0.55);
		    \node[odd] at (\x,0) {};
		  }
	\foreach \x in {1.25,4.25}{
		     \draw (\x,-0.55) -- (\x,0.55);
		    \node[even] at (\x,0) {};
		  }
		  \node[fill=white] at (2.75,0) {$\dots$};
    \node[Zv] at (.175,0) {};
    \node[Zv] at (5.5-.175,0) {};
  \end{tikzpicture} \ , 
  \end{split}
  \end{align*}
\begin{align*}
\begin{split}
 \rho_{\alpha\beta} S^\dagger 
 &=
\begin{tikzpicture}[baseline={(0,-.1)}]
    \draw (-0.1,0) -- (5.6,0);
    \node[edge,label=above:$\alpha$] at (-0.1,0) {};
    \node[edge,label=above:$\beta$] at (5.6,0) {};
    \foreach \x in {0.5,2,3.5,5}{
        \draw (\x,-0.55) -- (\x,0.55);
		    \node[odd] at (\x,0) {};
		  }
	\foreach \x in {1.25,4.25}{
		     \draw (\x,-.55) -- (\x,0.55);
		    \node[even] at (\x,0) {};
      \node[X] (t\x) at (\x,-0.325) {};
		  }
		  \node[fill=white] at (2.75,0) {$\dots$};
  \end{tikzpicture} \\ 
  &=
  \begin{tikzpicture}[baseline={(0,-.1)}]
    \draw (-0.1,0) -- (5.6,0);
    \node[edge,label=above:$\alpha$] at (-0.1,0) {};
    \node[edge,label=above:$\beta$] at (5.6,0) {};
    \foreach \x in {0.5,2,3.5,5}{
        \draw (\x,-0.55) -- (\x,0.55);
		    \node[odd] at (\x,0) {};
		  }
	\foreach \x in {1.25,4.25}{
		     \draw (\x,-0.55) -- (\x,0.55);
		    \node[even] at (\x,0) {};
		  }
		  \node[fill=white] at (2.75,0) {$\dots$};
    \node[Zv] at (0.175,0) {};
    \node[Zv] at (5.5-.175,0) {};
  \end{tikzpicture} \ .
\end{split}
\end{align*}
We have found that the symmetries reduce to an effective action on the edge Hilbert space, namely
\begin{align}
\begin{split}
    W\rho_{\alpha\beta} W^\dagger &=\rho_{\alpha'\beta'},\\
    \ket{v_{\alpha'}}\ket{v_{\beta'}}&=X\otimes X\ket{v_{\alpha}}\ket{v_{\beta}},\\
    S\rho_{\alpha\beta} &= \rho_{\alpha\beta}S^\dagger =\rho_{\alpha''\beta''},\\
    \ket{v_{\alpha''}}\ket{v_{\beta''}}&=Z\otimes Z\ket{v_\alpha}\ket{v_\beta}.\\
\end{split}
\end{align}
As in the pure state case, these operators commute globally, but anti-commute if either edge is taken in isolation:
\begin{equation}
    [X\otimes X,Z\otimes Z]=0~,\quad \{X,Z\}=0~.
\end{equation}
The matrices $X$ and $Z$ obtained by pulling the symmetry to each edge generate the algebra $\{\mathbbm{1},X,Z,XZ\}$, which forms a projective representation of the group $\ztwo\times\ztwo$. This is precisely what we mean by the statement that the $\ztwo^S \times \ztwo^W$ symmetry is realized projectively at the edge. 

\subsection{String order parameters}\label{sec:string_order_parameter}
While conventional SPT orders do not spontaneously break symmetry and therefore do not admit local order parameters, they can be characterized by non-local \textit{string order parameters}~\cite{duivenvoorden2013from,nijs1989preroughening,kennedy1992hidden,pollmann2012detection,else2013hidden,bahri2014detecting}. 
In particular, it is necessary to measure a set of string order parameters labeled by the elements and irreducible representations of the group to obtain a list of zero and nonzero expectation values, which is known as the ``pattern of zeros'' ~\cite{pollmann2012detection,degroot2022symmetry}.
We here briefly review the definition of a string order parameter in the pure state case. Suppose we have an Abelian on-site symmetry $G$ realized by symmetry operators, each of which is a tensor product of single-site operators:
\begin{equation}
    U_g=\bigotimes_{\text{sites }i}U_g^{(i)}.
\end{equation}
In general, a string order parameter that detects a nontrivial projective representation of the $G$ symmetry consists of a finite string of $U_g^{(i)}$ operators in a region $\mathcal{R}$ with operators $\mathcal{O}^{(L/R)}_\alpha$ appended to the left and right ends of the string. Importantly, the operators $\mathcal{O}^{(L/R)}_\alpha$ transform under conjugate irreducible representations $\alpha$ and $\alpha^*$ of the other symmetry, which are simply complex numbers because the symmetry group is Abelian. We are then able to define
\begin{align}\label{eq:string_general_def}
\begin{gathered}
    \mathcal{S}_g=\mathcal{O}_\alpha^{(L)}\otimes\left(\prod_{i\in \mathcal{R}}U_g^{(i)}\right)\otimes\mathcal{O}_\alpha^{(R)},\\
    U_h^\dagger\mathcal{O}_\alpha^{(L)}U_h=\alpha(h)\mathcal{O}_\alpha^{(L)},\quad U_h^\dagger\mathcal{O}_\alpha^{(R)}U_h=\alpha^*(h)\mathcal{O}_\alpha^{(R)}.
\end{gathered}
\end{align}
The string operators $\mathcal{S}_g$ let us define the string order parameter for a pure state $\ket{\psi}$, which is simply $\langle \psi | \mathcal{S}_g | \psi \rangle$.

 One natural generalization of the string order parameter from pure states to mixed states is to measure the expectation value of the strong-string operator $\mathcal{S}_{nm}^S$ (defined below) in the state $\rho$:
\begin{equation} \label{eq:def_C_I_S}
    C_{\text{I}, nm}^S \vcentcolon  = \braket{\mathcal{S}_{nm}^S}_\rho=\Tr\left[\rho \mathcal{S}_{nm}^S \right]~,
\end{equation}
where 
\begin{equation}
    \label{eq:string_op}
    \mathcal{S}_{nm}^{S} \vcentcolon  = Z_{n} X_{n+1} X_{n+3} \cdots X_{m-3} X_{m-1} Z_{m}~,
\end{equation}
for odd $n$ and $m$. The Roman numeral I appearing in the subscript indicates that the correlator is linear in the density matrix (i.e. \renyi-1), and distinguishes it from other correlators which are quadratic in the density matrix, as we will define shortly. Notice that the string correlator $\mathcal{S}_{nm}^{S}$ consists of a truncated symmetry operator spanning from sites $n+1$ to $m-1$ and that the endpoint operators anticommute with the $X$ symmetry on odd sites, consistent with the construction in \cref{eq:string_general_def}. This definition has been used in prior works, and it has been shown that it is an order parameter for the $\ztwo^S \times \ztwo^W$ mixed-state SPT order~\cite{ma2023topological}. If we try to define an analogous order parameter for the weak-string operator $\mathcal{S}^W_{n, m} \vcentcolon = \mathcal{S}_{n+1, m+1}^S$, then we find that it is trivially one because the weak-string operator acts from both the ket and the bra sides and $\left( \mathcal{S}^W \right)^2 = \IId $, more specifically $\Tr\left[\mathcal{S}_{nm}^W  \rho \mathcal{S}_{nm}^W \right] = 1$. 

One way of generating a useful string correlator corresponding to the weak symmetry is to map density matrices to vectors in a doubled Hilbert space with twice the number of qubits and consider the string correlators in this doubled space~\cite{ma2024symmetry}.
The mapping of density matrices to vectors in the doubled Hilbert space is as follows:
\begin{align}
\label{eq:op_state_correspondence}
    \rho=\sum_{i,j}\rho_{ij}\ketbra{i}{j}\mapsto\sum_{i,j}\rho_{ij}\ket{i}_L\!\ket{j}_R =\vcentcolon \vert \rho \rangle~,
\end{align}
where the subscripts $L$ and $R$ denote the left and the right Hilbert spaces and $|i\ra$ and $|j\ra$ denote some computational basis.

In the doubled Hilbert space, the symmetry operators become $S \otimes \IId $, $\IId  \otimes S$ for the strong ket and the strong bra symmetries, and $W \otimes W$ for the weak symmetry.
Here the first factor of the tensor product acts on the left (ket) Hilbert space and the second factor acts on the right (bra) Hilbert space. 
We denote the expectation values of an operator $\mathcal{M}$ in the doubled Hilbert space by
\begin{equation}\label{eqn:double braket expectation}
    \langlee \mathcal{M} \ranglee_\rho \vcentcolon= \frac{\langle \rho | \mathcal{M} | \rho \rangle}{\langle \rho | \rho \rangle} = \frac{\Tr\left[ \rho^\dagger \mathcal{M}\left[ \rho \right]\right]}{\Tr\left[\rho^\dagger \rho\right]}~,
\end{equation}
where $\mathcal{M}\left[ \cdot \right]$ is a superoperator that corresponds to the operator $\mathcal{M}$ in the doubled Hilbert space. 
For example, $\la \rho| A \otimes B |\rho\ra = \Tr\left[\rho^\dagger A\rho B^{T}\right]$.

Now, by substituting the various symmetry operators in the definition of the string operator,~\cref{eq:string_general_def}, we find the various string correlators that we need.
If $U_g^{(i)} = (S\otimes \IId )^{(i)}$ on the even sites, and the end point operators are $\mathcal{O}_\alpha^{(L/R)} = Z$ on the odd sites $m$ and $n$, then we get the following correlator which we call the \renyi-2 strong string correlator:
\begin{equation}\label{eq:strong_2copy_nontrivial_correlator}
    C_{\text{II},nm}^S \vcentcolon = \langlee \mathcal{S}_{nm}^S \otimes \IId  \ranglee_\rho~, 
\end{equation}
where $\mathcal{S}_{nm}^{S}$ is defined as in~\cref{eq:string_op} and the Roman numeral II reflects the fact that the correlator is quadratic in the density matrix, or \renyi-2.
Note that $(W\otimes W)^\dagger (Z_{n/m} \otimes \IId  ) \left(W \otimes W\right) = - Z_{n/m} \otimes \IId  $.
Thus this choice of the end-point operators can in principle be used to detect nontrivial SPT order. 

If we take the symmetry to be the weak symmetry in~\cref{eq:string_general_def}, that is, $U_g^{(i)} = (W \otimes W )^{(i)}$ on the odd sites, and the end-point operators $\mathcal{O}_\alpha^{(L/R)} = Z$ on the even sites $n$ and $m$, then we get the \renyi-2 weak string correlator:
\begin{equation}\label{eq:weak_2copy_nontrivial_correlator}
    C_{\text{II},nm}^W \vcentcolon = \langlee \mathcal{S}_{nm}^W \otimes \mathcal{S}_{nm}^W \ranglee_\rho~, 
\end{equation}
where $\mathcal{S}^W_{n,m} = \mathcal{S}^S_{n+1, m+1}$ as defined earlier.
Note that $(S \otimes \IId )^\dagger (Z_{n/m} \otimes Z_{n/m}) (S \otimes \IId ) = -Z_{n/m} \otimes Z_{n/m}$. 

Next, we construct two more correlators by taking the end point operators to be $\IId $. These are the \renyi-2 trivial strong-string correlator and the \renyi-2 trivial weak-string correlator:
\begin{equation}\label{eq:2copy_trivial_correlators}
    \tilde{C}_{\text{II},nm}^S  \vcentcolon = \langlee \tilde{\mathcal{S}}_{nm}^S \otimes \IId  \ranglee_\rho \;,\; \tilde{C}_{\text{II},nm}^W  \vcentcolon = \langlee \tilde{\mathcal{S}}_{nm}^W \otimes \tilde{\mathcal{S}}_{nm}^W \ranglee_\rho~,
\end{equation}
 where $\tilde{\mathcal{S}}^S_{nm}$ has the strong symmetry operators in the bulk and $\IId $ as the end point operators with $n, m$ restricted to be odd, while $\tilde{\mathcal{S}}^W_{nm}$ has the weak symmetry operators in the bulk and $\IId $ as the end point operators with $n, m$ restricted to be even.

While for a generic density matrix, the \renyi-1 and \renyi-2 type of correlators are independent, Ref.~\cite{ma2024symmetry} shows that they are related to each other if the state is short-range entangled. The patterns of zeros used to diagnose $\ztwo^S \times \ztwo^W$ trivial and nontrivial SPT orders are summarized in Table~\ref{tab:pattern of zero}.

Now we show, using tensor network diagrams, that $C_{\text{I}, nm}^S = C_{\text{II}, nm}^S = C_{\text{II}, (n-1),(m-1)}^W = 1$ for $n,m \in \text{odd}$ in the state $\rhopp$.
Acting on $\rhopp$ with each of the string operators, we find
\begin{align*}
\begin{split} 
\mathcal{S}^W_{nm}\rhopp &\mathcal{S}^W_{nm} \\
&=
\begin{tikzpicture}[baseline={(0,-.1)}]
    \draw (0.5,0) -- (6.5,0);
    \node[fill=white] at (.5,0) {$\ldots$};
    \node[fill=white] at (6.5,0) {$\ldots$};
    \foreach \x in {2,3.5,5}{
        \draw (\x,-.55) -- (\x,.55);
		    \node[X] (t\x) at (\x,0.325) {};
		    \node[odd] at (\x,0) {};
            \node[X] (t\x) at (\x,-0.325) {};
		  }
	\foreach \x in {1.25,2.75,4.25,5.75}{
		     \draw (\x,-0.5) -- (\x,0.5);
		    \node[even] at (\x,0) {};
		  }
	\node[Z] at (1.25,-0.325) {};
    \node[Z] at (1.25,0.325) {};
    \node[Z] at (5.75,-0.325) {};
    \node[Z] at (5.75,0.325) {};
  \end{tikzpicture} \\ &=
  \begin{tikzpicture}[baseline={(0,-.1)}]
    \draw (0.5,0) -- (6.5,0);
    \node[fill=white] at (.5,0) {$\ldots$};
    \node[fill=white] at (6.5,0) {$\ldots$};
    \foreach \x in {2,3.5,5}{
        \draw (\x,-.55) -- (\x,.55);
		    \node[odd] at (\x,0) {};
		  }
	\foreach \x in {1.25,2.75,4.25,5.75}{
		     \draw (\x,-0.5) -- (\x,0.5);
		    \node[even] at (\x,0) {};
		  }
    \node[Xv] at (1.25+0.75/2,0) {};
    \node[Xv] at (5.75-0.75/2,0) {};
	\node[Zl] at (1.25,-0.325) {};
    \node[Zl] at (1.25,0.325) {};
    \node[Z] at (5.75,-0.325) {};
    \node[Z] at (5.75,0.325) {};
  \end{tikzpicture} \\ &= \rhopp ,
  \end{split}
  \end{align*}
\begin{align*}
\begin{split}\mathcal{S}^S_{nm}\rhopp &=
\begin{tikzpicture}[baseline={(0,-.1)}]
    \draw (-0.1,0) -- (5.6,0);
    \foreach \x in {0.5,2,3.5,5}{
        \draw (\x,-0.55) -- (\x,0.55);
		    \node[odd] at (\x,0) {};
		  }
	\foreach \x in {1.25,2.75,4.25}{
		     \draw (\x,-0.5) -- (\x,.55);
		    \node[even] at (\x,0) {};
      \node[X] (t\x) at (\x,0.325) {};
		  }
    \node[Z] at (0.5,0.325) {};
    \node[Z] at (5,0.325) {};
    \node[fill=white] at (-0.25,0) {$\ldots$};
    \node[fill=white] at (5.75,0) {$\ldots$};
  \end{tikzpicture} \\ &=
  \begin{tikzpicture}[baseline={(0,-.1)}]
    \draw (-0.1,0) -- (5.6,0);
    \node[fill=white] at (-0.25,0) {$\ldots$};
    \node[fill=white] at (5.75,0) {$\ldots$};
    \foreach \x in {0.5,2,3.5,5}{
        \draw (\x,-0.55) -- (\x,0.55);
		    \node[odd] at (\x,0) {};
		  }
	\foreach \x in {1.25,2.75,4.25}{
		     \draw (\x,-0.55) -- (\x,0.55);
		    \node[even] at (\x,0) {};
		  }
    \node[Zv] at (.5+0.75/2,0) {};
    \node[Zv] at (5-.75/2,0) {};
    \node[Zl] at (0.5,0.325) {};
    \node[Z] at (5,0.325) {};
  \end{tikzpicture} \\ &= \rhopp,
  \end{split}
  \end{align*}
\begin{align*}
\begin{split}
 \rhopp\mathcal{S}^S_{nm} &=
\begin{tikzpicture}[baseline={(0,-.1)}]
    \draw (-0.1,0) -- (5.6,0);
    \node[fill=white] at (-0.25,0) {$\ldots$};
    \node[fill=white] at (5.75,0) {$\ldots$};
    \foreach \x in {0.5,2,3.5,5}{
        \draw (\x,-0.55) -- (\x,0.55);
		    \node[odd] at (\x,0) {};
		  }
	\foreach \x in {1.25,2.75,4.25}{
		     \draw (\x,-.55) -- (\x,0.55);
		    \node[even] at (\x,0) {};
      \node[X] (t\x) at (\x,-0.325) {};
		  }
    \node[Z] at (0.5,-0.325) {};
    \node[Z] at (5,-0.325) {};
  \end{tikzpicture} \\ &=
  \begin{tikzpicture}[baseline={(0,-.1)}]
    \draw (-0.1,0) -- (5.6,0);
    \node[fill=white] at (-0.25,0) {$\ldots$};
    \node[fill=white] at (5.75,0) {$\ldots$};
    \foreach \x in {0.5,2,3.5,5}{
        \draw (\x,-0.55) -- (\x,0.55);
		    \node[odd] at (\x,0) {};
		  }
	\foreach \x in {1.25,2.75,4.25}{
		     \draw (\x,-0.55) -- (\x,0.55);
		    \node[even] at (\x,0) {};
		  }
    \node[Zv] at (.5+0.75/2,0) {};
    \node[Zv] at (5-.75/2,0) {};
    \node[Z] at (0.5,-0.325) {};
    \node[Z] at (5,-0.325) {};
  \end{tikzpicture} \\ &= \rhopp.
\end{split}
\end{align*}
Because $\rho_\Cl$ is invariant under the action of the strings, the string order parameters will be identically one whether we chose to contract the physical legs of $\rho_\Cl$ according to the \renyi-1 or \renyi-2 conventions. 

Using the pulling-through relations~\cref{eq:pulling_through_main} and the definition of $\rhopp$ [see~\cref{eq:decohered_cluster_state}], we can show that  $\tilde{C}_{\text{I}, nm}^S = \tilde{C}_{\text{II}, nm}^S = \tilde{C}_{\text{II}, nm}^W = 0$ in the decohered cluster state.

\section{Parent Lindbladian}\label{sec_parent_Lindbladian}
In the study of quantum phases of pure states, we traditionally begin with a Hamiltonian that depends on a set of parameters. Quantum phases are then understood to be finite volumes of Hamiltonian parameter space in which the ground state(s) of the Hamiltonian display qualitatively similar features, such as long-range order and degeneracy. These features can be diagnosed by measuring local and/or nonlocal order parameters whose values stratify Hamiltonian parameter space into distinct phases. In an effort to generalize this perspective to the mixed state case, we will construct a parent \textit{Lindbladian} for the decohered cluster state and map out the phase diagram in its vicinity by measuring local and nonlocal order parameters.

If a Hamiltonian is in a nontrivial SPT phase, then it is known that the phase is stable to symmetric perturbations. Specifically, if one adds a symmetric perturbation to the Hamiltonian, then the ground state degeneracy does not change (in the thermodynamic limit) and ground states of the perturbed Hamiltonian will have a nonzero string order parameter, reflecting a nontrivial projective representation.
Our aim in this section is to test if this analogy carries over to mixed states and Lindbladians. 
We construct a local Lindbladian whose steady state is the decohered cluster state $\rhopp$. 
We then add symmetric local Lindbladian perturbations and study the properties of the new steady states.
This raises the questions of whether there is a way of defining mixed state phases of matter which focuses on parent Lindbladians rather than density matrices, and whether this perspective reveals anything new. Toward this aim, we will construct a $\ztwo^\text{S}\times\ztwo^\text{W}$-symmetric parent Lindbladian $\mathcal{L}_{\Cl}$ which hosts $\rhopp$ as a NESS, i.e. $\mathcal{L}_{\Cl} (\rhopp) =0$.

\subsection{The model}\label{sec_model}
We define the parent Lindbladian on an open chain of $2N$ qubits labeled by $1, 2, \ldots , 2N$ as follows:
\begin{align}
\label{eq:parent_lind}
\mathcal{L}_{\Cl} (\rho) \vcentcolon = & \sum_{\substack{j=1}}^{N-1}
\mathcal{D} [L_{0,2j}] (\rho) +\mathcal{D} [L_{2,2j-1}] (\rho) \nonumber \\
& + \sum_{j=1}^N \mathcal{D} [L_{1,2j-1}] (\rho) 
,  
\end{align}
where $\mathcal{D} [A] (\rho) \vcentcolon = A \rho A^{\dag} - \frac{1}{2}\{ A^{\dag} A , \rho \} $ denotes the standard Lindblad dissipator for a jump operator $A$~\cite{lindblad1976generators,gorini1976completely}.
For periodic boundary conditions, we add the terms $\mathcal{D} [L_{0,2j}] (\rho) $ and $\mathcal{D} [L_{2,2j-1}](\rho)$ for $j=N$ to \cref{eq:parent_lind}, where the site label $i+2N \vcentcolon =  i$.
The jump operators $L_{\mu,j}$ are
\begin{subequations}\label{eq:parent_jumps}
    \begin{equation}
        L_{0,j}=\frac{1}{2}X_{j+1}\left(\IId   -Z_{j-1}X_{j}Z_{j+1}\right),
    \end{equation}
    \begin{equation}
        L_{1,j}=Z_{j},
    \end{equation}
    \begin{equation}
        L_{2,j}=Z_{j-1}X_{j}Z_{j+1}.
    \end{equation}
\end{subequations}
Since all the jump operators commute with $S$, the Lindbladian $\mathcal{L}_{\mathcal{C}}$ has the $\ztwo^S$ symmetry, while it is easy to check that $\mathcal{D}[L_{1,j}]$ renders the $W$ symmetry weak.
Thus, our parent Lindbladian has the $\ztwo^\text{S}\times\ztwo^\text{W}$ symmetry introduced in~\cref{sec:symmetry}.
Each of the terms $L_{0,j}$, $L_{1,j}$, $L_{2,j}$ plays a distinct role in driving an arbitrary density matrix in the same symmetry sector as $\rhopp$ to the steady state $\rhopp$. 

The jump operators have the following physical meanings.
$L_{0,j}$ for even $j$ is designed to stabilize the domain-wall configurations. If the configuration is violated, we have $(\IId -Z_{j-1}X_jZ_{j+1})/2 = +1$, and we fix the configuration by applying $X_{j+1}$. Note that this configuration can also be fixed by applying $X_{j-1}$, $Z_{j}Z_{j+2}$, $Z_j$, etc. However, applying $Z_j$ is forbidden as it breaks the strong symmetry $\ztwo^S$; we choose to apply $X_{j+1}$ because the resulting jump operator has the lowest weight.
Now within the domain-wall configuration space, $L_{1,j}$ is designed to decohere the ``off-diagonal'' domain-wall configurations, leaving only $\ketbra{\Psi_{\{z_i\}}}{\Psi_{\{z_i\}}}$, while $L_{2,j}$ 
makes the probability of getting all the configurations equal.

Note that $\rho_\mathcal{C}$ is a steady state for any values of the rates of the jump operators, but we will choose these rates to be equal. We have verified that changing the rates does not change the physics qualitatively. 

Recall that the decohered cluster state is a mixed-state SPT state protected by a $\ztwo^{S} \times \ztwo^{W}$ symmetry, and the parent Lindbladian $\cL_{\mathcal{C}}$ has  the same $\ztwo^{S} \times \ztwo^{W}$ symmetry. 
The strong symmetry of the Lindbladian forces there to be at least one steady state in each $s_{\text{ket}}=s_{\text{bra}}^*$ physical strong symmetry sector. The argument is as follows. Because the Lindbladian is strongly symmetric, it conserves strong symmetry charge. That is, $S\rho=s_{\text{ket}}\rho$ and $\rho S=s_{\text{bra}} \rho $ implies
\begin{equation}
    S\mathcal{L}[\rho]=\mathcal{L}[S\rho]=\mathcal{L}[s_{\text{ket}}\rho]=s_{\text{ket}}\mathcal{L}[\rho]~,
\end{equation}
and similarly when acting from the bra side. 

One can then consider the dynamics where one prepares an initial state with the symmetry charge $s_{\text{ket}}=s_{\text{bra}}^*$ and evolve it with the strongly symmetric Lindbladian. 
At an infinite time, such an initial state has to evolve into some state with the same strong symmetry charges, hence forcing the Lindbladian to have at least one steady state in the corresponding strong symmetry sector. 

On the other hand, it is not guaranteed that a steady state will exist in the strong symmetry sectors $s_{\text{ket}}\neq s_{\text{bra}}^*$ or with a nontrivial weak charge.
The above argument breaks down since one cannot prepare a physical initial state in the said symmetry sectors, as a state in the said symmetry sectors is traceless.
More specifically, in the strong case, $\Tr[\rho]=\Tr[S\rho S^\dagger]=(s_{\text{ket}}s_{\text{bra}})\Tr[\rho]$, which implies $s_{\text{ket}}s_{\text{bra}}=1$ or $\Tr[\rho]=0$.
In the weak case, $\Tr[\rho]=\Tr[W\rho W^\dagger]=w\Tr[\rho]$, so $w=1$ or $\Tr[\rho]=0$.

Since $\rho_{\mathcal{C}}$ is in the $s_{\text{ket}}= s_{\text{bra}}=+1$ sector, it is interesting to examine what is the state in the $s_{\text{ket}}= s_{\text{bra}}=-1$ sector.
As we will show, the corresponding steady state in the $s_{\text{ket}}= s_{\text{bra}}=-1$ sector is $\rhomm \propto \sum_{j \in \text{even}} Z_j \rhopp Z_j$.

\subsection{Dual perspective and steady states}\label{sec:dualperspective}
In fact, we can partially solve and understand the parent Lindbladian $\mathcal{L}_\Cl$ by mapping the system to a dual model.
This is akin to understanding the one-dimensional pure cluster state and its parent Hamiltonian by conjugating them via a depth-two circuit composed of CZ (controlled-Z) gates~\cite{briegel2001persistent}.
Recall that, for the pure cluster state defined in~\cref{eq:pure_cluster}, 
\begin{align}\label{eq:U_CZ}
    U_{\text{CZ}}\ket{\Psi_\mathcal{C}}=\ket{+}^{\otimes N},\quad U_{\text{CZ}}=\prod_{j=1}^{2N} \text{CZ}_{j,j+1},
\end{align} 
where $\text{CZ}_{j,j+1}\vcentcolon = (1+Z_j)/2 + (1-Z_j)Z_{j+1}/2$ is the controlled-Z two-qubit gate and $j+2N \! \vcentcolon = \! j$. The circuit is illustrated in~\cref{fig:U_CZ}. Note that $\text{CZ}_{j,j+1}$ is invariant under $j \longleftrightarrow j+1$ and $U_{\text{CZ}} Z_j U_{\text{CZ}}^{\dagger} = Z_j$.
Since $U_{\text{CZ}} X_j U_{\text{CZ}}^{\dagger} = Z_{j-1}X_jZ_{j+1}$, the operators $S$ and $W$ are invariant under the conjugation by $U_{\text{CZ}}$.
In the following, we apply the $U_{\text{CZ}}$ circuit to map $\rhopp$ to a tensor-product state $\tilde\rho_\Cl$ and $\mathcal{L}_\Cl$ to the Lindbladian $\tilde{\mathcal{L}}_\Cl$, allowing us to solve for the steady states and some of the eigenpairs (eigenvalues and corresponding eigenvectors) of $\mathcal{L}_\Cl$.

\begin{figure}
    \centering
    \begin{tikzpicture}[baseline={(0,.4)}]
    \node [label={[label distance=-0.2em]180:\large $U_{\text{CZ}}=$}] at (-1,1.5) {};
    \foreach \y in {0,1,1.5,2,2.5}{
		     \draw (-1,\y) -- (1,\y);}
    
    \draw (-0.5,2.5) -- (-0.5,2);
    \node[CZ] at (-0.5,2.5) {};
    \node[CZ] at (-0.5,2) {};

    \draw (0.5,1.5) -- (0.5,2);
    \node[CZ] at (0.5,1.5) {};
    \node[CZ] at (0.5,2) {};

    \draw (-0.5,1.5) -- (-0.5,1);
    \node[CZ] at (-0.5,1.5) {};
    \node[CZ] at (-0.5,1) {};

    \draw (0.5,1) -- (0.5,0.5);
    \node[CZ] at (0.5,1) {};

    \draw (-0.5,0) -- (-0.5,0.5);
    \node[CZ] at (-0.5,0) {};
    \node[CZ] at (0.5,0) {};
    \node[CZ] at (0.5,2.5) {};

    \draw [black] plot [smooth, tension=0] coordinates { (0.5,0) (0.5,-0.15) (0.65,-0.15) (0.65,2.65) (0.5,2.65)(0.5,2.5)};

    \draw[draw=white, fill=white] (-1,0.25) rectangle (1,0.75);
    \usetikzlibrary{shapes.geometric,decorations,decorations.pathmorphing}
    \node[inner sep=0.1em, outer sep=0, text height=2.7ex,text depth=0.3ex] at (0.53,0.5) {$\vdots$};
    \node[inner sep=0.1em, outer sep=0, text height=2.7ex,text depth=0.3ex] at (-0.5,0.5) {$\vdots$};
  \end{tikzpicture}\qquad\begin{tikzpicture}[baseline={(0,.4)}]
    \draw (1.5,1.75) -- (1.5,1.25);
    \node[CZ] at (1.5,1.75) {};
    \node[CZ] at (1.5,1.25) {};
    \node [label={[label distance=-0.2em]0:$=$CZ}] at (1.5,1.5) {};
  \end{tikzpicture}
    \caption{An illustration of the $U_{\text{CZ}}$ circuit appearing in~\cref{eq:U_CZ}.}
    \label{fig:U_CZ}
\end{figure}
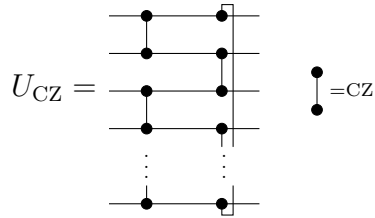

First, let us examine the state $\tilde{\rho}_{\mathcal{C}}=U_{\text{CZ}} \rhopp U_{\text{CZ}}^{\dagger}$. 
Recall that  $\rhopp$ can be written as in~\cref{eq:decohered_cluster_state}, which is an ensemble of the domain wall configurations $\ket{\Psi_{\{z_{i}\}}}$ given in Eq.~(\ref{eqn:domain wall configuration})
It is easy to see that
\begin{equation*}
    U_{\text{CZ}}\ket{\Psi_{\{z_i\}}}=\ket{Z\!=\!z_1}\ket{X\!=\!+1}\ket{Z\!=\!z_2}\cdots\ket{X\!=\!+1}~.
\end{equation*}
Summing over all the possible configurations of $z_{i}$ on the ``Z''-qubits, we have 
\begin{equation}\label{eq:CZ decohered cluster}
    \tilde{\rho}_{\mathcal{C}}=\bigotimes_{i \in \text{odd}} \frac{\IId _i}{2} \bigotimes_{j \in \text{even}} \ketbra{+}{+}_j~,
\end{equation}
which is a tensor-product state with the same $\ztwo^\text{S}\times\ztwo^\text{W}$ symmetry realized by $S$ and $W$.

We obtain the jump operators defining $\tilde{\mathcal{L}}_{\Cl}$ by conjugating the original jump operators by $U_{\text{CZ}}$, giving us 
\begin{align}\label{eq:dual_jumps}
\begin{split}
    \tilde{L}_{0,j} &=\frac{1}{2}Z_{j}X_{j+1}Z_{j+2}(\mathbbm{1} - X_{j})~, \\
    \tilde{L}_{1,j} &= Z_j~, ~~~~\tilde{L}_{2,j} = X_j~,
\end{split}
\end{align}
where $j$ is on all the even sites for $\tilde{L}_{0,j}$ and on all the odd sites for $\tilde{L}_{1,j}$ and $\tilde{L}_{2,j}$.

\subsubsection{Periodic boundary conditions}
The Lindbladian $\tilde{\mathcal{L}}_{\Cl}$ with periodic boundary conditions can be solved as follows.
First, considering the Lindbladian $\tilde{\mathcal{L}}_{12,j}\vcentcolon =\tilde{\mathcal{L}}_{1,j}+\tilde{\mathcal{L}}_{2,j}$ on the odd sites, we have $\tilde{\mathcal{L}}_{12,j}(\rho)=Z_j\rho Z_j + X_j \rho X_j - 2\rho$. The steady state can be easily solved as $\mathbbm{1}_j/2$ and is unique, with a Lindbladian (i.e.\ dissipative) gap (i.e., the eigenvalue of $\tilde{\mathcal{L}}_{12,j}$ with lowest nonzero real part) $\Delta_j = 2$.
Since a physical Lindbladian always has eigenvalues with non-positive real parts, we can try a potential steady state ansatz with the odd sites being $\bigotimes_{j \in \text{odd}} \frac{\mathbbm{1}_j}{2}$.
Within this subspace, the jump operators of $\tilde{\mathcal{L}}_{0}$ on the even sites become
\begin{align}\label{eq:effective CZ L0} 
\begin{split}
    \tilde{L}_{0,j}&=\frac{1}{2}Z_{j}Z_{j+2}(\mathbbm{1} - X_{j}) \\
    &=(|+\rangle \langle -|)_{j}(|+\rangle \langle -|+|-\rangle \langle +|)_{j+2}~,
\end{split}
\end{align}
where $j$ is on the even sites.
It can be seen that this jump operator annihilates $\bigotimes_{j \in \text{even}} |+\rangle \langle +|_j$, so we indeed have $\tilde{\rho}_{\mathcal{C}}$ as a steady state. 
In fact, if we interpret $\bigotimes_{j \in \text{even}} |+\rangle \langle +|_j$ as the vacuum and $|-\rangle\!\bra{-}_j$ as a quasiparticle or a defect on the even site $j$, then the effective jump operator is describing the processes of a particle hopping to the next empty even site on the right or annihilating with the particle on the next even site on the right if occupied. 

From this perspective, it is then also easy to guess and check that $\tilde{\rho}_{-}=(1/N)\sum_{j \in \text{even}} Z_j \tilde{\rho}_{\mathcal{C}} Z_j$ is a steady state in the $s_{\text{ket}} = s_{\text{bra}} = -1$ symmetry sector. 
Note that, since the $Z_j$ operator commutes with $U_{\text{CZ}}$, a steady state in the $s_{\text{ket}} = s_{\text{bra}} = -1$ symmetry sector of $\mathcal{L}_{\mathcal{C}}$ is $\rho_{-}=\frac{1}{N}\sum_{j \in \text{even}} Z_j \rho_{\mathcal{C}} Z_j$.
Since the Lindbladian cannot create a $|-\rangle$ particle, we expect the two states $\rho_{\mathcal{C}}$ and $\rho_{-}$ to be the only steady states of the Lindbladian when subject to periodic boundary conditions, which we confirmed numerically by exact diagonalization and density matrix renormalization group (DMRG) calculations, as discussed in~\cref{sec_perturbed_Lindbladian}.

It is natural to consider whether $\rhomm$ is also a mixed-state SPT state. We answer this question in the affirmative by showing in~\cref{ap:rho minus} that it has unit expectation value of the string order parameters used to analyze $\rhopp$. 

\subsubsection{Open boundary conditions}\label{sec:OBC_steady_states}
The Lindbladian in the case of open boundary conditions (OBC) can also be solved analogously.  
However, to keep the symmetries the same in the dual picture, we need to use the same $U_{\text{CZ}}=\prod_{j=1}^{2N} \text{CZ}_{j,j+1}$ even for open boundary conditions.
The ansatz for the steady state is constructed again by projecting the Lindbladian onto the steady-state subspace of $\tilde{\mathcal{L}}_{12,j}$. 
Note that, for $j=1$, $\tilde{\mathcal{L}}_{12,j}(\rho)=Z_j \rho Z_j - \rho$, which has the steady-state subspace spanned by $\{\mathbbm{1}_1,Z_1\}$. The steady-state subspace on the odd sites is therefore spanned by $\bigotimes_{i \in \text{odd}} \frac{\mathbbm{1}_i}{2}$ and $Z_1\bigotimes_{i \in \text{odd}} \frac{\mathbbm{1}_i}{2}$.
Note that these two states form a classical bit on the site $j=1$.

After projection onto the above steady-state subspace on the odd sites, the effective Lindbladian $\mathcal{L}_{0}$ is formed by the jump operators Eq.~(\ref{eq:effective CZ L0}) but with terms $j= 1,3, \dots , 2N\!-\!3$. This again describes a particle hopping (to-the-right) and annihilation process, but with the hopping that ends on site $j=2N$.

Again, one can see that $\tilde{\rho}_{\mathcal{C}}$ in Eq.~(\ref{eq:CZ decohered cluster}) is still a steady state, since $\tilde{L}_{0,j}$ annihilates $\tilde{\rho}_{\mathcal{C}}$. 
We also notice that, since $|-\rangle_L$ cannot ``hop'' across the boundary, $Z_{2N}\tilde{\rho}_{\mathcal{C}}$, $\tilde{\rho}_{\mathcal{C}}Z_{2N}$ and $Z_{2N}\tilde{\rho}_{\mathcal{C}}Z_{2N}$ are also steady states. 
Note that these four states form a qubit on site $j=2N$.
The other four states can be easily obtained as $Z_1 \tilde{\rho}_{\mathcal{C}}$, $Z_1 Z_{2N}\tilde{\rho}_{\mathcal{C}}$, $Z_1 \tilde{\rho}_{\mathcal{C}}Z_{2N}$, and $Z_1 Z_{2N}\tilde{\rho}_{\mathcal{C}}Z_{2N}$.
Since $U_{\text{CZ}}$ commutes with $Z_j$, we find that the steady states of $\mathcal{L}_{\mathcal{C}}$ are spanned by the eight states listed in the OBC row of~\cref{tab:steady states}. We have numerically confirmed that these eight states are the only steady states of $\mathcal{L}_{\mathcal{C}}$.

In order to better understand the OBC steady states, we can use as a basis the states $\rho_{\alpha\beta}$ defined in~\cref{eq:open_MPDO} and $\rho_{\alpha\beta}'\vcentcolon =Z_{2N}\rho_{\alpha\beta}$.
We may now express all of the OBC steady states as linear combinations of $\rho_{\alpha\beta}$ and $\rho_{\alpha\beta}'$. These alternative expressions for the steady states are also tabulated in~\cref{tab:steady states}.

\begin{table}
\begingroup
\renewcommand{\arraystretch}{1.35}
\begin{equation*}
    \begin{array}{|c|rl|ccc|}\hline
       &\multicolumn{2}{r}{\hbox{\diagbox[width=11em, height=3em]{Steady State}{Charge}}}\vline & s_\text{ket} & s_\text{bra} & w \\\hline
       \multirow{2}*{\rotatebox{90}{\text{PBC}}} & \multicolumn{2}{c}{\rhopp}\vline & +1 & +1 & +1 \\
        & \multicolumn{2}{c}{\rhomm}\vline & -1 & -1 & +1 \\\hline
      \multirow{8}*{\rotatebox{90}{\text{OBC}}} & \rhopp&=\rho_{00}+\rho_{11} & +1 & +1 & +1 \\
      & Z_1\rhopp &= \rho_{00}-\rho_{11}  & +1 & +1 & -1 \\
      & Z_{2N}\rhopp Z_{2N}&=\rho_{01}+\rho_{10}  & -1 & -1 & +1 \\
      & Z_1Z_{2N}\rhopp Z_{2N}&=\rho_{01}-\rho_{10} & -1 & -1 & -1 \\
      & Z_{2N}\rhopp&=\rho_{00}^{\prime}+\rho_{11}^{\prime} & -1 & +1 & +1 \\
      & Z_1Z_{2N}\rhopp&=\rho_{00}^{\prime}-\rho_{11}^{\prime}  & -1 & +1 & -1 \\
      & \rhopp Z_{2N}&=\rho_{01}^{\prime}+\rho_{10}^{\prime} & +1 & -1 & +1 \\
      & Z_1\rhopp Z_{2N}&=\rho_{01}^{\prime}-\rho_{10}^{\prime} & +1 & -1 & -1 \\\hline
    \end{array}
\end{equation*}
\endgroup
\captionsetup{justification=justified,singlelinecheck=false}
\caption{The steady states of the parent Lindbladian $\mathcal{L}_\Cl$ under periodic boundary conditions (PBC) and open boundary conditions (OBC) and their corresponding symmetry sectors. As discussed in~\cref{sec:OBC_steady_states}, the last four OBC steady states are accidental, and can be lifted by adding a symmetric perturbation.} 
\label{tab:steady states}
\end{table}
 
Notice that, for open boundary conditions, there is one steady state in each charge sector of the symmetries 
$s_\text{ket}$, $s_\text{bra}$, and $w$. As we discussed in~\cref{sec_model}, only states with the same strong charge on the ket and bra and with trivial weak charge are physical, in that all other density matrices are traceless. However, the unphysical steady states still contribute to the physical steady-state degeneracy because they can be added to physical steady states, resulting in new physical states which are beyond the span of the original physical states. 

Recall from~\cref{sec:tensors} that the four states $\rho_{\alpha \beta}$ naturally arise from the projective representation of the symmetry on the virtual space. These states only span a four-dimensional linear space, however, so it is possible that our excess degeneracy is accidental. If we add an additional Lindbladian $D[X_{j=2N}]$, which has the $X_{j=2N}$ jump operator on the rightmost site, we  find that only the states $\rho_{\alpha\beta}$ in~\cref{tab:steady states} remain steady states of the modified Lindbladian.
We expect a degeneracy of four is the minimal steady-state degeneracy of a Lindbladian on open boundary conditions hosting this nontrivial mixed-state SPT order, due to the nontrivial projective representation.
The states we find also satisfy the requirement imposed by strong symmetry, namely that there is a steady state in the $(+1,+1,+1)$ and $(-1,-1,+1)$ sectors of the $(s_\text{ket},s_\text{bra},w)$ symmetry.

\subsection{Lindbladian gap and mixing time}

\label{sec:lindblad_mixing}
The particle hopping-annihilation interpretation of the effective Lindbladian also allows us to solve the spectrum and analyze the mixing time. 
Since the Lindbladian $\tilde{\mathcal{L}}_{12,j}$ is gapped, we again consider the subspace where the odd sites are in the $\mathbbm{1}/2$ state, and only consider the effective Lindbladian $\tilde{\mathcal{L}}_0$ formed by the jump operators Eq.~(\ref{eq:effective CZ L0}) projected onto this subspace. 
Furthermore, we consider the ``diagonal (classical) subspace'' on the even sites $\rho = \sum_{\boldsymbol{\sigma}}  \rho(\boldsymbol{\sigma}) |\boldsymbol{\sigma}\rangle \langle \boldsymbol{\sigma}|$, where $\boldsymbol{\sigma} = (\sigma_2, \sigma_4, \ldots, \sigma_L) $ and $\sigma_{j}=+,-$ on the even sites.
For convenience, we map $|\boldsymbol{\sigma}\ra \la \boldsymbol{\sigma}|$ to $|\boldsymbol{\sigma})$ and treat $\rho \rightarrow |\rho ) \vcentcolon = \sum_{\boldsymbol{\sigma}} \rho(\boldsymbol{\sigma}) |\boldsymbol{\sigma})$ as a vector.

Within this subspace, the master equation of the density matrix reduces to a master equation for the probability distribution $\partial_t |\rho_t) = M |\rho_t)$, where $M=\sum_{j} \sigma_j^{-}\sigma_{j+2}^{+}+\sigma_j^{-}\sigma_{j+2}^{-}+(1-n_j)$, $\sigma_j^{-}\vcentcolon =|+)(-|_j$, $\sigma_j^{+}\vcentcolon =|-)(+|_j$, and $n_j\vcentcolon =|-)(-|_j$.
The effective Lindbladian in the diagonal subspace therefore becomes the famous ``reaction-diffusion'' process~\cite{lushnikov1986binary,PhysRevB.103.174305}, which can be mapped into a quadratic fermion problem using the Jordan-Wigner transformation. 

The spectrum of $M$ for periodic boundary conditions can be further solved via Fourier transformation (see Appendix~\ref{app_sec:fermion} for more details), giving us the single-particle ``excitation'' dispersion $\mathcal{E}_{k}=(1-\cos k)-i\sin(k)$, where $k \sim 1/N$ for large system sizes. 
We therefore see that the Lindbladian gaps scale as $\sim N^{-2}$ in both even and odd parity sectors.
Furthermore, the mixing time, or the time scale for a system to reach the steady state, will scale as $\sim N^2$, which is also expected from the ``reaction-diffusion'' dynamics.

On the other hand, for open boundary conditions, the corresponding quadratic fermion problem is not diagonalizable but can be put into a Jordan canonical form, which has an eigenvalue one, and the size of the Jordan block scales as $N$. 
Such a Jordan block indeed suggests that some initial state (in fact, a state with the particle on the left end) will take time $t \sim \Omega(N)$ to approach the steady state. 
Interestingly, even if we modify our parent Lindbladian so that the particle can hop both to the left and the right directions in the effective dynamics (such that the corresponding quadratic fermion problem becomes diagonalizable), the gapped nature of the Lindbladian and the $\Omega(N)$ relaxation time still hold.
Though in this case, the $\Omega(N)$ relaxation time is due to the exponential localization of the steady state towards one end of the system, with the particular side of localization set by the relative amplitudes between the leftward and rightward hopping rates. 
Our Lindbladian, under open boundary conditions, therefore provides an example wherein the mixing time does not necessarily scale as the inverse of the Lindbladian gap.

\subsection{Conserved operators}
We have discussed at length the steady states of $\mathcal{L}_\Cl$, which are its right eigenvectors with eigenvalue zero when taken as a superoperator. However, $\mathcal{L}_\Cl$ is not a Hermitian superoperator, so its left and right eigenvectors need not coincide. Note that the left eigenvectors may also be thought of as the right eigenvectors of $\mathcal{L}_\Cl^\dagger$, wherein the superoperator $\mathcal{D}$ appearing in~\cref{eq:parent_lind} is replaced with 
\begin{align}\label{eq:D_dagger}
    \mathcal{D}^\dagger [A] (\mathcal{O}) = A^{\dag} \mathcal{O} A - \frac{1}{2}\{ A^{\dag} A , \mathcal{O} \}~.
\end{align} The left eigenvectors with eigenvalue zero are known as \textit{conserved operators} because they are invariant under time evolution according to the following equation of motion:
\begin{equation}
\frac{d\mathcal{O}}{dt}=\mathcal{L}^\dagger[\mathcal{O}]=0.
\end{equation}
Using~\cref{eq:D_dagger}, it is clear that the identity matrix $\IId $ and the strong symmetry operator $S$ are conserved, as they each commute with all of the jump operators defined in~\cref{eq:parent_jumps}. If the chain is placed on open boundary conditions, then $Z_1$ and $Z_{2N}$ additionally become conserved operators because the jump operators with which they failed to commute extend beyond the boundaries and are discarded. In total then, the conserved operators are
\begin{align}
  \begin{split}
    \text{PBC}:&\ \IId,\, S,\\
    \text{OBC}:&\ \IId,\, Z_1,\, Z_{2N},\, Z_1Z_{2N},\, S,\ SZ_1,\, SZ_{2N},\,\\
               &\ SZ_1Z_{2N}.
  \end{split}
\end{align}
Notice that there are as many conserved operators in each case as there are steady states, which is required of zero eigenvalues of a non-Hermitian (super)operator.

\section{Perturbed Lindbladian and strong-to-weak SSB}\label{sec_perturbed_Lindbladian}
It has been established that the quantum phase of the ground state(s) of a gapped Hamiltonian is stable against small symmetric local perturbations. Analogously, we ask whether adding a small local symmetric perturbation to the Lindbladian, whose steady state is $\rho_\mathcal{C}$, would result in a new steady state in the same phase (according to~\cref{Def:Phase_of_Matter}) as $\rho_\mathcal{C}$ or not.

We might hope to address this question by treating the Lindbladian as a non-Hermitian operator in the space of density matrices and performing non-Hermitian perturbation theory. However, the conventional stability argument does not extend to this context. One typically argues that states in the SPT ground state manifold can only be connected by symmetric perturbations that extend from one end of the system to the other. Assuming local perturbations, this only occurs at an order of perturbation theory that is extensive in the size of the system. The degeneracy is therefore stable in the thermodynamic limit. This breaks down due to the non-Hermiticity of the Lindbladian. Because the Lindbladian is non-Hermitian, its perturbation series involves matrix elements between left and right eigenvectors. While right eigenvectors cannot be coupled to one another at finite order in perturbation theory, they can be coupled to left eigenvectors. This tells us that perturbation theory can no longer guarantee the stability of the SPT phase to small local symmetric perturbations.

We therefore turn to numerics\footnote{Code to generate most of the results of this paper can be found at \href{https://doi.org/10.5281/zenodo.17451966}{https://doi.org/10.5281/zenodo.17451966}} via density matrix renormalization group (DMRG) to investigate the stability of the steady-state SPT phase. We use the trick of ``operator-state mapping'' from Sec.\ \ref{sec:string_order_parameter}, namely, vectorizing the density matrix and the Lindbladian, resulting in a non-Hermitian matrix which is used as an input into DMRG in the ITensor library~\cite{ITensor}. We discuss the details behind the DMRG implementation in~\cref{app:dmrg}. 
In this construction, we map density matrices to pure states in the doubled Hilbert space (with the ket space denoted by $L$ and the bra space denoted by $R$) by the mapping in~\cref{eq:op_state_correspondence}.
The Lindbladian superoperator with a Hamiltonian $H$ and jump operators $L_i$ becomes
\begin{equation}\label{eq:lind_super_mapping}
\begin{aligned}
\mathbbm{L}=&i \IId  \otimes H^T  -i H\otimes \IId  \\
   & +  \sum_{i} L_i \otimes L_i^* -\frac{1}{2} L_{i}^\dagger L_{i}\otimes\IId -\frac{1}{2} \IId  \otimes \left(L_{i}^\dagger L_{i}\right)^T.
\end{aligned}
\end{equation}

Recall that in the (pure) cluster state scenario, we can use $U_{\text{CZ}}$ to map the state and its parent Hamiltonian from the nontrivial SPT state $|\Psi_{\mathcal{C}}\ra$ to the trivial SPT state $|+\ra^{\otimes N}$.
It is then natural to study the phase diagram interpolating between the nontrivial-SPT Hamiltonian and the trivial-SPT Hamiltonian~\cite{PhysRevLett.93.056402,PhysRevLett.103.020506}.

Motivated by the above consideration, here we also study the steady-state phase of the perturbed Lindbladian for $\lambda \in [0, 1]$:
\begin{align}
\label{eq:interpolating_lind}
\mathcal{L}_\lambda = (1-\lambda) \mathcal{L}_{\mathcal{C}} + \lambda \tilde{\mathcal{L}}_{\mathcal{C}}
,
\end{align}
interpolating between the nontrivial SPT Lindbladian $\mathcal{L}_{\mathcal{C}}$ [see~\cref{eq:parent_lind}] and the trivial-SPT Lindbladian $\tilde{\mathcal{L}}_{\mathcal{C}}$ [see~\cref{eq:dual_jumps}].
We study the steady state of the interpolating Lindbladian $\mathcal{L}_\lambda$ using DMRG and calculate various steady state properties, namely the steady-state degeneracy, string order parameters and two other correlators that diagnose strong-to-weak spontaneous symmetry breaking. If there were only one phase transition between the trivial and non-trivial SPT phases, we would expect this transition to take place at the self-dual point $\lambda=1/2$. However, as we will show, the instability to SW-SSB leads to a different situation.

\subsection{Mixed-state SPT order} \label{sec:string_order_numerics}

To examine the nontrivial mixed-state SPT order, we calculate the \renyi-1 and \renyi-2 strong-string order parameters,  $C_{\text{I},nm}^S$ [\cref{eq:def_C_I_S}] and $C_{\text{II},nm}^S$ [\cref{eq:strong_2copy_nontrivial_correlator}], as well as the \renyi-2 weak-string order parameter, $C_{\text{II},nm}^W$ [\cref{eq:weak_2copy_nontrivial_correlator}], for the steady state $\rho_{\lambda}$ of $\cL_{\lambda}$.
Recall that, under conjugation by the CZ circuit, $X_i \mapsto Z_{i-1} X_i Z_{i+1}$ and the string operators map as $\mathcal{S}^S_{nm} \mapsto \tilde{\mathcal{S}}^S_{nm}$ and $\mathcal{S}^W_{nm} \mapsto \tilde{\mathcal{S}}^W_{nm}$. 
We refer to $(C_{\text{I},nm}^S, C_{\text{II},nm}^S, C_{\text{II},nm}^W)$ as the nontrivial string order parameters and $(\tilde{C}_{\text{I},nm}^S, \tilde{C}_{\text{II},nm}^S, \tilde{C}_{\text{II},nm}^W)$ as the trivial string order parameters.

Upon conjugation by $U_{\text{CZ}}$,  $\mathcal{L}_{\lambda} \mapsto \mathcal{L}_{1-\lambda}$. 
Thus the steady states of $\cL_\lambda$ and $\cL_{1-\lambda}$ are related by conjugation with $U_{\text{CZ}}$, or $U_{\text{CZ}} \rho_{\lambda} U_{\text{CZ}}^\dagger = \rho_{1-\lambda}$.
Hence, calculating the nontrivial string order parameters in the steady states of $\cL_\lambda$ for all $\lambda \in [0,1]$ allows us to readily obtain the trivial string order parameters.

For $\lambda=0$, the steady state in the $(s_{\text{ket}}, s_{\text{bra}}, w) = (+1, +1, +1)$ sector is $\rhopp$, which is a nontrivial SPT state having $(C_{\text{I}, nm}^S, C_{\text{II}, nm}^S, C_{\text{II}, nm}^W) = (1,1,1)$  and $(\tilde{C}_{\text{I}, nm}^S, \tilde{C}_{\text{II}, nm}^S, \tilde{C}_{\text{II}, nm}^W) = (0,0,0)$ for any $|n-m|\geq 2$.
On the other hand, for $\lambda = 1$, the steady state in the same symmetry sector is $\tilde{\rho}_{\mathcal{C}}$ [see~\cref{eq:CZ decohered cluster}], and its nontrivial string order parameters are $(0,0,0)$ and its trivial string order parameters are $(1,1,1)$.
Recall that a single string order parameter does not determine if the state is in the nontrivial SPT phase or the trivial phase, rather it is the pattern of the string order parameters that determines the phase~\cite{pollmann2012detection,degroot2022symmetry}.

\begin{figure*}
\captionsetup[subfigure]{labelformat=empty,captionskip=-20pt}
 \centering
 \subfloat[\label{fig:cs1}]{
  \includegraphics[width=0.325\textwidth]{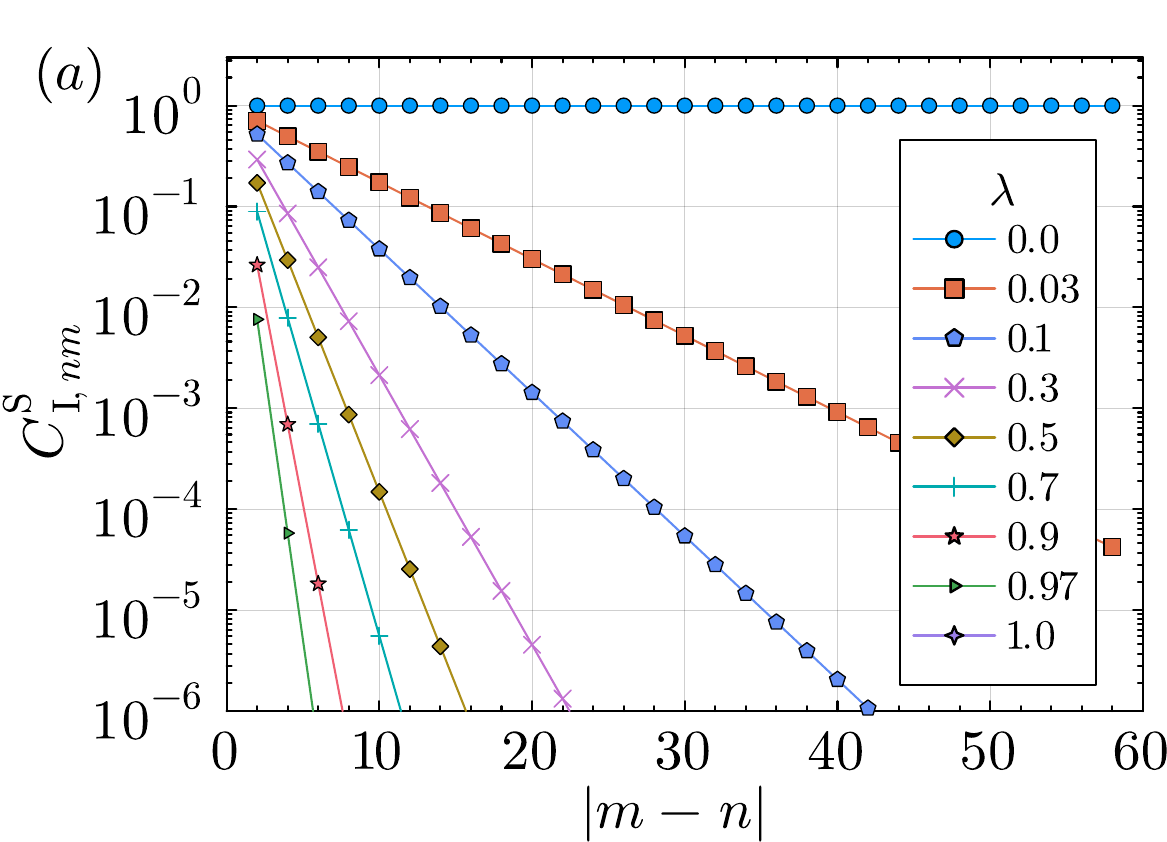}%
}
     \hfill
     \subfloat[\label{fig:cs2}]{
     \includegraphics[width=0.325\textwidth]{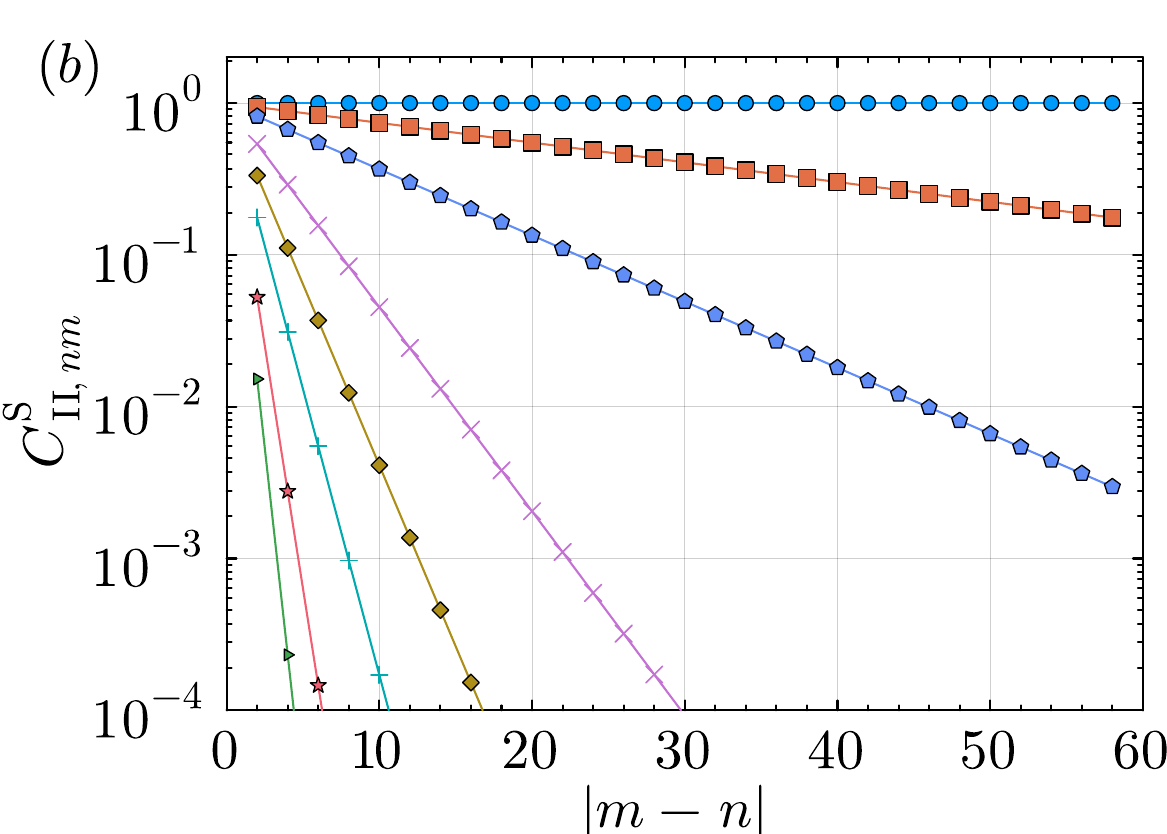}%
     }
    \hfill
     \subfloat[\label{fig:cw2}]{%
     \includegraphics[width=0.325\textwidth]{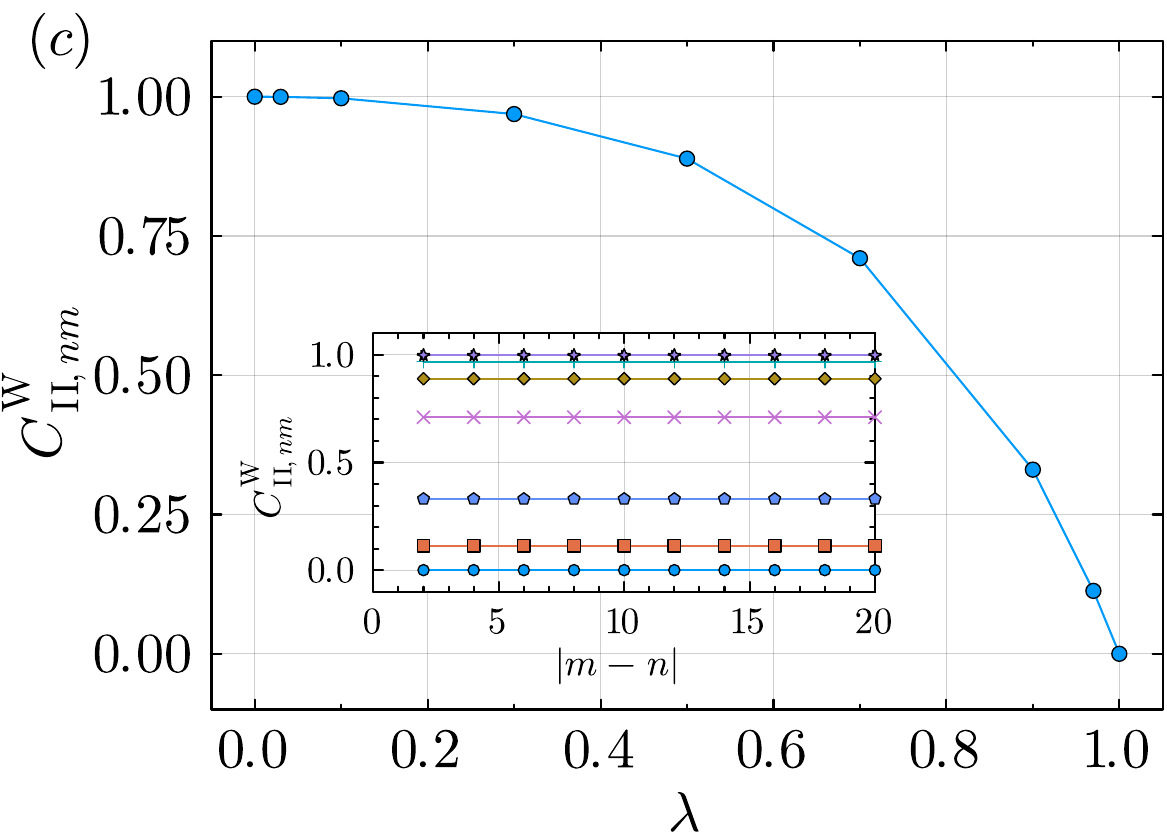}%
     }
\caption{ $(a)$ \renyi-1 strong-string correlator $C_{\text{I}, nm}^S$ as a function of the length of the string. $C_{\text{I}, nm}^S$ decays exponentially with $|m-n|$ for any $ \lambda>0$. The curve for $\lambda = 1$ is not visible within the limits of the plots because $C_{\text{I}, nm}^S=0$. $(b)$ Log of the \renyi-2 strong-string correlator $C_{\text{II}, nm}^S$ as a function of the length of the string. $C_{\text{II}, nm}^S$ decays exponentially with $|m-n|$. The curve for $\lambda = 1$ is not visible within the limits of the plots because $C_{\text{II}, nm}^S=0$. $(c)$ \renyi-2 weak-string correlator $C_{\text{II}, nm}^W$ as a function of the length of the string. $C_{\text{II}, nm}^W$ remains nonzero for all $0 \leq \lambda < 1$. }
    \label{fig:string_dual_pert}
\end{figure*}

If the nontrivial mixed-state SPT order is stable at $\lambda=0$, then we expect that after adding a small perturbation $\tilde{\cL}$, the nontrivial string order parameters will continue to remain nonzero while the trivial string order parameters will continue to remain zero in the limit of long string lengths. 
In~\cref{fig:cs1,fig:cs2}, we show the nontrivial string order parameters evaluated in the steady state in the $ s_{\text{ket}} = s_{\text{bra}} = w = +1$ sector for a system of $2N=400$ qubits for various values of $\lambda$.
In ~\cref{fig:cs1,fig:cs2}, we plot the two strong nontrivial string order parameters $C_{\text{I},nm}^S$ and $C_{\text{II},nm}^S$ for $n=181$ and $m \in \{183, 185, \ldots, 239\}$, which are chosen to minimize the boundary effects.
We find that, for all $\lambda$ other than $\lambda = 0$ and $\lambda = 1$, the strong-string order parameters $C_{\text{I}, nm}^S$ and $C_{\text{II}, nm}^S$ both decay exponentially in the string length $|n-m|$ [see~\cref{fig:cs1,fig:cs2}]. 
For $0 < \lambda <1$, the trivial strong-string order parameters $\tilde{C}_{\text{I},nm}^S$ and $\tilde{C}_{\text{II},nm}^S$ also decay to zero exponentially, which can be inferred from the $U_{\text{CZ}}$ conjugation.

On the other hand, we find that the nontrivial and trivial weak-string order parameters $C_{\text{II}, nm}^W$ and $\tilde{C}_{\text{II}, nm}^W$ are independent of the string length and depend only on $\lambda$ as shown in the inset of~\cref{fig:cw2}. 
The result is calculated with $n=182$ and varying $m \in \{184, 186, \ldots, 202\}$, and we also plot $C_{\text{II}, nm}^W$ as a function of $\lambda$ in~\cref{fig:cw2}.
We see that $C_{\text{II}, nm}^W$ remains nonzero for all $0 \leq \lambda < 1$. 
To summarize, the pattern of string order parameters is $(C_{\text{I}, nm}^S, C_{\text{II}, nm}^S, C_{\text{II}, nm}^W)=(0,0, \text{nonzero})$ and $(\tilde{C}_{\text{I}, nm}^S, \tilde{C}_{\text{II}, nm}^S, \tilde{C}_{\text{II}, nm}^W)=(0,0, \text{nonzero})$ for $0<\lambda<1$.
This matches with the patterns of zeros neither for the nontrivial SPT order nor for the trivial SPT order (see Table~\ref{tab:pattern of zero}). 
This is the first indication that the steady state $\rho_{\lambda}$ at any $0 < \lambda < 1$ cannot be classified according to the mixed-state SPT state classification, suggesting that the state $\rho_{\lambda}$ is neither a nontrivial nor a trivial mixed-state SPT state.

Examining the steady-state degeneracy using DMRG, we find that there is an eight-fold degeneracy for all $\lambda$, where each symmetry sector $s_{\text{ket}}, s_{\text{bra}}, w \in \{+1,-1\}$ hosts one steady state.
In the Hamiltonian case, the ground state degeneracy is robust to small symmetric perturbations. 
However, in the case of steady states of Lindbladians, we find that adding a symmetric perturbation with a single jump operator $X_{2N}$, which acts only on the rightmost qubit, reduces the degeneracy to four. Adding another jump operator $X_1$ further reduces the steady state degeneracy from four to two. 
From~\cref{sec_model}, we know that a steady state degeneracy of two is guaranteed by the strong symmetry $\ztwo^S$, and indeed, the remaining two steady states reside in the symmetry sectors $(s_{\text{ket}}, s_{\text{bra}}, w)=(+1,+1,+1)$ and $(-1,-1,+1)$, respectively.

\subsection{Strong-to-weak SSB}
\label{sec:ssb}
We have seen that the strong-string order parameters all decay exponentially to zero for any $0 < \lambda <1$, while the weak-string order parameter is nonzero for $0\leq \lambda <1$. 
This suggests that the steady state $\rho_{\lambda}$ (or $|\rho_{\lambda}\rangle$) is not short-range entangled.
Furthermore, the trivial strong-string order parameter $\tilde{C}_{\text{I},nm}^{S}$ can also be interpreted as a ``disorder'' order parameter for the strong symmetry $\ztwo^{S}$, whose exponential decay with $|n-m|$ indicates the strong symmetry is spontaneously broken. 

In the following, we present evidence for the \textit{strong to weak} spontaneous symmetry breaking (SSB) of $\ztwo^{S}$ ~\cite{ma2024symmetry,lessa_strong--weak_2024,sala_spontaneous_2024} when $0 < \lambda < 1$. 
We use the following connected correlators to probe strong-to-weak SSB (SW-SSB). For $n$ and $m$ even, the connected correlators are
\begin{equation}
\label{eq:ZZ_correlator}
    A_{\text{I}, nm} \vcentcolon= \langle  Z_n Z_m \rangle  - \langle  Z_n  \rangle  \langle  Z_m \rangle~,
\end{equation}
\begin{equation}
\label{eq:ZZII_correlator}
    A_{\text{II}, nm} \vcentcolon= \langlee  Z_n Z_m \otimes \IId   \ranglee  - \langlee  Z_n  \otimes \IId   \ranglee  \langlee  Z_m \otimes \IId   \ranglee~,
\end{equation}
\begin{equation}
\label{eq:ZZZZ_correlator}
    B_{\text{II}, nm} \vcentcolon= \langlee  Z_n Z_m \otimes Z_n Z_m  \ranglee  - \langlee  Z_n  \otimes Z_n  \ranglee  \langlee  Z_m \otimes Z_m  \ranglee~,
\end{equation}
where $\langle \mathcal{M} \rangle = \Tr \left[ \rho \mathcal{M} \right]$ for a density matrix normalized as $\Tr \left[ \rho \right] = 1$, $\la\la \cdots \ra\ra$ is defined in Eq.~(\ref{eqn:double braket expectation}), and the first (second) factor of the tensor product acts on the ket (bra) Hilbert space. For convenience, we provide a summary of these correlators and the string order parameters introduced in~\cref{sec:string_order_parameter} in~\cref{tab:correlator_summary}. Their values in the various phases studied in this work are given in~\cref{tab:pattern of zero}.

To see how these correlators probe strong-to-weak SSB, recall that the strong symmetry $\ztwo^S = \ztwo^{\text{bra}} \times \ztwo^{\text{ket}} = \ztwo^{\text{L}} \times \ztwo^{\text{R}}$ is in fact two $\ztwo$ symmetries with the symmetry generator $S$ operating on the bra or the ket side independently. In the vectorized $|\rho \rangle$ perspective (namely, doubled Hilbert space), it is generated by $S_L\otimes \IId _R$ and $\IId _L\otimes S_R$.  
The possible symmetry breaking patterns are $\ztwo^S \rightarrow \mathbb{Z}_1$ (broken down to the trivial group) or $\ztwo^S \rightarrow \ztwo^W$, where $\ztwo^W=\{I, S(\cdot)S^{\dagger}\}=\{I, S_L \otimes S_R\}$ (namely, broken down to a weak symmetry). 

We therefore see that the above correlators have the following physical meaning.
Since the superoperator $Z_n (\cdot)\IId $ is charged under (namely, anti-commutes with) $S(\cdot)\IId $ and $S(\cdot)S^{\dagger}$, we see that $A_{\text{I}, nm}$ is used to detect $\ztwo^S \rightarrow \mathbb{Z}_1$ and the corresponding long-range correlation in $\rho$.
Similarly, $Z_n \otimes \IId _R$ anti-commutes with $S_L \otimes \IId _R$ and $S_L\otimes S_R$, so $A_{\text{II}, nm}$ is used to detect $\ztwo^S \rightarrow \mathbb{Z}_1$ and the corresponding long-range correlation in $|\rho \rangle$, namely, a long-range correlation quadratic in $\rho$. 
On the other hand, $Z_n \otimes Z_n$ anti-commutes with $S_L \otimes \IId _R$ (and $\IId _L \otimes S_R$) but commutes with $S_L\otimes S_R$, so $B_{\text{II}, nm}$ is used to detect $\ztwo^S \rightarrow \ztwo^W$ (but cannot detect if $\ztwo^W$ is further broken) and the corresponding long-range correlation in $|\rho\ra$. 
These patterns of zeros are summarized in Table~\ref{tab:pattern of zero}.
We note that, in Ref.~\cite{lessa_strong--weak_2024}, a fidelity correlator has been proposed as another more robust diagnostic for the strong-to-weak SSB.

\begin{figure}[htbp]
     \centering
     \includegraphics[width=0.99\columnwidth]{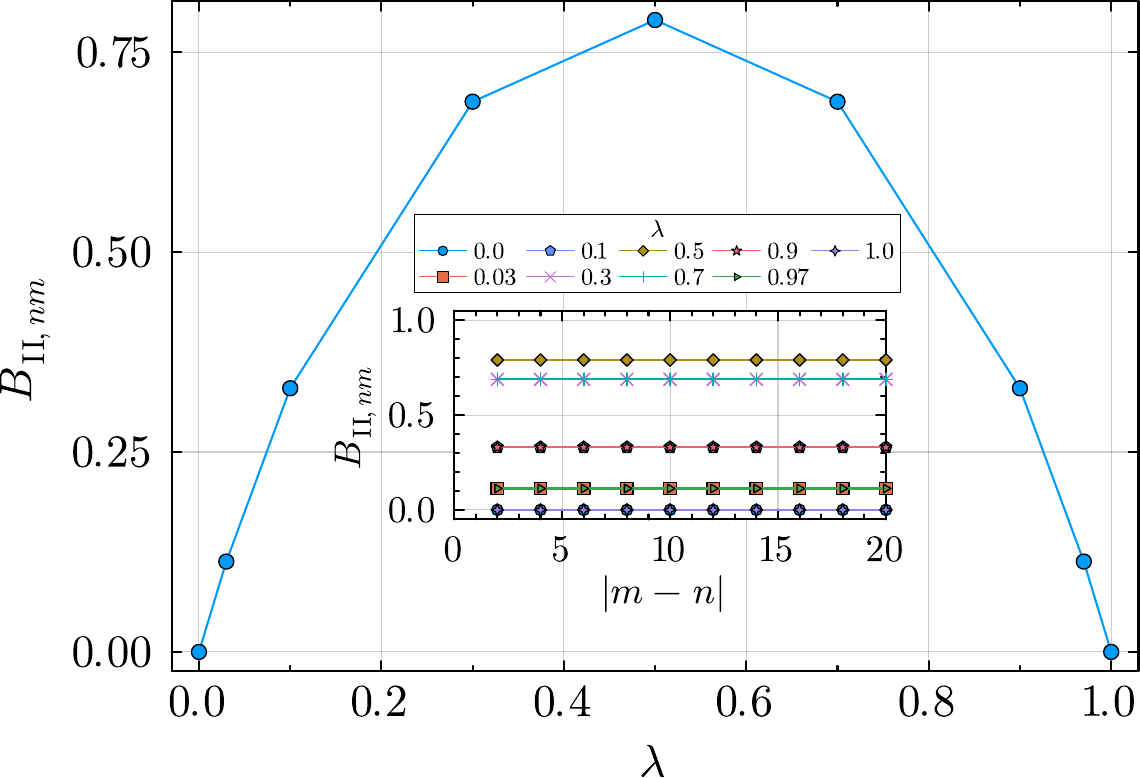}
     \caption{Connected correlator $B_{\text{II}, nm}$ for the Lindbladian that interpolates between the parent Lindbladian $\mathcal{L}_{\mathcal{C}}$ and its CZ dual, $\tilde{\mathcal{L}}_{\mathcal{C}}$ shown in Eq.~\eqref{eq:interpolating_lind}. We find that this correlator is nonzero for $0 < \lambda < 1$. This, taken together with $A_{\text{I}, nm} = A_{\text{II}, nm} = 0$, indicates that the strong symmetry on the even sites is broken to the weak symmetry. In the inset, we show $B_{\text{II}, nm}$ as a function of $|m-n|$ for various values of $\lambda$. We find that $B_{\text{II}, nm}$ is independent of $|m-n|$, but depends on $\lambda$, as shown in the main figure.
     }
     \label{fig:b2_lambda}
\end{figure}

We calculated the correlators $A_{\text{I}, nm}$,  $A_{\text{II}, nm}$, and $B_{\text{II}, nm}$ of the steady state $\rho_\lambda$ in the $s_{\text{ket}}=s_{\text{bra}}=w=+1$ symmetry sector for a system size $2N = 400$ for various perturbation strengths $\lambda$, fixing $n=190$ and varying $m \in \{192, 194, \ldots, 210 \}$. 
We find that $A_{\text{I}, nm} = A_{\text{II}, nm} = 0$ (not shown in the figures), while $B_{\text{II}, nm}$ is a nonzero constant, independent of $\abs{m-n}$ for all $0 < \lambda < 1$. 
In~\cref{fig:b2_lambda}, we plot $B_{\text{II}, nm}$ as a function of $\lambda$. $B_{\text{II},nm}$ is symmetric about $\lambda=1/2$ as expected since the operator $Z_nZ_m$ is invariant under $U_{\text{CZ}}$ conjugation while $\rho_\lambda$ maps to $\rho_{1-\lambda}$. 
$A_{\text{I}, nm} = 0$ and $A_{\text{II}, nm} = 0$ imply that the weak symmetry on the even sites $\mathbb{Z}_2^W$ (which is a subgroup of the strong symmetry on even sites $\mathbb{Z}_2^S$) is not broken. 
On the other hand, $B_{\text{II}, nm} \neq 0$ for $0<\lambda <1$ implies that the strong symmetry on even sites is broken.
Taken these two facts together, we conclude that the $\tilde{\mathcal{L}}$ perturbation leads to a strong-to-weak SSB $\ztwo^S \rightarrow \ztwo^W$ on the even sites.

We also study a more general parametrization of the Lindbladians given by
\begin{equation}\label{eqn: L012}
    \mathcal{L} (\lambda_0, \lambda_1, \lambda_2) \vcentcolon = \sum_{k \in \{0,1,2\}} (1-\lambda_k) \mathcal{L}_k + \lambda_k \tilde{\mathcal{L}}_k~,
\end{equation}
where $\mathcal{L}_k$ is composed of the jump operators $L_{k,j}$ [see~\cref{eq:parent_lind}], and $\tilde{\mathcal{L}}_k $ is composed of the jump operators $\tilde{L}_{k,j}$ [see~\cref{eq:dual_jumps}].
Note that $\mathcal{L}_{\lambda} = \mathcal{L}(\lambda, \lambda, \lambda)$.
We calculate the steady state of $\mathcal{L}(\lambda_0, \lambda_1 = \lambda_0, \lambda_2)$ for $\lambda_0, \lambda_2 \in [0,1]$ in the $s_{\text{ket}}= s_{\text{bra}}= w = 1$ symmetry sector, obtaining the six string order parameters and the three connected correlators that probe SW-SSB in the steady states.
This gives us the phase diagram shown in~\cref{fig:phase_diagram}.
We find that the steady state of $\mathcal{L}(\lambda_0, \lambda_0, 0)$ is in the nontrivial SPT phase for all $\lambda_0 \in [0,1)$.   
We discuss this in more detail in the next section (\cref{sec:weak_defects}).

In addition to the $\tilde{\mathcal{L}}_{\mathcal{C}}$ perturbation, we have also considered a Hamiltonian perturbation with the Hamiltonian proportional to $\sum_{i} Z_{2i} Z_{2i+2}$ and Lindbladian perturbations with jump operators $Z_{2i}Z_{2i+2}$, $X_{2i+1}$, and $\ket{-}\bra{+}_{2i} \otimes \ket{-} \bra{+}_{2i+2}$. 
In all of these cases, we find that both $C_{\text{I}, nm}^S$ and $C_{\text{II}, nm}^S$ decay exponentially to zero, $C_{\text{II}, nm}^W$ is a nonzero constant  independent of $|m-n|$, $A_{\text{I}, nm} = 0$, $A_{\text{II}, nm} = 0$, and $B_{\text{II}, nm}$ approaches a nonzero constant as $\abs{n-m}$ is increased. We show these plots in~\cref{app:other_perturbations} (\cref{fig:other_pert,fig:b2_other_pert}).
This suggests that SW-SSB of the strong symmetry $\ztwo^S$ on the even sites is a generic instability of the steady-state mixed-state SPT under our consideration.

\begin{figure}[htbp]
     \centering
     \includegraphics[width=0.8\columnwidth]{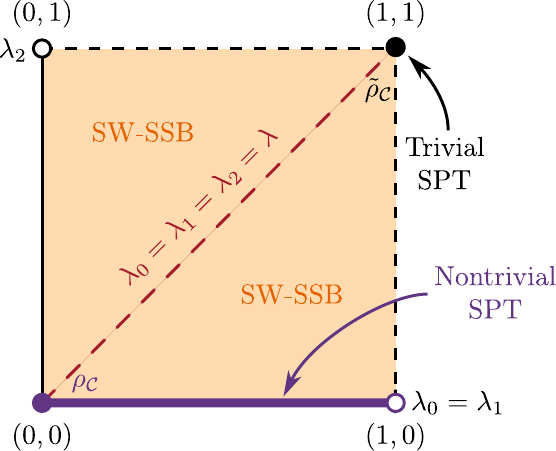}
     \caption{Phase diagram of the steady state of the Lindbladian $\mathcal{L}(\lambda_0,\lambda_1,\lambda_2)$ given in~\cref{eqn: L012}. The nontrivial SPT order is stable against weak-string defects, introduced by the perturbation $\tilde{L}_{0}$. However, introducing any strong-string defects such as $\tilde{L}_2$ leads to strong-to-weak SSB. Note the brown dashed line is the interpolated Lindbladian $\mathcal{L}_{\lambda}$ in \cref{eq:interpolating_lind}. } 
     \label{fig:phase_diagram}
\end{figure}

We remark that, because of this strong-to-weak SSB, the steady state in the $+1$ eigenvalue sector of $S$ is not an injective matrix-product state. Thus the procedure to extract the projective representation described in Ref.~\cite{pollmann2012detection} is not applicable.

\begin{table*}[t]
    \centering
    \renewcommand{\arraystretch}{1.35}
    \begin{tabular}{|c|c|c|}
    \hline
        Symbol & Defined In & Diagnoses \\\hline
        {$A_{\text{I}/\text{II},nm}$} & {\cref{eq:ZZ_correlator,eq:ZZII_correlator}} & {$\ztwo^S\times \ztwo^W \to \mathbb{Z}_1\times \ztwo^W$ SSB} \\
        {$B_{\text{II},nm}$} & {\cref{eq:ZZZZ_correlator}} & {$\ztwo^S\times \ztwo^W \to \mathbb{Z}^W_2\times \ztwo^W$ SSB}\\
        {$C^{S}_{\text{I}/\text{II},nm}$} & {\cref{eq:def_C_I_S,eq:strong_2copy_nontrivial_correlator}} & {Mixed-state SPT} \\
        {$C^{W}_{\text{II},nm}$} & {\cref{eq:weak_2copy_nontrivial_correlator}} & {Mixed-state SPT}\\\hline
    \end{tabular}
    \caption{Summary of the various correlators and the orders they detect. The subscript I/II indicates whether the correlator uses quantities that are linear or quadratic in the density matrix, where applicable.}\label{tab:correlator_summary}
\end{table*}

\begin{table*}
    \centering 
    \renewcommand{\arraystretch}{1.35}
    \begin{tabular}{|c|ccc|cccccc|}
        \hline Phase & $A_{\text{I}}$ & $A_{\text{II}}$ & $B_{\text{II}}$ & $C^S_{\text{I}}$ & $C^S_{\text{II}}$ & $C^W_{\text{II}}$ & $\tilde C^S_{\text{I}}$ & $\tilde C^S_{\text{II}}$ & $\tilde C^W_{\text{II}}$ \\\hline
        $\ztwo^S\times\ztwo^W$ trivial SPT &0&0&0&0&0&0&$+$&$+$&$+$\\
        $\ztwo^S\times\ztwo^W$ SPT &0&0&0&$+$&$+$&$+$&0&0&0\\
        $\ztwo^W\times\ztwo^W$ SSB &0&0&$+$&0&0&$+$&0&0&$+$\\
        $\mathbb{Z}_1\times\ztwo^W$ SSB &$+$&$+$&$+$&0&0&$+$&0&0&$+$\\
        \hline
    \end{tabular}
    \caption{Summary of the $\ztwo^S\times\ztwo^W$-symmetric phases discussed in this paper, along with the values of the correlators in each of the phases. The ``$+$'' sign indicates that the correlator saturates to a nonzero value in a given phase, while the ``$0$'' indicates that the correlator is zero. The correlators to the left of the dividing line are connected correlators which diagnose SSB, while the correlators to the right of the line are string order parameters which together form the ``pattern of zeros'' diagnosing SPT order. 
    }
    \label{tab:pattern of zero}
\end{table*}

\subsection{Stability against weak-string-order defects}\label{sec:weak_defects}

We have seen that adding $\tilde{\cL}_{\mathcal{C}}$ results in the destruction of the nontrivial SPT order. 
The instability of the SPT order as a steady state might be expected for the following reason. 
Recall that, in the dual picture, we can understand the establishment of the strong-string order by the reaction-diffusion process. That is, it would take $\Omega(N^2)$ time to annihilate the strong-string order defects (namely, the local configurations of $Z_{j-1}X_jZ_{j+1}=-1$ for even $j$). 
Therefore, we expect that, if there is a term in the Lindbladian effectively introducing the strong-string order defects at a constant rate, then the strong-string order in the steady-state will be destroyed. 
Indeed, we see that the jump operator $\tilde{L}_{2j}$ [defined in~\cref{eq:dual_jumps}] effectively introduces the strong-string order defects. 
We note that a recent work \cite{chirame_stable_2024} proposed to alleviate the above instability by heralding the strong-string defects.
However, the unexpected and surprising result of our work is that the destruction of the nontrivial SPT order leads to strong-to-weak symmetry breaking, instead of going to a trivial mixed-state SPT phase. 

On the other hand, we ask if the steady-state mixed-state SPT order can be stable against weak-string-order defects. To answer this question, we examine the Lindbladian $\mathcal{L}(\lambda_0,\lambda_1,\lambda_2)$ in Eq.~(\ref{eqn: L012}) with the parameters on the purple line of~\cref{fig:phase_diagram}.
This can be parameterized as $\mathcal{L}(\lambda_0, \lambda_0, 0)$ for $\lambda_0 \in [0,1]$, corresponding to the perturbations consisting of jump operators $\tilde{L}_{0,j}$ on even $j$ and $\tilde{L}_{1,j}$ on odd $j$, with an equal perturbation strength $\lambda_0 =\lambda_1$.
For $\lambda_0 = 0$, the steady state is $\rho_{\mathcal{C}}$ given in~\cref{eq:decohered_cluster_state}. 
The jump operators $\tilde{L}_{0, j}$ with $j \in \text{even}$ and $\tilde{L}_{1,j}$ with $j \in \text{odd}$ introduce domain wall defects on the odd sites, so we expect that the strong-string order parameters $C_{\text{I}, nm}^S$ and $C_{\text{II}, nm}^S$ will remain unity in the perturbed steady states, and the weak-string order parameter $C_{\text{II}, nm}^W$ will change. 
As shown in~\cref{fig:cw2_weak_symm_defects}, $C_{\text{II}, nm}^W$ remains a nonzero constant for all values of $\lambda_0 \in [0,1]$, where the calculation is done for a system of $2N = 400$ qubits with the steady state of $\mathcal{L}(\lambda_0, \lambda_0, 0)$ in the $s_{\text{ket}} = s_{\text{bra}} = w = +1$ symmetry sector.
We fix $n=182$ and vary $m \in \{ 184, 186, \ldots, 202\}$, finding that $C_{\text{II}, nm}^W$ is independent of $|m-n|$ for all $\lambda_0 \in [0,1]$.
\begin{figure}[htbp]
     \centering
     \includegraphics[width=0.9\columnwidth]{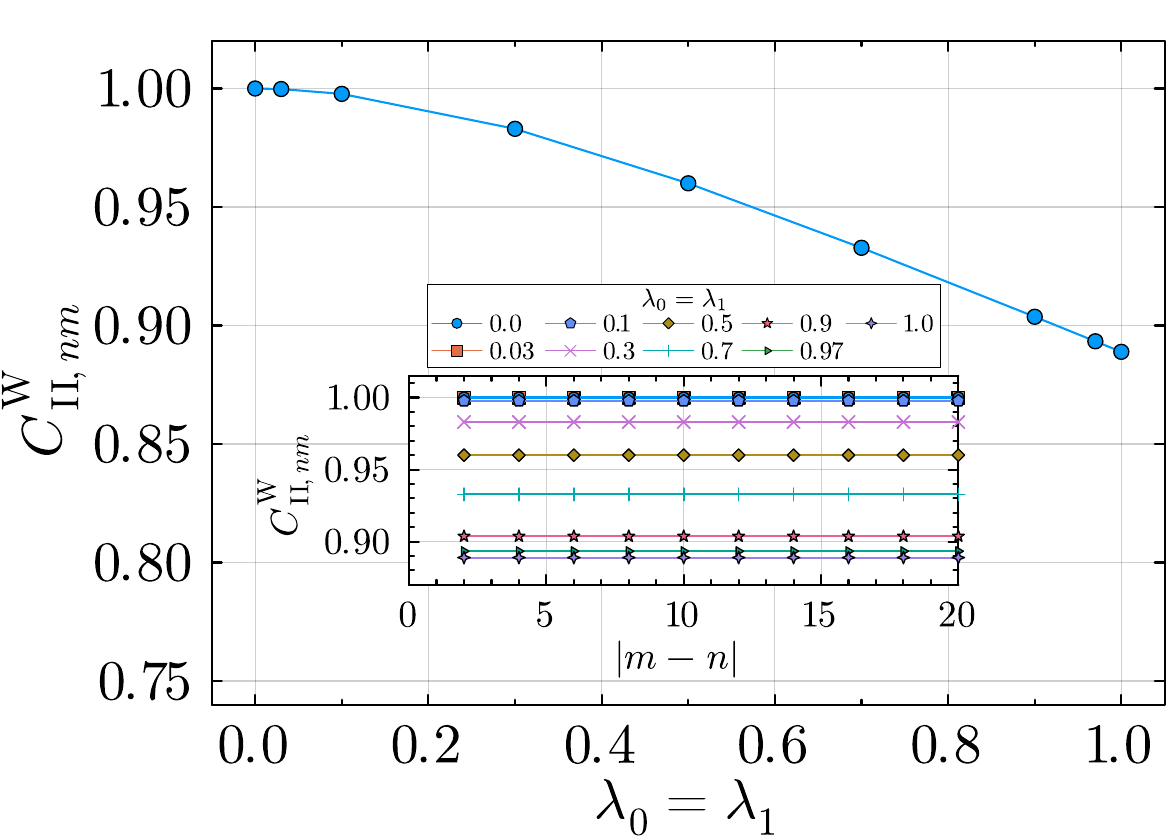}
     \caption{The weak-string order parameter $C_{\text{II}, nm}^W$ for the steady state of Lindbladian $\mathcal{L}(\lambda_0,\lambda_0,0)$ [Eq.~(\ref{eqn: L012})]. While not shown in the figure, the strong-string order parameters $C_{\text{I}, nm}^S=C_{\text{II}, nm}^S=1$. Therefore the steady state exhibits nontrivial mixed-state SPT.
     }
     \label{fig:cw2_weak_symm_defects}
\end{figure}

For $\lambda_0 \in [0,1)$, we also calculate the nontrivial strong-string order parameters $C_{\text{I}, nm}^S$ and $C_{\text{II}, nm}^S$, finding that they are one for all $\lambda_0 \in [0,1) $ and for all string lengths as expected. 
The trivial strong-string correlators $\tilde{C}_{\text{I}, nm}^S$ and $\tilde{C}_{\text{II}, nm}^S$ decay exponentially to zero while $\tilde{C}_{\text{II}, nm}^W=0$ for all $|m-n|$ and $\lambda_0 \in [0,1)$.
This pattern of string order parameters matches with the nontrivial mixed-state SPT as summarized in Table~\ref{tab:pattern of zero}.
We also find that $A_{\text{I}, nm} = A_{\text{II}, nm} = B_{\text{II}, nm}= 0$, indicating that there is no SSB.
Thus we find that the mixed-state SPT is stable to perturbations which introduce only weak-string defects.

We comment that $\mathcal{L}(\lambda_0,\lambda_0,0)$ with $\lambda_0 = 1$ is a pathological point, since the steady-state degeneracy is exponential is $N$.
We left such pathological cases as undefined, labeled with open circles in the phase diagram~\cref{fig:phase_diagram}.

\subsection{Exactly solvable perturbations}
\label{sec:exactly_solvable_perturbation}

As discussed in~\cref{sec:lindblad_mixing}, the dynamics generated by the unperturbed Lindbladian $\tilde{\mathcal{L}}_{\Cl}$ in the CZ dual picture [c.f.~\cref{eq:dual_jumps}] can be decomposed into population dynamics (namely, the diagonal part of the density matrix in the $X_{2j}$ eigenbasis) on the even sites, the effect of which can be mapped to a non-Hermitian free fermion Hamiltonian, as well as dephasing on all the odd sites.
This motivates the possibility of considering the perturbations which are exactly solvable via such a free fermion mapping, potentially furthering our understanding of the instability of the steady-state mixed-state SPT toward the SW-SSB.

As we now show, the free fermion mapping also works for a large class of dissipative perturbations to $\tilde{\mathcal{L}}_{\Cl}$, which can be solved exactly and analytically under periodic boundary conditions. Here, we discuss the manifestation of the SW-SSB, making use of the free fermion analytical solution of the perturbed Lindbladian in those cases. More specifically, we show that, for all perturbing dissipators that can be mapped to a free fermion model via the prescription described in~\cref{sec:lindblad_mixing}, the steady state either stays invariant under the perturbation or leads to a SW-SSB. 

Before discussing the mapping of perturbative dissipators to free fermion models, we first note that there exist a large group of dissipators that preserve the strong symmetry and do not affect the steady state of $\tilde{\mathcal{L}}_{\Cl}$. More concretely, from the steady state in Eq.~\eqref{eq:CZ decohered cluster}, we see that any perturbative dissipator whose jump operator $\tilde{L}_{\text{pert},2j}$ only involves a product of $X_{2j}$ on the even sites would not change the steady state. We can write such perturbed Lindbladian explicitly as $\tilde{\mathcal{L}} '_{\Cl}= \tilde{\mathcal{L}}_{\Cl}+  \gamma _{X,\{n _{\ell,j}\}} \mathcal{D} [\tilde{L}_{X,\{n _{\ell,j}\}}]$, with the jump operators $\tilde{L}_{X,\{n _{\ell,j}\}}$ given by
\begin{align}
\tilde{L}_{X,\{n _{\ell,j}\}} = 
\prod _{\ell} X_{2\ell} ^{n _{\ell,j}}
,
\end{align}
where $n _{\ell,j}$ are parameters characterizing the perturbation that take values $n _{\ell,j} =0$ or $1$. 
The jump operator $\tilde{L}_{2,j} = X_j$ in Eq.~\eqref{eq:dual_jumps} provides an example, although there are many more. Furthermore, it is also straightforward to show that, if the jump operator of the perturbation can be written as $\tilde{L}'_{j} = \prod _{\ell} X_{2\ell} ^{n _{\ell,j}} + M_{j} $ for any operator $M_j$, then the actions of $\tilde{L}_{\text{pert},j} $ and $M_{j} $ within the population sector (with respect to the $X_{2j}$ eigenbasis) are fully equivalent. We thus focus on jump operators that explicitly involve other Pauli operators.

We now consider possible dissipative perturbations that can be mapped to a free fermion non-Hermitian Hamiltonian under the correspondence $\rho \rightarrow |\rho ) \vcentcolon = \sum_{\boldsymbol{\sigma}} \rho(\boldsymbol{\sigma}) |\boldsymbol{\sigma})$ in~\cref{sec:lindblad_mixing}. In this case, it is straightforward to see that the jump operator can involve at most two spin operators acting on nearest neighbors. One can thus use this constraint along with the symmetry requirement to show that any jump operator that could be mapped to a free fermion Hamiltonian [after performing the Jordan-Wigner transformation in the space spanned by $|\rho )$] should take the following form: 
\begin{align}
\tilde{L}_{\text{pert},2j} = 
(Z _{2j} + \alpha _{j} Y _{2j} )
(Z _{2j+2} + \beta _{j} Y _{2j+2} )
,
\end{align}
which ensures that $\tilde{L}_{\text{pert},2j} \rho \tilde{L}_{\text{pert},2j} ^{\dag}$ corresponds to a quadratic fermion Hamiltonian in the $|\rho )$ space. Moreover, we  require the remaining term of the dissipator $\frac{1}{2} \{ \tilde{L}_{\text{pert},2j} ^{\dag}\tilde{L}_{\text{pert},2j}, \rho \} $ to also map to a free fermion Hamiltonian, which means that $\tilde{L}_{\text{pert},2j} ^{\dag}\tilde{L}_{\text{pert},2j} = (1+ |\alpha _{j} |^{2} + \text{Im}\;\alpha _{j}  X _{2j})(1+ |\beta _{j} |^{2} + \text{Im}\;\beta _{j}  X _{2j+2})$ cannot involve a nontrivial contribution from $X _{2j} X _{2j+2}$. We thus prove that any perturbation dissipator that can be mapped to free fermion dynamics should have a jump operator of the following form:
\begin{align}
\label{eq:fermion_pert}
\tilde{L}_{\text{pert},2j} = 
&Z _{2j} Z _{2j+2} (1+\alpha _{j} X_{2j}) 
\nonumber \\
&\text{  or  }
Z _{2j} Z _{2j+2} (1+\beta _{j} X_{2j+2})
, 
\end{align}
where $\alpha _{j}$ or $\beta _{j}$ are arbitrary complex coefficients.

As shown in~\cref{app_sec:fermion}, the effect of perturbation generated by Eq.~\eqref{eq:fermion_pert} on the steady state can always be rewritten as a rescaling of the unperturbed Lindbladian as well as a simpler perturbation $\tilde{L}_{ZZ,2j}=Z_{2j-2} Z_{2j}$. Note that, for perturbations with the specific jump operators $\tilde{L}_{ZZ,2j}=Z_{2j-2} Z_{2j}$, the dual CZ circuit does not change the form of the jump operator, so that we have  
\begin{align}
& \tilde{\mathcal{L}} '_{\Cl}= \tilde{\mathcal{L}}_{\Cl}+  \gamma _{ZZ,2j} \mathcal{D} [\tilde{L}_{ZZ,2j}]
\nonumber \\
\Leftrightarrow & 
\mathcal{L} '_{\Cl}= \mathcal{L} _{\Cl}+  \gamma _{ZZ,2j} \mathcal{D} [Z_{2j-2} Z_{2j}]
,
\end{align}
where the unperturbed Lindbladian $\mathcal{L} _{\Cl}$ without the dual CZ circuit is given by Eq.~\eqref{eq:parent_lind}. 
As such, it suffices to only consider the perturbation $L_{ZZ,2j}=Z_{2j-2} Z_{2j}$ and its effect on the steady-state mixed-state SPT order. Intriguingly, as far as the strong-symmetry string order correlators are concerned, the effect of this perturbation (in the original frame without the CZ circuit) is equivalent to adding a perturbation with the jump operator replaced by $L_{2,2j-1}=Z_{2j-2}X_{2j-1}Z_{2j}$ [see Eq.~\eqref{eq:parent_jumps}].

When the aforementioned perturbation preserves translational invariance, i.e.~when $\gamma_{ZZ,2j} =\gamma _{ZZ} $, we can solve the free fermion non-Hermitian Hamiltonian analytically. The perturbed Lindbladian can be written as 
\begin{align}
\tilde{\mathcal{L}} '_{\Cl}= \tilde{\mathcal{L}}_{\Cl}+  \gamma _{ZZ} \mathcal{D} [\tilde{L}_{ZZ,2j}]
.
\end{align}
In this case, we can explicitly compute the \renyi-2 strong-string order parameter $C_{\text{II},nm}^S $, as well as the connected correlator $B_{\text{II}, nm} $ in the perturbed state as 
\begin{align}
C_{\text{II},nm}^S & = 
\langlee \mathcal{S}_{nm}^S \otimes \IId  \ranglee_\rho = 
\left ( \frac{1}{1+2\gamma _{ZZ} } \right ) ^{|m-n|}
, \\
& B_{\text{II}, nm}  
= 1- \frac{1  }{(1+2\gamma _{ZZ} ) ^{2}}
. 
\end{align}
We thus see that any free-fermion-type perturbation of the form given in Eq.~\eqref{eq:fermion_pert} will lead to an exponentially decaying strong-string order parameter, as well as a finite connected correlator, signaling a strong-to-weak SSB in the perturbed Lindbladian steady state. 
We conjecture that strong-to-weak SSB is a generic feature of Lindbladian steady states under weak local perturbations (that have a nontrivial impact on the steady state). 

We would like to point out that the exactly solvable mapping only applies to certain types of  perturbations. For a more general type of  perturbations, we refer to Appendix~\ref{app_sec:perturbation}, where we discuss  generic perturbation theory which can capture the physics at small perturbation strengths.

\section{Clifford circuit realization}\label{sec_clifford_simulation}

So far, we have discussed the steady states of a dynamical open system modeled by a Lindbladian. 
While there are numerous methods to simulate  Lindbladian dynamics on a quantum computer or a quantum simulator~\cite{kliesch2011dissipative,an2023linear,purkayastha2021periodically}, here we propose to replicate the essential physics via a local quantum channel instead of trying to simulate the Lindbladian. 
Furthermore, we construct a quantum channel that can be realized using Clifford gates, Pauli measurements, and feedback, allowing us to simulate the quantum dynamics efficiently on a classical computer. 
It is straightforward to add operations such that the dynamics is no longer efficiently simulable on a classical computer. 
We note that all Clifford simulations performed in this work were done using the Python package \emph{stim}~\cite{gidney2021stim:-a-fast-st}.

Recall that the parent Lindbladian $\mathcal{L}_{\mathcal{C}}$ consists of the jump operators that stabilize the domain-wall configurations ($L_{0,2j}$), decohere them ($L_{1,2j-1}$), and make their probabilities uniform ($L_{2,2j-1}$).
These actions can be realized by the quantum channel that we now describe.

To stabilize the domain-wall configuration, we use the following measurements and feedback. 
If $j$ is even, we measure $Z_{j-1}X_jZ_{j+1}$; if the measurement outcome is $-1$, we apply the unitary $X_{j+1}$. This can be written as a quantum channel with the following Kraus representation:
\begin{equation}
    \mathcal{E}_{2j}(\rho)=\sum_{\mu=\pm}K_{\mu,2j}\rho K^{\dagger}_{\mu,2j}~,
\end{equation}
where $K_{+,2j}=\frac{1}{2}(\IId  +  Z_{2j-1} X_{2j} Z_{2j+1})$, $K_{-,2j}=\frac{1}{2}X_{2j+1}(\IId  -  Z_{2j-1} X_{2j} Z_{2j+1})$, and $j \in \{1,2,\ldots, N\}$.

The action of $L_{1,2j-1}$ and $L_{2,2j-1}$ can be mimicked by the following action. If $j$ is odd, we measure $Z_{j}$ and then apply the unitary $Z_{j-1}X_jZ_{j+1}$ regardless of the measurement outcome, resulting in the quantum channel with a Kraus representation
\begin{equation}
    \mathcal{E}_{2j-1}(\rho)=\sum_{\mu=\pm}K_{\mu,2j-1}\rho K^{\dagger}_{\mu,2j-1}~,
\end{equation}
where $K_{\pm,2j-1}=Z_{2j-2}X_{2j-1}Z_{2j}\frac{1}{2}(\IId  \pm  Z_{2j-1} )$ for $j \in \{1,2,\ldots, N\}$.
Note that this channel can also be realized by the following action without a measurement, where we apply the unitary $Z_{2j-2}X_{2j-1}Z_{2j}$ or $Z_{2j-2}Y_{2j-1}Z_{2j}$ with an equal probability when $j \in \{1,2,\ldots, N\}$.

With the above building blocks, we consider the following channel:
\begin{equation}
    \mathcal{E}=\frac{1}{2N}\sum_{j=1}^{N}(\mathcal{E}_{2j-1}+\mathcal{E}_{2j})~,
\end{equation}
and we consider the steady states of $\mathcal{E}^{t}$ (i.e.\ $\mathcal{E}$ applied $t$ times) when the number of steps $t \rightarrow \infty$. 
Note that $\mathcal{E}$ has the desired $\ztwo^{S} \times \ztwo^{W}$ symmetry.
The channel $\mathcal{E}$ can be realized by the following procedure. For each step, pick a site $j$ from $1, 2,  \ldots , 2N$ with an equal probability and implement the aforementioned protocol corresponding to $\mathcal{E}_{j}$ depending on $j$ being even or odd.
 $\mathcal{E}^{t}$ corresponds to repeating the above procedure $t$ times, which is also the unit of the time step we use.

\subsection{Mixing time}~\label{sec:clifford_mixing}
The first quantity we examine is the mixing time of the (unperturbed) quantum channel $\mathcal{E}$.
As one can easily verify, $\rhopp$ is a steady state of $\mathcal{E}^{t = \infty}$ in the $(s_{\text{bra}},s_{\text{ket}},w)=(+1,+1,+1)$ sector, whose strong-string order parameter satisfies $C_{\text{I}, nm}^S = 1$.
The mixing time is defined as the supremum, over all initial states, of the time required for the system to approach the/a steady state within a specified accuracy. Determining the supremum over all initial states is computationally difficult to do. 
Since the parent Lindbladian maps to a ``reaction-diffusion'' process (see Section~\ref{sec:lindblad_mixing}) whose mixing time is proportional to $N^2$, we expect the same scaling for the mixing time for the channel defined here.
Since the mixing time reflects how fast the system equilibrates, we instead use the following proxy for mixing time.
We start with an initial pure state $|\psi_0\ra = \bigotimes_{j=1}^{2N}\ket{+}_j$ in which $C_{\text{I}, nm}^S = 0$. 
The repeated application of the channel $\mathcal{E}$ grows $C_{\text{I}, nm}^S$ from $0$ to $1$.
For a given threshold $\eta$ that is close to $1$, we define the proxy mixing time to be the time (number of steps) that it takes for $C_{\text{I}, nm}^S$ to grow from 0 to $\eta$.
Note that the true mixing time does not depend on an observable, however the proxy for the mixing time here depends on the observable, which in our case is $C_{\text{I}, nm}^S$.
\begin{figure}[tb]
\captionsetup[subfigure]{labelformat=empty,captionskip=-20pt}
     \centering
     \hfill
 \subfloat[\label{fig:mixing_time}]{
  \includegraphics[width=0.47\columnwidth]{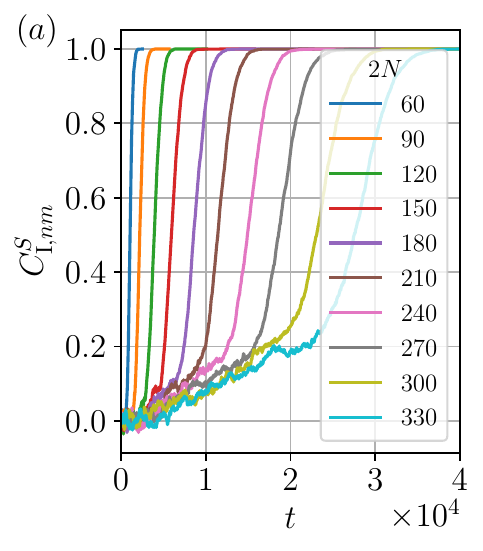}
}
     \hfill
 \subfloat[\label{fig:mixing_time_quadradic}]{%
  \includegraphics[width=0.47\columnwidth]{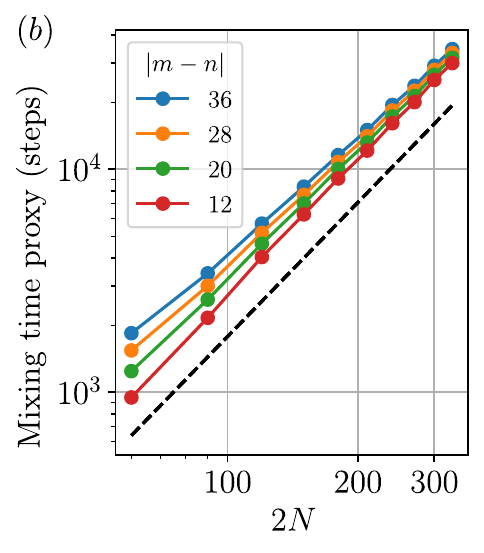}
}
     \hfill
        \caption{$(a)$ Growth of the string correlator $C_{\text{I}, nm}^S$ as a function of the time step $t$ for string length $|m-n| = 28$. The initial state is $\ket{+}^{\otimes 2N}$ while the state approaches $\rhopp$ as $t \rightarrow \infty$. Legend indicates the total system sizes $2N$.
        $(b)$ A log-log plot of the proxy for the mixing time (time taken for $C_{\text{I}, nm}^S$ to grow to $\eta = 0.95$ starting from the initial state $\ket{+}^{\otimes N}$) as a function of the number of qubits $2N$ for various string lengths. The 
        legend indicates the length of the string $|m-n|$, and the black dashed line has a slope of two described by $\text{time} \propto N^2$. We see that the curves for all the shown string lengths approach a line parallel to the black dashed line asymptotically.
        }
        \label{fig:mixing_time_combined}
\end{figure}
Picking $\eta = 0.95$, in~\cref{fig:mixing_time}, we plot $C_{\text{I}, nm}^S$ as a function of time for system sizes ranging from $2N = 60$ to $2N=330$ and for the end points of the string being at $n = 2 \lfloor \frac{N}{2} \rfloor  - 21$ and $m = 2 \lfloor \frac{N}{2} \rfloor  +7$, so $|m-n| = 28$.
In~\cref{fig:mixing_time_quadradic}, we plot the mixing time as a function of system size for various string lengths. We fix $n = 2 \lfloor \frac{N}{2} \rfloor  - 21$ and vary $m$ to obtain strings of different lengths. For each system size and string length, we average over $5000$ samples and consider open boundary conditions.
We find that the mixing time is approximately $\sim N^2$ asymptotically, which can be understood by thinking of the distance between the defects effectively undergoing a random walk in 1d. 
Namely, each local channel element has a probability to pick a defect and move it to the right along the chain.
For a pair of defects, the distance between them shrinks or grows depending on which defect is picked to move to the right.
To approach the steady state $\rhopp$, the defects need to pair up and annihilate, where the defects can be separated by a distance of order $N$ at late times.
Because the standard deviation of the distance of a random walk scales as the square root of time, we expect it to take time $\sim N^2$ to annihilate all defects on average.
The $\sim N^2$ mixing time indeed is one of the features of the parent Lindbladian we discussed in ~\cref{sec:lindblad_mixing}, despite the fact that the Lindbladian and the quantum channel with open boundary conditions are gapped.

\subsection{Steady state of perturbed quantum channels}

Recall that we studied the steady-state properties of the interpolated Lindbladian $\mathcal{L}_\lambda = (1-\lambda) \mathcal{L}_{\mathcal{C}} + \lambda \tilde{\mathcal{L}}_{\mathcal{C}}$ for $\lambda \in [0, 1]$, where $\tilde{\mathcal{L}}_{\mathcal{C}}$ is the parent Lindbladian of the trivial SPT state  $\tilde{\rho}_{\mathcal{C}}$ obtained via the $U_{\text{CZ}}$ conjugation.
Here we also consider a similar interpolation achieved by the following channel:
\begin{equation}
    \mathcal{E}_{\lambda}=(1-\lambda)\mathcal{E} + \lambda \tilde{\mathcal{E}}~,
\end{equation}
where $\tilde{\mathcal{E}}$ is the quantum channel of $\mathcal{E}$ conjugated by $U_{\text{CZ}}$.
Since $U_{\text{CZ}} X_{j} U_{\text{CZ}} = Z_{j-1}X_jZ_{j}$ and $U_{\text{CZ}}$ commutes with $Z_j$, we can infer the actions of $\tilde{\mathcal{E}}_{j}$ in $\tilde{\mathcal{E}}$ and the implementation protocol accordingly.
More specifically, the quantum channel $\mathcal{E}_{\lambda}$ corresponds applying $\mathcal{E}$ or $\tilde{\mathcal{E}}$ with a probability $(1-\lambda)$ and $\lambda$, respectively, at each step.
If we need to apply $\tilde{\mathcal{E}}$, then we just need to follow the procedure outlined for $\mathcal{E}$ but replacing all $Z_{j-1}X_jZ_{j}$ by $X_j$ and vice versa.
We again are interested in the steady state of $(\mathcal{E}_{\lambda})^t$ when $t \rightarrow \infty$, which can again be efficiently simulated on a classical computer using Clifford simulation.

\begin{figure}[tb]
     \centering
     \includegraphics[width=\columnwidth]{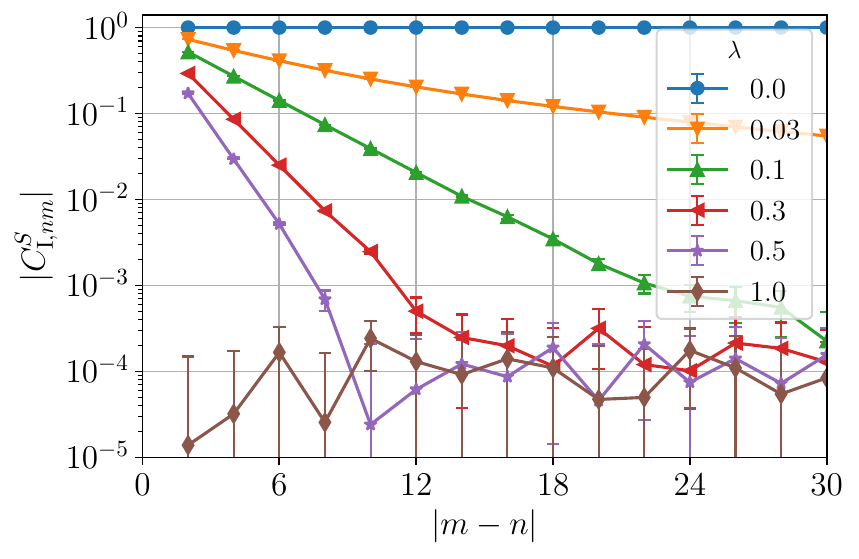}
     \caption{The string correlator $C_{\text{I,nm}}$ as a function of $|m-n|$ for a closed chain of 100 qubits. The legend shows the values of $\lambda$. For $0<\lambda <1$, its exponential decay to zero indicates that the steady state does not have the mixed-state SPT order.
     The error bars indicate the standard error in the string correlator in our numerics.
     The large error bars around $C_{\text{I}, nm}^S \lesssim 10^{-4}$ are due to the finite number of samples used in our Clifford simulation.}
     \label{fig:cs1_clifford}
\end{figure}

We examine various correlators of the steady state of the perturbed channel $\mathcal{E}_{\lambda}$ to determine its phase. 
First, we examine its $\ztwo^S \times \ztwo^W$ mixed-state SPT order via the string order parameters.
We plot $C_{\text{I}, nm}^S$ in~\cref{fig:cs1_clifford} for a system of $2N =100$ qubits with open boundary conditions for $n=11$ and various $m$, where the data is obtained by averaging over $9900$ samples for each string length and each $\lambda$. Here each sample is calculated by averaging the value of $C_{\text{I}, nm}^S$ in the stabilizer state for $2.4\times 10^{6}$ time steps after the steady state has been reached. We find that the string correlator $C_{\text{I}, nm}^S$ decays exponentially with the string length for $\lambda \neq 0$, indicating that the steady state does not possess nontrivial mixed-state SPT order. 
This is again similar to the interpolated Lindbladian we studied in~\cref{sec_perturbed_Lindbladian}.

\begin{figure}[tb]
     \centering
     \includegraphics[width=\columnwidth]{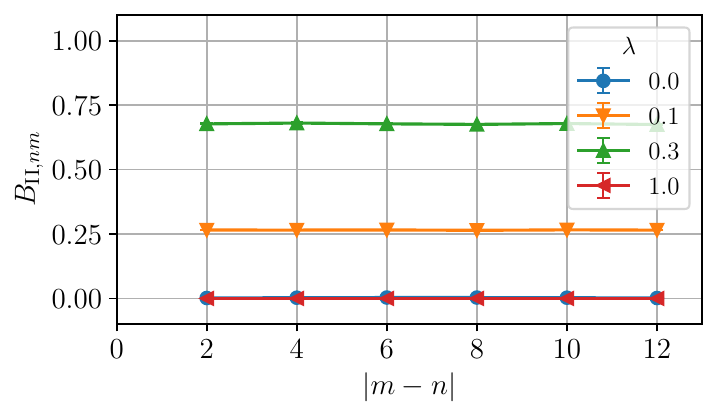}
     \caption{The R\'eyni-2 correlator $B_{\text{II}, nm}$ of the steady states of $\mathcal{E}_{\lambda}$ for various $\lambda$ and system size $2N=14$. The legend indicates the perturbation strength $\lambda$.
     The nonzero value of $B_{\text{II}, nm}$, together with $A_{\text{I}, nm}=A_{\text{II}, nm}=0$, indicates that the steady state exhibits SW-SSB.
     Note that the error bars of the data are within the size of the symbols, and the $\lambda =1.0$ data points are on top of the $\lambda =0.0$ data points.
     }
     \label{fig:b2_clifford}
\end{figure}

As discussed previously, we expect that the destruction of the $\ztwo^S \times \ztwo^W$ mixed-state SPT order comes from the strong symmetry $\ztwo^S$ being spontaneously broken.
While not shown in the figures, we confirm this by calculating the $A_{\text{I}, nm}$ correlator, finding that it is exactly zero for all $\lambda$. 

To see if the strong symmetry is broken down to the weak symmetry,
we need to calculate the $B_{\text{II},nm}$ correlator. 
To this end, we need to calculate the purity $\Tr(\rho^2)$ for the denominator and $\Tr(\rho Z_nZ_m \rho  Z_nZ_m)$ for the numerator. 
Since the Clifford simulation generates $\rho=\sum_{i}p_i|\psi_i\ra \la \psi_i|$ by sampling a stabilizer state $\vert\psi_i\ra$ with probability $p_i$, we can calculate the purity as $\Tr(\rho^2)=\sum_{i,j}p_ip_j |\la \psi_i|\psi_j\ra|^2$ by averaging $|\la \psi_i|\psi_j\ra|^2$ over realizations of two-copies of $\rho$. 
Likewise, quantities like $\Tr(\rho A\rho B)=\sum_{i,j}p_ip_j \la \psi_i|A|\psi_j\ra\la \psi_j|B|\psi_i\ra$ can be calculated by averaging $\la \psi_i|A|\psi_j\ra\la \psi_j|B|\psi_i\ra$ over realizations of two copies of $\rho$. 
The quantities $|\la \psi_i|\psi_j\ra|^2$ and $\la \psi_i|A|\psi_j\ra\la \psi_j|B|\psi_i\ra$ can be calculated efficiently if $|\psi_i\ra$ is a stabilizer state and $A$, $B$ are Pauli operators. This is shown in \cref{app:stabilizer}, and we outline the algorithm in \cref{alg:stab-overlap}.

However, notice that, for the decohered cluster state $\rho_{\cal{C}}$, the purity is $\Tr(\rho^2)=2^{-N}$. 
This indicates that, in order to obtain a fixed multiplicative accuracy for the purity, one needs the number of samples to grow exponentially with $N$.
We expect this would also generally be the case for quantities like $\Tr(\rho A \rho B)$ and for steady states at nonzero $\lambda$.
Therefore, we only compute $B_{\text{II},nm}$ for small sizes $2N$.

\cref{fig:b2_clifford} shows a plot of the $B_{\text{II}, nm}$ correlator for a closed chain of 14 qubits. For each $\lambda$ and value of $|m-n|$ in the plot, we average over between $2\times 10^5$ and $10^6$ samples. 
We find that $B_{\text{II},nm}$ is nonzero for $0<\lambda <1$. While not shown in the figures, we also calculate $A_{\text{II}, nm}$ and find that $A_{\text{II}, nm}=0$ exactly.
This indicates that the steady state $\rho_{\lambda}$ for $0 < \lambda < 1$ indeed has SW-SSB.

\subsection{Decay time}
We have seen in~\cref{sec:clifford_mixing} that the time scale for a trivial state to reach the nontrivial mixed-state SPT state $\rhopp$ scales as $\sim N^2$ in system size.
It is therefore interesting to examine the time scale for $\rhopp$ to reach the SW-SSB steady state under the perturbed quantum channel, and we characterize this ``decay time'' via the string order parameter $C_{\text{I}, nm}^S$.
As shown in~\cref{fig:decay_time}, $C_{\text{I}, nm}^S$ starts at one in $\rhopp$ and decays exponentially with time $t$.
The exponential-decay form allows us to extract the decay time scale by a linear fit of $\ln{C_{\text{I}, nm}^S}$.
In~\cref{fig:decay_time_linear}, we plot this decay time for a string of length $|m-n|=38$ and $\lambda = 0.1$ as a function of system size.
Intriguingly, we numerically observe that the decay time scales almost linearly in $N$. It would be interesting to find a simple explanation for this behavior.

\begin{figure}[tb]
\captionsetup[subfigure]{labelformat=empty,captionskip=-20pt}
     \centering
     \hfill
 \subfloat[\label{fig:decay_time}]{%
  \includegraphics[width=0.49\columnwidth]{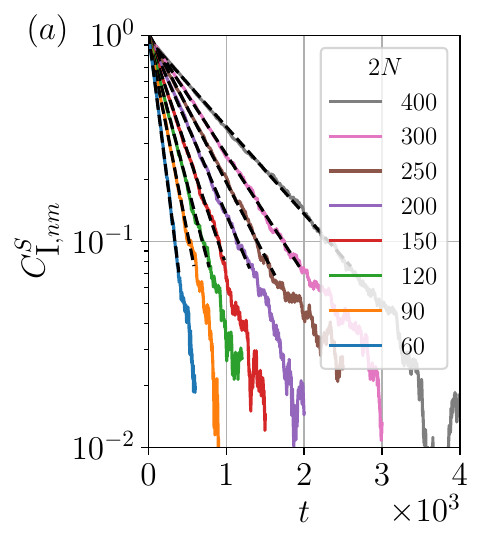}%
}
     \hfill
 \subfloat[\label{fig:decay_time_linear}]{%
  \includegraphics[width=0.49\columnwidth]{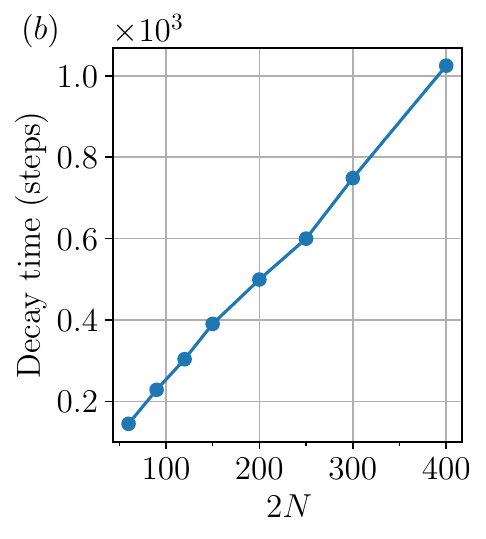}%
}
     \hfill
        \caption{ 
        $(a)$ The destruction of the mixed-state SPT order probed by $C_{\text{I}, nm}^S$ as a function of $t$ under the perturbed channel $\mathcal{E}_{\lambda}$ with the initial state $\rhopp$. 
        We take $\lambda$ to be $0.1$ and keep the string length fixed at $|m-n| = 38$.
        The legend indicates the total system size $2N$.
        The exponential decay of $C_{\text{I}, nm}^S$ enables us to extract the time scale via a linear fit to $\ln{C_{\text{I}, nm}^S}$.
        $(b)$ The extracted decay time for various system sizes $2N$. We observe that the decay time scales roughly linearly in system size $N$.
        }
       
\end{figure}

\subsection{Trajectory dynamics}

The density matrix at each time step of the channel can be understood as an average over various trajectories, corresponding to different probabilistic outcomes up to that point in the channel's history. Each of these trajectories exhibits the reaction-diffusion dynamics discussed in~\cref{sec:lindblad_mixing}. In order to understand the trajectories of the perturbed channel $\mathcal E_\lambda$, we plot the dynamics of defects on even sites. 
Specifically, we simulate $\mathcal E_\lambda$, and at each time step we compute the expectation value of $X_i$ for each even site $i$.
When the expectation value is $-1$ (denoting in black in Fig.~\ref{fig:traj-dynamics}), we say that site $i$ has a defect. 
The dynamics are such that defects are created and destroyed in pairs.
We also plot the defect density---the number of defects divided by $2N$---as a function of depth.

As we see in the upper panel of~\cref{fig:traj-dynamics}, defects are created in pairs and propagate probabilistically in a single direction, namely toward the higher numbered sites. Two defects can annihilate if they meet one another, otherwise they continue to propagate. These dynamics result in a nonzero average density of defects at long times, as can be seen in the lower panel of~\cref{fig:traj-dynamics}. The instability of the mixed-state SPT order can be understood as this average density of defects remaining finite for arbitrarily small values of $1-\lambda$, which results in the destruction of the string order.
\begin{figure}[tb]
    \centering
\includegraphics[width=\linewidth]{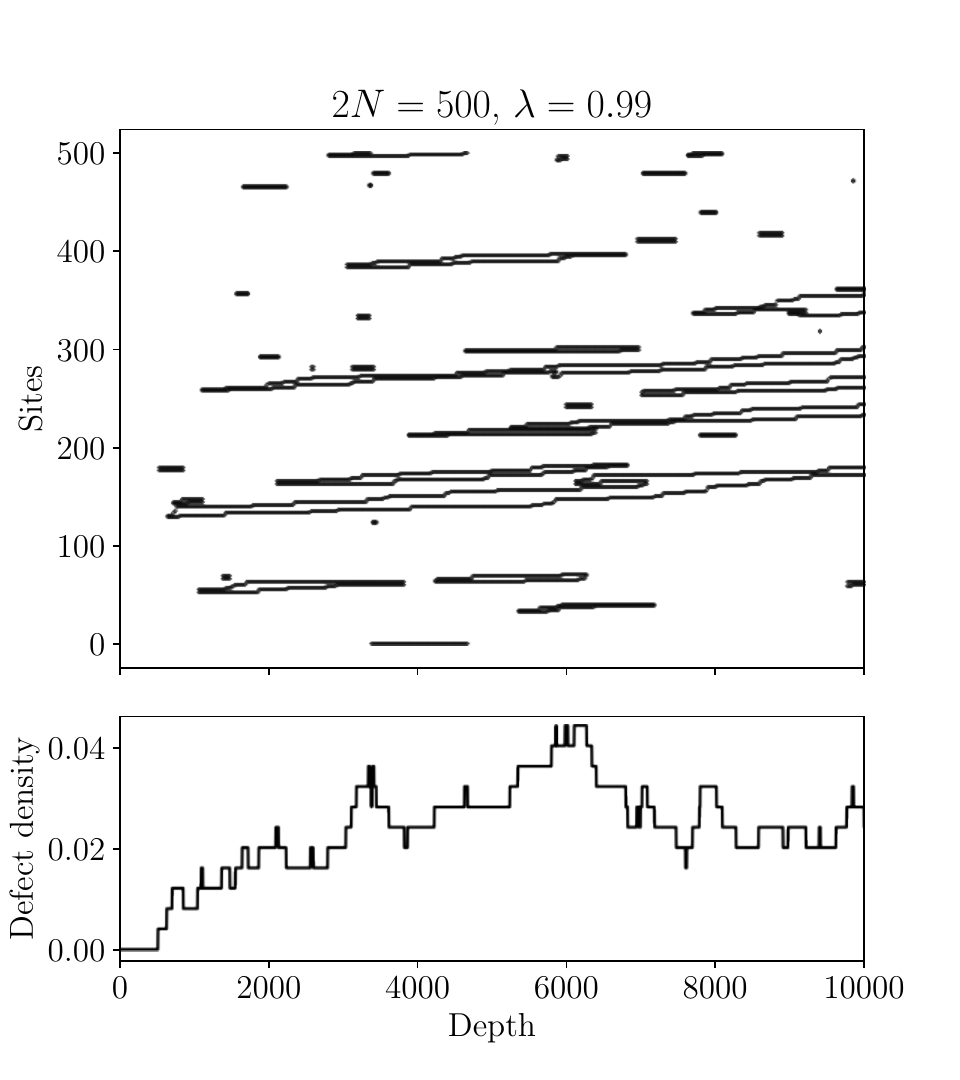}
    \caption{We plot the dynamics of defects (top) and the density of defects (bottom) for a single trajectory as a function of the depth of the channel $\mathcal E_\lambda$ with $\lambda=0.99$ and with periodic boundary conditions on $2N = 500$ qubits. Black on an even site $i$ denotes an $X_i$ expectation value of $-1$.}
    \label{fig:traj-dynamics}
\end{figure}

\section{Discussion}\label{sec_discussion}

Focusing on the decohered cluster state, we examined some of its characteristics as a mixed-state SPT order, including the bulk-edge correspondence, nontrivial string order, and the corresponding edge states via the matrix-product density operators representation.
We then constructed a parent Lindbladian that hosts the decohered cluster state as one of several steady states. Through DMRG and mappings to exactly solvable reaction-diffusion dynamics, we found that the steady-state mixed-state SPT order is unstable to arbitrary small symmetric perturbations, a stark difference from the pure state case. The instability is characterized by the onset of strong-to-weak spontaneous symmetry breaking, which has no counterpart in pure state physics. Based on the dual perspective provided by $U_{\text{CZ}}$ conjugation, we were able to interpret the destruction of SPT order as a consequence of the proliferation of strong-string-order defects. On the other hand, we found that SPT order is robust against perturbations that generate only weak-string-order defects. We further designed a quantum channel that reproduces the key features of the interpolated Lindbladian we studied. 
This channel can be efficiently simulated using only Clifford gates, Pauli measurements, and feedback. Using this method to measure observables, we found qualitative agreement with aforementioned results measured in the Lindbladian framework via DMRG. In addition, the quantum channel simulation confirms our expectation of the mixing time scaling while also giving us an intriguing decay time result.

Using the decohered cluster state as a test case, we established the instability of the steady-state mixed-state SPT phase in 1d and of its associated steady-state degeneracy. 
In particular, the instability comes about due to proliferation of point-like defect-pairs in 1d. 
These point-like defects must propagate a distance extensive in the system size to meet another defect and annihilate with it. 
With a restriction to local Lindbladians, or without the global information of the defect locations, the best one can do to annihilate such defects is to make them wander randomly.
As pointed out in a recent work~\cite{chirame_stable_2024}, one way to alleviate this is by making the defect heralded.
We expect that the instability due to such point-like defect proliferation is generic for mixed-state SPT order in 1d.

On the other hand, in 2d, the defects can be line-like rather than point-like. 
In this case, defect loops can annihilate by contracting to a point rather than fusing with one another, the time-scale for which is no longer extensive in system size. We therefore expect that higher dimensional generalizations of the decohered cluster state---and perhaps steady-state SPT order and the corresponding steady-state degeneracy more generally---may be more stable to local symmetric perturbations. 

Another direction that warrants further study is the generalization of the physics we discussed to bosonic systems. 
For example, the spontaneous strong symmetry breaking down to nothing has been generalized to bosonic open quantum systems, leading to a degenerate steady-state manifold~\cite{lieu2020symmetry}. 
While we do not expect that the SW-SSB would lead to additional degeneracy other than the ones guaranteed by the strong symmetry, it is still worth investigating if such a phenomenon can be realized in a bosonic system, or even by Gaussian states, which can be readily realized by linear optics.

Our work provides a concrete example that helps pave the way for classifying steady-state phases via their parent Lindbladians~\cite{rakovszky_defining_2024} instead of the density matrices themselves. 
One may have thought that adding local symmetric perturbations to the parent Lindbladian cannot lead to any instability. However, we have provided a wide class of examples where this is not the case. It therefore remains an open question of how to construct all potential symmetric perturbations under which the steady-state phases are guaranteed to be stable. Another natural extension of our work is to investigate steady-state phases of other types, including the cases of mixed-state phases of intrinsic topological orders or with symmetry enriched intrinsic topological orders~\cite{bao2023mixed-state,lee2023quantum,ellison_towards_2024}.
Another promising direction is to understand the mixed-state SPT order or the transition to SW-SSB at the level of individual quantum trajectories, and to examine whether a particular choice of unraveling of the Lindbladian or quantum channel provides an advantage towards this understanding.
For example, it was pointed out that the presence of strong symmetry in the dynamics of an open quantum system can lead to dissipative freezing~\cite{sanchez2019symmetries,tindall2023onthegenerality,munoz2019nonstationary,halati2022breaking} on quantum trajectories. 
Examining how this phenomenon interacts with the mixed-state SPT and SW-SSB could provide us with a novel approach to detect these phases.

\begin{acknowledgments}
We thank Meng Cheng, Tyler Ellison, Zhi Li, Ruochen Ma, Connor Mooney, Alex Turzillo, and Yizhi You for valuable discussions and feedback. 
C.-J.L.~acknowledges support from the National Science Foundation (QLCI grant OMA-2120757).
J.D.W.~acknowledges support from the United States Department of Energy, Office of Science, Office of Advanced Scientific Computing Research, Accelerated Research in Quantum Computing program, and also NSF QLCI grant OMA-2120757. 
Y.-Q.W.~is supported by a JQI postdoctoral fellowship at the University of Maryland. Y.-X.W.~acknowledges support from a QuICS Hartree Postdoctoral Fellowship.
A.V.G., C.F., J.T.I., J.S., and B.W.~were supported in part by the DoE ASCR Accelerated Research in Quantum Computing program (awards No.~DE-SC0020312 and No.~DE-SC0025341), AFOSR MURI, DARPA SAVaNT ADVENT, NSF QLCI (award No.~OMA-2120757), the DoE ASCR Quantum Testbed Pathfinder program (awards No.~DE-SC0019040 and No.~DE-SC0024220), and the NSF STAQ program. Support is also acknowledged from the U.S.~Department of Energy, Office of Science, National Quantum Information Science Research Centers, Quantum Systems Accelerator. C.F.~also acknowledges support from NSF DMR-2345644 and from the NSF QLCI grant OMA-2120757 through the Institute for Robust Quantum Simulation (RQS). The authors acknowledge the University of Maryland supercomputing resources (\href{http://hpcc.umd.edu}{http://hpcc.umd.edu}) used in this work.
\end{acknowledgments}

\section*{Data Availability}
All data generated and code used in this work are available at:
\href{https://doi.org/10.5281/zenodo.17451966}{10.5281/zenodo.17451966}

\bibliographystyle{quantum}
\bibliography{Mixed_Cluster}

\onecolumngrid
\appendix

\section{A Lindbladian definition of mixed state phases }\label{app:string_lieb_robinson}
In this appendix, we show that $\rho_\mathcal{C}$ is also a nontrivial SPT state  according to the ``fast-driving Lindbladian'' definitions of Refs.~\cite{coser2019classification, degroot2022symmetry}, complementing the finite-depth quantum channel result obtained in Refs.~\cite{ma2023average,ma2023topological}.
In Ref.~\cite{coser2019classification}, two mixed states are defined to be in the same phase if there are local Lindbladian evolutions which can rapidly bring the two states close to each other.
While Ref.~\cite{coser2019classification} essentially implies that all the mixed states are in the ``trivial'' phase in 1d if the symmetry of the Lindbladian evolution is weak, Ref.~\cite{degroot2022symmetry} demonstrates that one can have nontrivial SPT states by requiring that the Lindbladian evolution respects the strong symmetry of the system. Below, we review the definition of two states being in the same phase according to Refs.~\cite{coser2019classification, degroot2022symmetry}.
\begin{definition}[Fast Driving] \label{Def:Phase_of_Matter}
    Given two states $\rho_0$ and $\rho_1$, we say that $\rho_1$ can be driven fast to $\rho_0$ if there exists a geometrically local time-independent Lindbladian $\cL_0$ such that
    \begin{align*}
        \norm{e^{\cL_0 t}(\rho_0) - \rho_1}_1 &\leq \poly(N)e^{-\gamma t}~
    \end{align*}
    for some time $t \gtrsim polylog(N)$,
    where $\gamma=O(1)$ and the trace distance $\|X\|_1 \vcentcolon = \tr[\sqrt{X^{\dagger}X}]$.
    We denote it as $\rho_0 \xrightarrow[]{\cL_0} \rho_1$
    
\end{definition}

\begin{definition}[Lindbladian Phase of Matter]
\label{Def:Coser_Phase}
    We say that $\rho_0$ and $\rho_1$ are in the same phase if there exist geometrically  local symmetric time-independent Lindbladians $\cL_0$ and $\cL_1$ such that $\rho_0 \xrightarrow[]{\cL_0} \rho_1$ and $\rho_1 \xrightarrow[]{\cL_1} \rho_0$.
\end{definition}
\noindent Here, by a geometrically local Lindbladian, we refer to a Lindbladian that can be expressed as a sum of local terms~\cite{coser2019classification}:
\begin{equation}
    \mathcal{L} = \sum_{X \in \Lambda} \mathcal{L}_{X}~, \;\;\; \norm{\mathcal{L}_X}_{\text{op}} \sim O(1)~,  \;\;\; \text{supp}[\mathcal{L}_X] = X~, \;\;\; |X| < k
\end{equation}
for some fixed constant $k$, where  $\Lambda$ denotes the lattice, and $|X|$ refers to the diameter of the subset $X$. 
As noted in Ref.~\cite{coser2019classification}, this definition may be generalized to quasi-local Lindbladians in certain contexts.
The intuition behind this definition is that, if one state can be reached by a $\log(N)$-time evolution from another state, then the long-range correlations in the two states must be similar.
Furthermore, this rapid mixing condition implies stability of local observables to local perturbations in the Lindbladian terms when away from the boundaries of the phase~\cite{cubitt2015stability, onorati2023provably}.
Moreover, it is a natural and physically motivated generalization of the Hamiltonian definition of phases to open systems.
Indeed, Ref.~\cite{coser2019classification} demonstrates that this Lindbladian definition encompasses that standard gapped-Hamiltonian phase of matter. 
Furthermore, it can be shown that it encompasses high-temperature thermal phases using the Lindbladian of Refs.~\cite{chen2023efficient, rouze2024efficient}, and this even applies to families of states that are classically intractable to prepare \cite{bergamaschi2024quantum, rajakumar2024gibbs}.
 
 Compared to previous works using this definition \cite{coser2019classification, onorati2023provably}, we have removed the auxiliary system as it is not required for this work.
However, anything that satisfies this more restricted definition of phase must also satisfy the definition of phase in Ref. \cite{coser2019classification}.
We also note recent work in Ref. \cite{sang2024stability} where it has been shown that this Lindbladian definition implies that two states are in the same phase if one can smoothly move between fixed points without the ``correlation length'' for the mutual information diverging. 
Here by ``correlation length'', we mean the length scale over which the mutual information decays exponentially.

Here we demonstrate that the decohered cluster state $\rho
_{\mathcal{C}}$ cannot be in the trivial phase by the Coser and P\'erez-Garc\'ia as per~Defs.~\ref{Def:Phase_of_Matter} and \ref{Def:Coser_Phase} if the Lindbladian respects a $\ztwo^S \times \ztwo^W$ symmetry. 
To prove this, we first realize that any state evolving under a local Lindbladian has an associated speed limit on the speed of information propagation, known as the Lieb-Robinson velocity~\cite{poulin2010lieb}.

 More formally, consider a local observable $O_A$, supported only on a region $A$. 
 Let $A(r)$ be an area of distance $r$ around $A$, and let $\cL_{A(r)}$ be the Lindbladian restricted to only terms that are in $A(r)$.
 Assuming the Lindbladian is a sum of local terms $\cL=\sum_{j}L_j$ and the strength of each local term is bounded as $\max_j\norm{L_j}\leq J$, we have
\begin{align} \label{Eq:Lindblad_Lieb-Robinson}
    \norm{e^{\cL^\dagger t}(O_A) - e^{\cL_{A(r)}^\dagger t}(O_A)   } \leq \norm{O_A}|A|J \frac{e^{vt-\mu r}}{v}   
\end{align}
for constants $v, \mu =O(1)$ from Ref. \cite[Lemma 5.5]{cubitt2015stability} or Ref.~\cite{poulin2010lieb}.
In the case that the Lindbladian acts trivially on parts of $O_A$, we replace  $|A|$ on the right-hand side with the size of the region which $\cL$ acts nontrivially on $O_A$ i.e. $A$ is now the set of sites where $L_i(O_A)\neq 0$ $\forall i$. 
 Using this, we are able to prove the following theorem.
\begin{theorem}
    Under the Coser \& P\'erez-Garc\'ia definition of phase of matter, as per Definitions \ref{Def:Phase_of_Matter} and \ref{Def:Coser_Phase}, any state $\rho$ for which $\langle \mathcal{S}_g \rangle =\Tr [\rho \mathcal{S}_g] = \Omega(1)$ is not in the trivial phase, where 
    \begin{align}
    \label{eq:general_string}
\mathcal{S}_g=\mathcal{O}_\alpha^{(L)}\otimes\left(\prod_{i\in M}U_g^{(i)}\right)\otimes\mathcal{O}_\alpha^{(R)},
    \end{align}
    for $|L-R|=\Omega(N)$, and $M$ denotes the set of sites between $
L$ and $R$. 
\end{theorem}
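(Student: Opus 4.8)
The plan is to argue by contradiction, using the Lieb--Robinson bound \eqref{Eq:Lindblad_Lieb-Robinson} together with the symmetry of the driving Lindbladian. Suppose $\rho$ \emph{were} in the trivial phase. By Definitions~\ref{Def:Phase_of_Matter} and~\ref{Def:Coser_Phase} there would exist a geometrically local, symmetric Lindbladian $\mathcal{L}_0$ (commuting both with the strung symmetry $U_g$ and with the protecting symmetry $U_h$) and a trivial reference state $\sigma$---for concreteness a symmetric product state, which has vanishing connected correlations and $\langle\mathcal{S}_g\rangle_\sigma=0$---such that $\|e^{\mathcal{L}_0 t}(\sigma)-\rho\|_1\le\poly(N)e^{-\gamma t}$ for some $t\gtrsim\mathrm{polylog}(N)$. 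Taking $t=c\log N$ with $c$ a large enough constant makes this trace-distance error $1/\poly(N)$. By H\"older's inequality and $\|\mathcal{S}_g\|_{\text{op}}=O(1)$, it then suffices to show that the Heisenberg-evolved expectation $\Tr[\sigma\,e^{\mathcal{L}_0^\dagger t}(\mathcal{S}_g)]$ is $o(1)$, since this would contradict $\langle\mathcal{S}_g\rangle_\rho=\Omega(1)$.

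The heart of the argument is to control $e^{\mathcal{L}_0^\dagger t}(\mathcal{S}_g)$. First I would exploit \emph{bulk transparency}: because $\mathcal{L}_0$ is symmetric, its dual generator is covariant, $e^{\mathcal{L}_0^\dagger t}(U_g\mathcal{O}U_g^\dagger)=U_g\,e^{\mathcal{L}_0^\dagger t}(\mathcal{O})\,U_g^\dagger$, and in the deep interior of the string region $M$ the partial symmetry $\prod_{i\in M}U_g^{(i)}$ is locally indistinguishable from the global symmetry. Combined with locality, this means the evolution cannot detect the string away from its endpoints and acts nontrivially only on two $O(1)$-sized seams near $L$ and $R$---exactly the situation covered by the remark following \eqref{Eq:Lindblad_Lieb-Robinson}, where $|A|$ is replaced by the size of the region on which $\mathcal{L}_0$ fails to act trivially on the operator. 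Applying the Lieb--Robinson bound separately at each endpoint with radius $r=\Theta(|L-R|)$ gives $vt-\mu r=-\Omega(N)$, so up to error $\poly(N)e^{-\Omega(N)}$ we may replace $e^{\mathcal{L}_0^\dagger t}(\mathcal{S}_g)$ by a factorized operator $\widetilde{\mathcal{O}}^{(L)}\otimes U_g^{M'}\otimes\widetilde{\mathcal{O}}^{(R)}$, where $\widetilde{\mathcal{O}}^{(L)},\widetilde{\mathcal{O}}^{(R)}$ are supported within radius $vt=O(\log N)$ of the original endpoints and $U_g^{M'}$ is an undressed symmetry string on the intervening region. Since $\mathcal{L}_0$ commutes with $U_h$, projecting the truncated endpoint pieces onto definite $U_h$-charge sectors keeps them in the conjugate irreps $\alpha$ and $\alpha^*$ fixed by \eqref{eq:string_general_def}, and only improves the error.

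It then remains to show that $\Tr[\sigma\,\widetilde{\mathcal{O}}^{(L)}U_g^{M'}\widetilde{\mathcal{O}}^{(R)}]$ is exponentially small. Here I would group the operator as $\mathcal{X}_L\,\mathcal{X}_R$, where $\mathcal{X}_L$ absorbs $\widetilde{\mathcal{O}}^{(L)}$ with the left half of the symmetry string and $\mathcal{X}_R$ the right half with $\widetilde{\mathcal{O}}^{(R)}$; these are supported on regions separated by a gap of size $\Omega(N)$. The symmetry string being $U_h$-neutral while $\widetilde{\mathcal{O}}^{(L)}$ carries charge $\alpha$, the factor $\mathcal{X}_L$ carries net nontrivial charge; invariance of the symmetric state then gives $\Tr[\sigma\,\mathcal{X}_L]=\alpha(h)\,\Tr[\sigma\,\mathcal{X}_L]$ for some $h$ with $\alpha(h)\ne1$, forcing $\Tr[\sigma\,\mathcal{X}_L]=0$. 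Exponential clustering of the trivial state yields $\Tr[\sigma\,\mathcal{X}_L\mathcal{X}_R]=\Tr[\sigma\,\mathcal{X}_L]\,\Tr[\sigma\,\mathcal{X}_R]+e^{-\Omega(N)}=e^{-\Omega(N)}$. Collecting the three error sources---the fast-driving trace distance $1/\poly(N)$, the Lieb--Robinson tail $\poly(N)e^{-\Omega(N)}$, and the clustering tail $e^{-\Omega(N)}$---each vanishes in the thermodynamic limit, contradicting $\langle\mathcal{S}_g\rangle_\rho=\Omega(1)$.

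The main obstacle I anticipate is the middle step: rigorously establishing bulk transparency and the factorized, charge-resolved form of $e^{\mathcal{L}_0^\dagger t}(\mathcal{S}_g)$. One must show that covariance under the symmetry genuinely reduces the effective Lieb--Robinson support to the endpoints---carefully handling the seams where the string region meets its complement---and that the truncation into $\widetilde{\mathcal{O}}^{(L)}$ and $\widetilde{\mathcal{O}}^{(R)}$ can be carried out while preserving the $U_h$-charge, so that the symmetry-vanishing argument applies. By contrast, the clustering estimate and the final accounting of errors are routine.
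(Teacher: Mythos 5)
Your proposal is correct and follows essentially the same route as the paper's proof: both exploit the fact that a symmetric Lindbladian acts nontrivially only on the $O(1)$-sized endpoint regions of the string (so $|A|=O(1)$ in the Lieb--Robinson bound), localize the Heisenberg-evolved string to the two endpoints with $r=\Theta(N)$ and $t=O(\log N)$, and then annihilate the factorized expectation value via the nontrivial charge carried by each endpoint operator under the preserved symmetry. The only cosmetic difference is that you invoke exponential clustering of the trivial reference state where the paper uses the exact factorization afforded by taking $\rho_{\text{triv}}$ to be a product state.
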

\begin{proof}
    Note that, since the Lindbladian commutes with the strong symmetry $\prod_i U_g^{(i)}$, the string part of the order parameter remains invariant under the Lindbladian evolution because the bulk of the string is made of symmetry operators $U_g^{(i)}$.
    The Lindbladian is therefore only acting nontrivially on the end parts of $\mathcal{S}_g$.
    Denote $L(r)$ and $R(r)$ to be the regions within a distance $r$ of the end points $L,R$, respectively.
    Applying~\cref{Eq:Lindblad_Lieb-Robinson} we get
    \begin{align}\label{Eq:Localized_Evolution}
    \begin{split}
        &\norm{   e^{\cL^\dagger t} (\cS_g) -   e^{(\cL^\dagger_{L(r)} + \cL^\dagger_{R(r)}) t} (\cS_g) } 
        = O(e^{vt - \mu r  }).
    \end{split}
    \end{align}
  We note that, although $\cS_g$ acts over an $O(N)$ area, due to the symmetry condition, it only acts nontrivially on an $O(1)$ area, hence $|A|=O(1)$.

    We now consider the expectation value of the string order parameter on any state in the trivial phase.
    By the Coser \& P\'erez-Garc\'ia definition of phase, we can write any state in the trivial phase as $\rho = e^{t\cL}(\rho_\text{triv})$ for an appropriate Lindbladian, where  $\rho_\text{triv}$ is a reference product state in the trivial phase and $t = O(\log(N))$.
    Now consider
    \begin{align}
    \begin{split}
    \Tr\left[e^{\cL t}(\rho_\text{triv})   \cS_g\right] =&  \Tr\left[\rho_\text{triv}  e^{\cL^\dagger t} (\cS_g)\right] \\
         \leq & \Tr\left[\rho_\text{triv}   e^{(\cL^\dagger_{L(r)} + \cL^\dagger_{R(r)}) t} (\cS_g)\right] + O(e^{vt - \mu r}),
    \end{split}
    \end{align}
    where we have applied \cref{Eq:Localized_Evolution} to reach the second line.
    
    Since $t=O(\log(N))$, we can choose a localized Lindbladian which contains the Lieb-Robinson light-cone by choosing $r=c|L-R|$ for $c<1/2$.
    Since we have assumed $|L-R|=\Omega(N)$, we see that $O(e^{vt - \mu r}) = O(e^{-\mu' N})$ for some constant $\mu'$.
    Thus we can write
    \begin{align*}
        \Tr\left[e^{\cL t}(\rho_\text{triv})   \cS_g\right] \leq& \ \Tr\left[\rho_\text{triv}   e^{(\cL^\dagger_{L(r)} + \cL^\dagger_{R(r)}) t} (\cS_g)\right] + O(e^{-\mu' N}) \\
        =& \ \Tr\left[ \Tr_{L(r)}[\rho_\text{triv}]   e^{\cL^\dagger_{L(r)}}(\cS_g) \Tr_{L^c(r)}[\rho_\text{triv}] e^{ \cL^\dagger_{R(r)} t} (\cS_g)\right] + O(e^{-\mu' N})  \\ 
        =& \ \Tr\left[ \Tr_{L(r)}[\rho_\text{triv}]   e^{\cL^\dagger_{L(r)}}(\cS_g) \right] \Tr\left[\Tr_{L^c(r)}[\rho_\text{triv}] e^{ \cL^\dagger_{R(r)} t} (\cS_g)\right] + O(e^{-\mu' N}).
    \end{align*}
    To reach the second line, we have split the product state into a part supported on $L(r)$ and its complement, and we have used that $\cL^\dagger_{L(r)}$ and $\cL^\dagger_{R(r)}$ have disjoint supports, hence $e^{(\cL^\dagger_{L(r)} + \cL^\dagger_{R(r)}) t} = e^{\cL^\dagger_{L(r)} t}e^{\cL^\dagger_{R(r)} t} $.
 Since the Lindbladian preserves the symmetry operator, we see that the endpoint operators have the same charge as in the non-time-evolved case, hence we can write
 \begin{align*}
     \Tr\left[ \Tr_{L(r)}[\rho_\text{triv}]   e^{\cL^\dagger_{L(r)}}(\cS_g) \right] \Tr\left[\Tr_{L^c(r)}[\rho_\text{triv}] e^{ \cL^\dagger_{R(r)} t} (\cS_g)\right] = 0
 \end{align*}
 and thus
 \begin{align*}
     \Tr\left[e^{\cL t}(\rho_\text{triv})   \cS_g\right] =& O(e^{-\mu' N}).
 \end{align*}
 The immediate consequence is that, for any state in the trivial phase, it cannot be the case that $\langle \cS_g \rangle =\Omega(1)$  as $N$ increases.

\end{proof}

\section{Fast destruction of strong-string order}\label{app:fast destruction string order}
In this appendix, we show the fast destruction of the strong-string order $ \mathcal{S}_{n,m}=Z_{n}X_{n+1}\cdots X_{m-1}Z_m$ for odd $n$ and $m$ as defined in the main text in~\cref{eq:string_op}.
The destruction of the strong-string order can be achieved by a trace channel $\mathcal{E}=\prod_{j \in \text{odd}} \mathcal{T}_j$ where $\mathcal{T}_j(\rho) \vcentcolon = \Tr_j[\rho] \frac{\mathbbm{1}_j}{2}$. It is easy to check that $\mathcal{E}$ indeed satisfies the $\ztwo^S \times \ztwo^W$ symmetry and is of depth one. 

Assume $\rho$ is a nontrivial SPT state and denote $\sigma = \mathcal{E}(\rho)$. We would like to calculate $\la  \mathcal{S}_{n,m}\ra_{\sigma} = \Tr[\mathcal{S}_{n,m} \sigma]=\Tr[\mathcal{E}^{\dagger}(\mathcal{S}_{n,m}) \rho]$.
It is easy to check that $\mathcal{T}_j^{\dagger}(O) = \Tr_j(O) \frac{\mathbbm{1}_j}{2}$, so we have $\mathcal{E}^{\dagger}(\mathcal{S}_{n,m})=\tilde{\mathcal{S}}_{nm}/4$ where $\tilde{\mathcal{S}}_{nm}$ is the trivial string operator defined in~\cref{eq:2copy_trivial_correlators} and the paragraph succeeding it. We therefore see that $\la \mathcal{S}_{n,m} \ra_{\sigma} = \la \tilde{\mathcal{S}}_{n,m} \ra_{\rho}/4 = 0$, indicating that the strong-string order is being destroyed via a depth-one symmetric quantum channel.

This result can also be generalized to the Lindbladian case. Consider the following trace Lindbladian: $\mathcal{L}_{\mathcal{T}}\vcentcolon =\sum_{j \in \text{odd}} (\mathcal{T}_j - \mathbbm{1}_j)$.
Note that
\begin{align}
    e^{\mathcal{L}t}&=\prod_{j \in odd}[\mathcal{T}_j - e^{-t}(\mathcal{T}_j-\mathbbm{1}_j)] 
    = \prod_{j \in odd}[(1-e^{-t})\mathcal{T}_j + e^{-t}\mathbbm{1}_j)] 
    = \sum_{\Lambda} e^{-(N-|\Lambda|)t}(1-e^{-t})^{|\Lambda|} \mathcal{T}_{\Lambda}~,
\end{align}
where $\Lambda$ denotes all the possible sets of $j$, $|\Lambda|$ is the number of sites in the set $\Lambda$, and $\mathcal{T}_{\Lambda}$ is the trace channel on the set $\Lambda$.
Note that $\lim_{t \rightarrow \infty} e^{\cL t}(\rho) = \mathcal{E}(\rho) \vcentcolon = \sigma$. 
We have
\begin{align}
    \|e^{\mathcal{L}t}(\rho) - \sigma \|_1 &\leq \sum_{n=0}^{N-1} \binom{N}{n} e^{-(N-n)t}(1-e^{-t})^{n} \| \mathcal{T}_{\Lambda} (\rho)\|_1 \notag \\
    &=(e^{-t}+1-e^{-t})^N - (1-e^{-t})^N = 1-(1-e^{-t})^N \leq Ne^{-t} + O(N^2 e^{-2t})~. 
\end{align}
Therefore, for a time $t \gtrsim \alpha \log N$, $\|e^{\mathcal{L}t}(\rho) - \sigma \|_1 = O(N^{1-\alpha})$.
This implies that a nontrivial state $\rho$ can be driven fast to a trivial state $\sigma$ when $\alpha >1$.
We therefore conclude that the mixed-state SPT order is hard to build but easy to destroy.

\section{\texorpdfstring{String order of $\rhomm$}{String order of rho minus}}\label{ap:rho minus}
In this appendix, we will show that $\rhomm$ has nontrivial string order under the same string order parameters as $\rhopp$. First recall that we may express $\rhomm$ in the vectorized picture as
\begin{align}
    \rhomm=\frac{1}{N}\sum_{k\text{ even}}\begin{tikzpicture}
    \draw (0,0) rectangle (6.25,-0.35);
    \foreach \x in {0.5,2.75,5}{
		     \draw (\x,0) -- (\x,0.55);
       \draw (\x,0) arc (-60:0:.63);
		    \node[odd] at (\x,0) {};
		  }
		  \foreach \x in {1.25,3.5,5.75}{
		     \draw (\x,0) -- (\x,0.55);
       \draw (\x,0) arc (-60:0:.63);
		    \node[even] at (\x,0) {};
		  }
		  \node[fill=white] at (2,0) {$\dots$};
        \node[fill=white] at (4.25,0) {$\dots$};
        \node[Zltight] at (3.5,0.4) {};
        \node[Zblank] at (3.78,0.4) {};
        \node[] at (3.7,-.15) {$k$};
  \end{tikzpicture} \ .
\end{align}
Next, observe that
\begin{align}\label{eq:O delta}
    \Tr[\rhomm\rhomm^\dagger]=\frac{1}{N^2}\sum_{\substack{
k,k' \\
\text{even}}}\begin{tikzpicture}
    \draw (0,0) rectangle (8.5,-0.35);
    \draw (0,1.1) rectangle (8.5,1.45);
    \foreach \x in {0.5,2.75,5,7.25}{
		     \draw (\x,0) -- (\x,1.1);
       \draw (\x,0) arc (-60:60:.63);
		    \node[odd] at (\x,0) {};
            \node[odd] at (\x,1.1) {};
		  }
		  \foreach \x in {1.25,3.5,5.75,8}{
		     \draw (\x,0) -- (\x,1.1);
       \draw (\x,0) arc (-60:60:.63);
		    \node[even] at (\x,0) {};
            \node[even] at (\x,1.1) {};
		  }
		  \node[fill=white] at (2,0) {$\dots$};
        \node[fill=white] at (2,1.1) {$\dots$};
        \node[fill=white] at (4.25,0) {$\dots$};
        \node[fill=white] at (4.25,1.1) {$\dots$};
        \node[fill=white] at (6.5,0) {$\dots$};
        \node[fill=white] at (6.5,1.1) {$\dots$};
        \node[Zltight] at (3.5,0.4) {};
        \node[Zblank] at (3.78,0.4) {};
        \node[] at (3.7,-.15) {$k$};
        \node[Zltight] at (5.75,0.6) {};
        \node[Zblank] at (6.03,0.6) {};
        \node[] at (6,1.25) {$k'$};
        \draw[|-|] (3.5,1.7) -- node[above] {$\Delta$}(5.75,1.7) {};
  \end{tikzpicture} \ ,
\end{align}
where we have defined $\Delta = \frac{k'-k}{2}\text{ mod }N$ to be the distance between $k$ and $k'$ within the even sublattice. Notice that each term in the summand depends only on $\Delta$ due to translational invariance.
Defining the tensor diagram appearing in~\cref{eq:O delta} to be $O_\Delta$, we have
\begin{align}\label{eq:O delta sum}
\begin{split}
    \Tr[\rhomm\rhomm^\dagger]&=\frac{1}{N^2}\sum_{\substack{
k,k' \\
\text{even}}}O_\Delta\\
&=\frac{1}{N}\sum_{\Delta=0}^{N-1}O_\Delta,
\end{split}
\end{align}
where we have replaced the sum over $k$ and $k'$ with a sum over $\Delta$ and a multiplicity factor of $N$ which accounts for the different possible endpoints separated by $\Delta$ sites.

Now, consider the numerator of the two-copy expectation value of the strong-string $\mathcal{S}^S_{nm}$:
\begin{align}
\begin{split}
    \Tr[\rhomm\mathcal{S}^S_{nm}\rhomm^\dagger]&=\frac{1}{N^2}\sum_{\substack{
k,k' \\
\text{even}}}\begin{tikzpicture}
    \draw (0,0) rectangle (8.5,-0.35);
    \draw (0,1.1) rectangle (8.5,1.45);
    \foreach \x in {0.5,2.75,5,7.25}{
		     \draw (\x,0) -- (\x,1.1);
       \draw (\x,0) arc (-60:60:.63);
		    \node[odd] at (\x,0) {};
            \node[odd] at (\x,1.1) {};
		  }
		  \foreach \x in {1.25,3.5,5.75,8}{
		     \draw (\x,0) -- (\x,1.1);
       \draw (\x,0) arc (-60:60:.63);
		    \node[even] at (\x,0) {};
            \node[even] at (\x,1.1) {};
		  }
		  \node[fill=white] at (2,0) {$\dots$};
        \node[fill=white] at (2,1.1) {$\dots$};
        \node[fill=white] at (4.25,0) {$\dots$};
        \node[fill=white] at (4.25,1.1) {$\dots$};
        \node[fill=white] at (6.5,0) {$\dots$};
        \node[fill=white] at (6.5,1.1) {$\dots$};
        \node[Zltight] at (3.5,0.25) {};
        \node[Zblank] at (3.72,0.25) {};
        \node[] at (3.7,-.15) {$k$};
        \node[Zltight] at (5.75,0.85) {};
        \node[Zblank] at (5.97,0.85) {};
        \node[] at (6,1.25) {$k'$};
        \node[Xltight] at (5.75,0.55) {};
        \node[Xltight] at (3.5,0.55) {};
        \node[Xltight] at (1.25,0.55) {};
        \node[Zltight] at (0.5,0.55) {};
        \node[Zltight] at (7.25,0.55) {};
        \node[] at (0.7,-.15) {$n$};
        \node[] at (7.47,-.15) {$m$};
  \end{tikzpicture} \\
  &=\frac{1}{N^2}\sum_{\substack{
k,k' \\
\text{even}}}(-1)^{s(k)}\begin{tikzpicture}
    \draw (0,0) rectangle (8.5,-0.35);
    \draw (0,1.1) rectangle (8.5,1.45);
    \foreach \x in {0.5,2.75,5,7.25}{
		     \draw (\x,0) -- (\x,1.1);
       \draw (\x,0) arc (-60:60:.63);
		    \node[odd] at (\x,0) {};
            \node[odd] at (\x,1.1) {};
		  }
		  \foreach \x in {1.25,3.5,5.75,8}{
		     \draw (\x,0) -- (\x,1.1);
       \draw (\x,0) arc (-60:60:.63);
		    \node[even] at (\x,0) {};
            \node[even] at (\x,1.1) {};
		  }
		  \node[fill=white] at (2,0) {$\dots$};
        \node[fill=white] at (2,1.1) {$\dots$};
        \node[fill=white] at (4.25,0) {$\dots$};
        \node[fill=white] at (4.25,1.1) {$\dots$};
        \node[fill=white] at (6.5,0) {$\dots$};
        \node[fill=white] at (6.5,1.1) {$\dots$};
        \node[Zltight] at (3.5,0.85) {};
        \node[Zblank] at (3.72,0.85) {};
        \node[] at (3.7,-.15) {$k$};
        \node[Zltight] at (5.75,0.85) {};
        \node[Zblank] at (5.97,0.85) {};
        \node[] at (6,1.25) {$k'$};
        \node[Xltight] at (5.75,0.55) {};
        \node[Xltight] at (3.5,0.55) {};
        \node[Xltight] at (1.25,0.55) {};
        \node[Zltight] at (0.5,0.55) {};
        \node[Zltight] at (7.25,0.55) {};
        \node[] at (0.7,-.15) {$n$};
        \node[] at (7.47,-.15) {$m$};
  \end{tikzpicture} \\
  &=\frac{1}{N^2}\sum_{\substack{
k,k' \\
\text{even}}}(-1)^{s(k)}O_\Delta,
\end{split}
\end{align}
where we have defined $s(k)$ to be $1$ if $n\leq k\leq m$ and $0$ otherwise and used the fact that $\mathcal{S}^S_{nm}\rhopp=\rhopp$. This sum is nearly the same as that appearing in~\cref{eq:O delta sum}, but some terms now appear with a minus sign. For a particular choice of $\Delta$, there will be $|m-n|$ terms for which $k$ lies between $n$ and $m$ and therefore for which $s(k)=1$. We can therefore write
\begin{align}\label{eq:S delta}
\begin{split}
    \Tr[\rhomm\mathcal{S}^S_{nm}\rhomm^\dagger]&=\frac{1}{N^2}\sum_{\Delta=0}^{N-1}\left(N-|m-n|\right)O_\Delta-|m-n|O_\Delta\\
    &=\frac{N-2|m-n|}{N^2}\sum_{\Delta=0}^{N-1}O_\Delta.
\end{split}
\end{align}
Combining~\cref{eq:O delta sum} and~\cref{eq:S delta}, we find
\begin{align}
\begin{split}
    \frac{\Tr[\rhomm\mathcal{S}^S_{nm}\rhomm^\dagger]}{\Tr[\rhomm\rhomm^\dagger]}&=\left[\frac{N-2|m-n|}{N^2}\sum_{\Delta=0}^{N-1}O_\Delta\right]\left[\frac{1}{N}\sum_{\Delta=0}^{N-1}O_\Delta\right]^{-1}\\
    &=\frac{N-2|m-n|}{N}\to1\text{ as }N\to\infty.
\end{split}
\end{align}

We can perform a similar calculation of the two-copy expectation value of the weak-string order parameter. Consider
\begin{align}
\begin{split}
    \Tr[\rhomm\mathcal{S}^W_{nm}\rhomm^\dagger]&=\frac{1}{N^2}\sum_{\substack{
k,k' \\
\text{even}}}\begin{tikzpicture}
    \draw (0,0) rectangle (8.5,-0.35);
    \draw (0,1.1) rectangle (8.5,1.45);
    \foreach \x in {0.5,2.75,5,7.25}{
		     \draw (\x,0) -- (\x,1.1);
       \draw (\x,0) arc (-60:60:.63);
		    \node[odd] at (\x,0) {};
            \node[odd] at (\x,1.1) {};
		  }
		  \foreach \x in {1.25,3.5,5.75,8}{
		     \draw (\x,0) -- (\x,1.1);
       \draw (\x,0) arc (-60:60:.63);
		    \node[even] at (\x,0) {};
            \node[even] at (\x,1.1) {};
		  }
		  \node[fill=white] at (2,0) {$\dots$};
        \node[fill=white] at (2,1.1) {$\dots$};
        \node[fill=white] at (4.25,0) {$\dots$};
        \node[fill=white] at (4.25,1.1) {$\dots$};
        \node[fill=white] at (6.5,0) {$\dots$};
        \node[fill=white] at (6.5,1.1) {$\dots$};
        \node[Zltight] at (3.5,0.25) {};
        \node[Zblank] at (3.72,0.25) {};
        \node[] at (3.7,-.15) {$k$};
        \node[Zltight] at (5.75,0.85) {};
        \node[Zblank] at (5.97,0.85) {};
        \node[] at (6,1.25) {$k'$};
        \node[Xltight] at (5,0.55) {};\node[Xblank] at (5.32,0.55) {};
        \node[Xltight] at (2.75,0.55) {};
        \node[Xblank] at (3.07,0.55) {};
        \node[Zltight] at (1.25,0.55) {};
        \node[Zblank] at (1.57,0.55) {};
        \node[Xltight] at (7.25,0.55) {};
        \node[Xblank] at (7.57,0.55) {};
        \node[Zltight] at (8,0.55) {};
        \node[Zblank] at (8.32,0.55) {};
        \node[] at (1.45,-.15) {$n$};
        \node[] at (8.22,-.15) {$m$};
  \end{tikzpicture} \\
  &=\frac{1}{N^2}\sum_{\substack{
k,k' \\
\text{even}}}O_\Delta\\
&=\Tr[\rhomm\rhomm^\dagger],
\end{split}
\end{align}
so that 
\begin{align}
    \frac{\Tr[\rhomm\mathcal{S}^W_{nm}\rhomm^\dagger]}{\Tr[\rhomm\rhomm^\dagger]}&=1.
\end{align}

We can also calculate the expectation value of the 1-copy string order parameters. Again, using vectorized tensor network notation for the density matrix, we have
\begin{align}
\begin{split}
    \Tr[\rhomm\mathcal{S}^S_{nm}]&=\frac{1}{N}\sum_{k \text{ even}}\begin{tikzpicture}
    \draw (0,0) rectangle (8.5,-0.35);
    \foreach \x in {0.5,2.75,5,7.25}{
		     \draw (\x,0) -- (\x,0.55);
       \draw (\x,0) arc (-60:0:.63);
		    \node[odd] at (\x,0) {};
      \draw [black] plot [smooth, tension=0.3] coordinates { (\x,0.54) (\x,0.7) (\x+0.315/2,0.72) (\x+0.315,0.7) (\x+0.315,0.54)};
		  }
		  \foreach \x in {1.25,3.5,5.75,8}{
		     \draw (\x,0) -- (\x,0.55);
       \draw [black] plot [smooth, tension=0.3] coordinates { (\x,0.54) (\x,0.7) (\x+0.315/2,0.72) (\x+0.315,0.7) (\x+0.315,0.54)};
       \draw (\x,0) arc (-60:0:.63);
		    \node[even] at (\x,0) {};
		  }
		  \node[fill=white] at (2,0) {$\dots$};
        \node[fill=white] at (4.25,0) {$\dots$};
        \node[fill=white] at (6.5,0) {$\dots$};
        \node[Zltight] at (0.5,0.55) {};
        \node[Zltight] at (3.5,0.25) {};
        \node[Zblank] at (3.72,0.25) {};
        \node[] at (3.7,-.15) {$k$};
        \node[Xltight] at (5.75,0.55) {};
        \node[Xltight] at (3.5,0.55) {};
        \node[Xltight] at (1.25,0.55) {};
        \node[Zltight] at (7.25,0.55) {};
        \node[] at (0.7,-.15) {$n$};
        \node[] at (7.47,-.15) {$m$};
  \end{tikzpicture} \\
  &=\frac{1}{N}\sum_{k\text{ even}}(-1)^{s(k)}\begin{tikzpicture}
    \draw (0,0) rectangle (8.5,-0.35);
    \foreach \x in {0.5,2.75,5,7.25}{
		     \draw (\x,0) -- (\x,0.55);
       \draw (\x,0) arc (-60:0:.63);
		    \node[odd] at (\x,0) {};
      \draw [black] plot [smooth, tension=0.3] coordinates { (\x,0.54) (\x,0.7) (\x+0.315/2,0.72) (\x+0.315,0.7) (\x+0.315,0.54)};
		  }
		  \foreach \x in {1.25,3.5,5.75,8}{
		     \draw (\x,0) -- (\x,0.55);
       \draw [black] plot [smooth, tension=0.3] coordinates { (\x,0.54) (\x,0.7) (\x+0.315/2,0.72) (\x+0.315,0.7) (\x+0.315,0.54)};
       \draw (\x,0) arc (-60:0:.63);
		    \node[even] at (\x,0) {};
		  }
		  \node[fill=white] at (2,0) {$\dots$};
        \node[fill=white] at (4.25,0) {$\dots$};
        \node[fill=white] at (6.5,0) {$\dots$};
        \node[Zltight] at (3.5,0.25) {};
        \node[Zblank] at (3.72,0.25) {};
        \node[] at (3.7,-.15) {$k$};
  \end{tikzpicture} \\
  &=\frac{1}{N}\sum_{k\text{ even}}(-1)^{s(k)}\\
    &=\frac{N-2|m-n|}{N}\to1\text{ as }N\to\infty,
\end{split}
\end{align}
where we have once again defined $s(k)$ to be $1$ if $n\leq k\leq m$ and $0$ otherwise and used the fact that $\mathcal{S}^S_{nm}\rhopp=\rhopp$.

Finally, we have 
\begin{align}
\begin{split}
    \Tr[\rhomm\mathcal{S}^W_{nm}]&=\frac{1}{N}\sum_{k \text{ even}}\begin{tikzpicture}
    \draw (0,0) rectangle (8.5,-0.35);
    \foreach \x in {0.5,2.75,5,7.25}{
		     \draw (\x,0) -- (\x,0.55);
       \draw (\x,0) arc (-60:0:.63);
		    \node[odd] at (\x,0) {};
      \draw [black] plot [smooth, tension=0.3] coordinates { (\x,0.54) (\x,0.7) (\x+0.315/2,0.72) (\x+0.315,0.7) (\x+0.315,0.54)};
		  }
		  \foreach \x in {1.25,3.5,5.75,8}{
		     \draw (\x,0) -- (\x,0.55);
       \draw [black] plot [smooth, tension=0.3] coordinates { (\x,0.54) (\x,0.7) (\x+0.315/2,0.72) (\x+0.315,0.7) (\x+0.315,0.54)};
       \draw (\x,0) arc (-60:0:.63);
		    \node[even] at (\x,0) {};
		  }
		  \node[fill=white] at (2,0) {$\dots$};
        \node[fill=white] at (4.25,0) {$\dots$};
        \node[fill=white] at (6.5,0) {$\dots$};
        \node[Zltight] at (3.5,0.25) {};
        \node[Zblank] at (3.72,0.25) {};
        \node[] at (3.7,-.15) {$k$};
        \node[Xltight] at (5,0.55) {};
        \node[Xblank] at (5.32,0.55) {};
        \node[Xltight] at (2.75,0.55) {};
        \node[Xblank] at (3.07,0.55) {};
        \node[Zltight] at (1.25,0.55) {};
        \node[Zblank] at (1.57,0.55) {};
        \node[Xltight] at (7.25,0.55) {};
        \node[Xblank] at (7.57,0.55) {};
        \node[Zltight] at (8,0.55) {};
        \node[Zblank] at (8.32,0.55) {};
        \node[] at (1.45,-.15) {$n$};
        \node[] at (8.22,-.15) {$m$};
  \end{tikzpicture} \\
  &=\frac{1}{N}\sum_{k\text{ even}}\begin{tikzpicture}
    \draw (0,0) rectangle (8.5,-0.35);
    \foreach \x in {0.5,2.75,5,7.25}{
		     \draw (\x,0) -- (\x,0.55);
       \draw (\x,0) arc (-60:0:.63);
		    \node[odd] at (\x,0) {};
      \draw [black] plot [smooth, tension=0.3] coordinates { (\x,0.54) (\x,0.7) (\x+0.315/2,0.72) (\x+0.315,0.7) (\x+0.315,0.54)};
		  }
		  \foreach \x in {1.25,3.5,5.75,8}{
		     \draw (\x,0) -- (\x,0.55);
       \draw [black] plot [smooth, tension=0.3] coordinates { (\x,0.54) (\x,0.7) (\x+0.315/2,0.72) (\x+0.315,0.7) (\x+0.315,0.54)};
       \draw (\x,0) arc (-60:0:.63);
		    \node[even] at (\x,0) {};
		  }
		  \node[fill=white] at (2,0) {$\dots$};
        \node[fill=white] at (4.25,0) {$\dots$};
        \node[fill=white] at (6.5,0) {$\dots$};
        \node[Zltight] at (3.5,0.25) {};
        \node[Zblank] at (3.72,0.25) {};
        \node[] at (3.7,-.15) {$k$};
  \end{tikzpicture} \\
  &=1,\\
\end{split}
\end{align}
where we have again used the fact that $\mathcal{S}^W_{nm}\rhopp=\rhopp$. We see that the density matrix $\rhomm$ exhibits non-trivial string order under the one- and two-copy string order parameters used to characterize the decohered cluster state.

\section{Some details of DMRG}
\label{app:dmrg}

In this appendix, we provide some details behind how we obtain the steady states of the Lindbladians via DMRG.
First, we vectorize the density matrices by the following mapping:
\begin{multline}
    \rho = \sum_{\substack{s_1, s_2, \ldots, s_{2N} \\ s_1', s_2', \ldots, s_{2N}' } } \rho_{s_1,  \ldots, s_{2N}, s_1',  \ldots, s_{2N}' } \ket{s_1, \ldots, s_{2N} } \bra{s_1', \ldots, s_{2N}'} \\
    \mapsto  \sum_{\substack{s_1, s_2, \ldots, s_{2N} \\ s_1', s_2', \ldots, s_{2N}' } } \rho_{s_1,  \ldots, s_{2N}, s_1',  \ldots, s_{2N}' } \ket{s_1,s_1', s_2,s_2', \ldots, s_{2N}, s_{2N}' },
    \end{multline}
Note that we choose the $s_1, s_1', s_2, s_2', \ldots, s_{2N}, s_{2N}'$ ordering of the qubits in the doubled Hilbert space as opposed to $s_1, s_2,  \ldots, s_{2N}, s_1', s_2', \ldots, s_{2N}'$ ordering.
This is done to preserve the locality of the vectorized Lindbladian $\mathbb{L}$.
Using this mapping, we construct the vectorized Lindbladian $\mathbb{L}$ and provide it as an input to the \texttt{dmrg} function in ITensors with $\texttt{ishermitian = false}$ argument.
Setting this argument gives the eigenstate of the vectorized Lindbladian that has the smallest real part of the eigenvalue.
Since, for the steady state, we have $\mathcal{L} \rho = 0$, and the eigenvalues of $\mathcal{L}$ are nonnegative, the state obtained after DMRG has converged is one of the steady states.
To obtain the steady state in the $s_{\text{ket}} = s_{\text{bra}} = w = +1$ sector, we project the output of DMRG into that sector by calculating $\frac{1}{8}(\IId  + U_{\text{even}}^{\text{ket}}) ((\IId  + U_{\text{even}}^{\text{bra}})(\IId  + U_{\text{odd}})\ket{\rho}$, where
\begin{equation}
\begin{aligned}
    U_{\text{even}}^{\text{ket}} & = I_1 I_{1'} X_2 I_{2'} \;\; I_3 I_{3'} X_4 I_{4'} \;\; \ldots X_{2N} I_{2N'} \\    U_{\text{even}}^{\text{bra}} & = I_1 I_{1'} I_2 X_{2'} \;\; I_3 I_{3'} I_4 X_{4'} \;\; \ldots I_{2N} X_{2N'}\\
    U_{\text{odd}} & = X_1 X_{1'} I_2 I_{2'} \;\;  X_3 X_{3'} I_4 I_{4'} \;\; \ldots X_{2N-1}, X_{(2N_1)'} I_{2N} I_{2N'} ~.
\end{aligned}
\end{equation}
To make sure that the output of DMRG, say $\ket{\rho_{\text{out}}}$, is an eigenstate with a good enough numerical precision, we also compute the variance of the vectorized Lindbladian $\mathbb{L}$ of the state $\ket{\rho_{\text{out}}}$ as
\begin{equation}
    \langle \rho_{\text{out}} | \mathbb{L}^\dagger \mathbb{L} |  \rho_{\text{out}} \rangle - | \langle  \rho_{\text{out}} | \mathbb{L}|  \rho_{\text{out}} \rangle |^2~,
\end{equation}
where $\langle \rho_{\text{out}} | \rho_{\text{out}} \rangle = 1$. If $\ket{\rho_{\text{out}}}$ is an eigenstate of $\mathbb{L}$, then the variance should be 0. 
Thus calculating the variance serves as a good check to ensure that we have an eigenstate of $\mathbb{L}$, which typically has a value of about $10^{-10}$ in our calculations.

\section{Other perturbations}
\label{app:other_perturbations}

\begin{figure*}[htbp]
\captionsetup[subfigure]{labelformat=empty,captionskip=-20pt}
 \centering
 \subfloat[\label{fig:cs1_other_pert}]{%
  \includegraphics[width=0.33\textwidth]{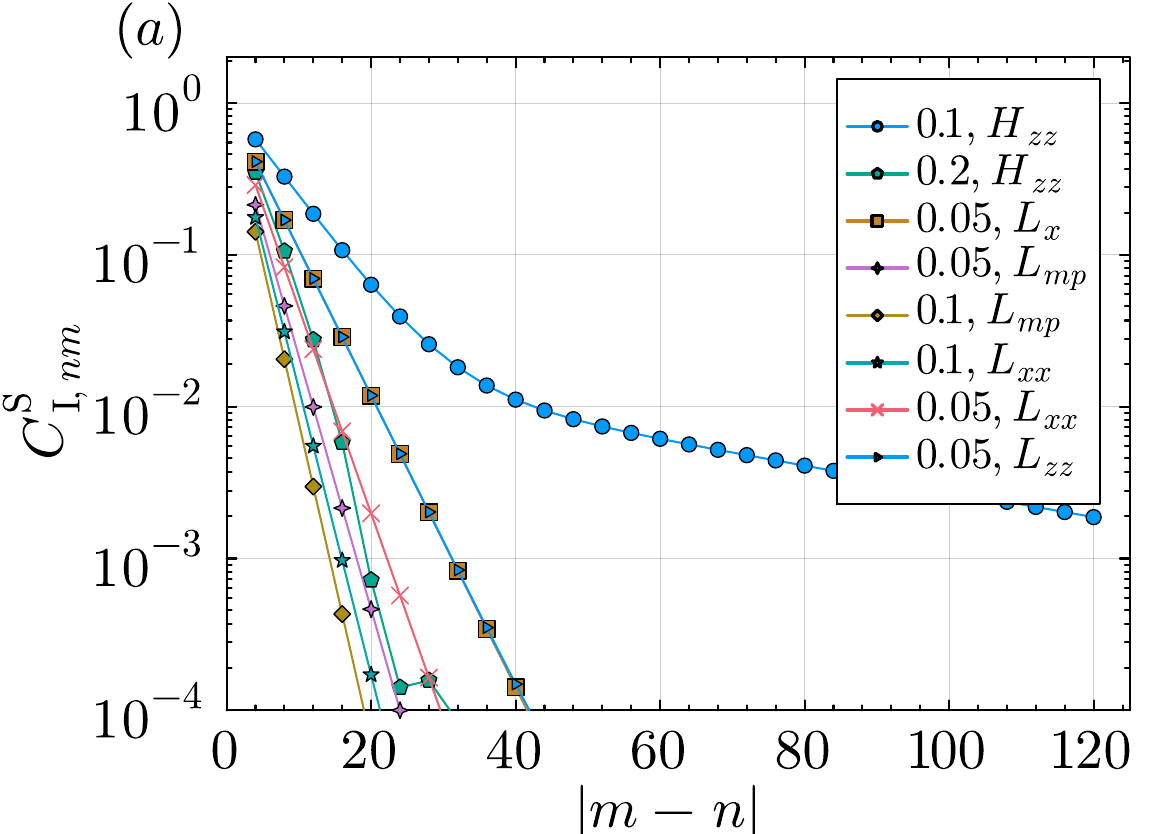}%
}
     \hfill
     \subfloat[\label{fig:cs2_other_pert}]{%
     \includegraphics[width=0.33\textwidth]{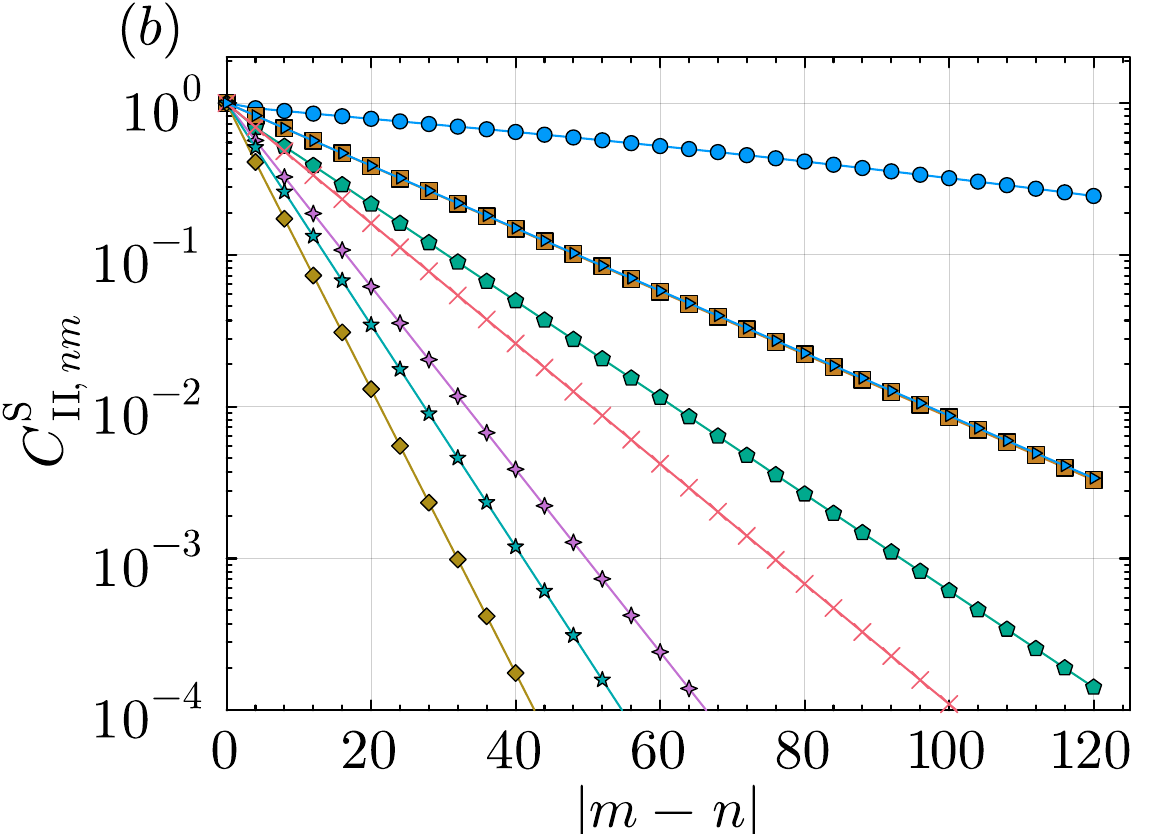}%
     }
    \hfill
     \subfloat[\label{fig:cw2_other_pert}]{%
     \includegraphics[width=0.33\textwidth]{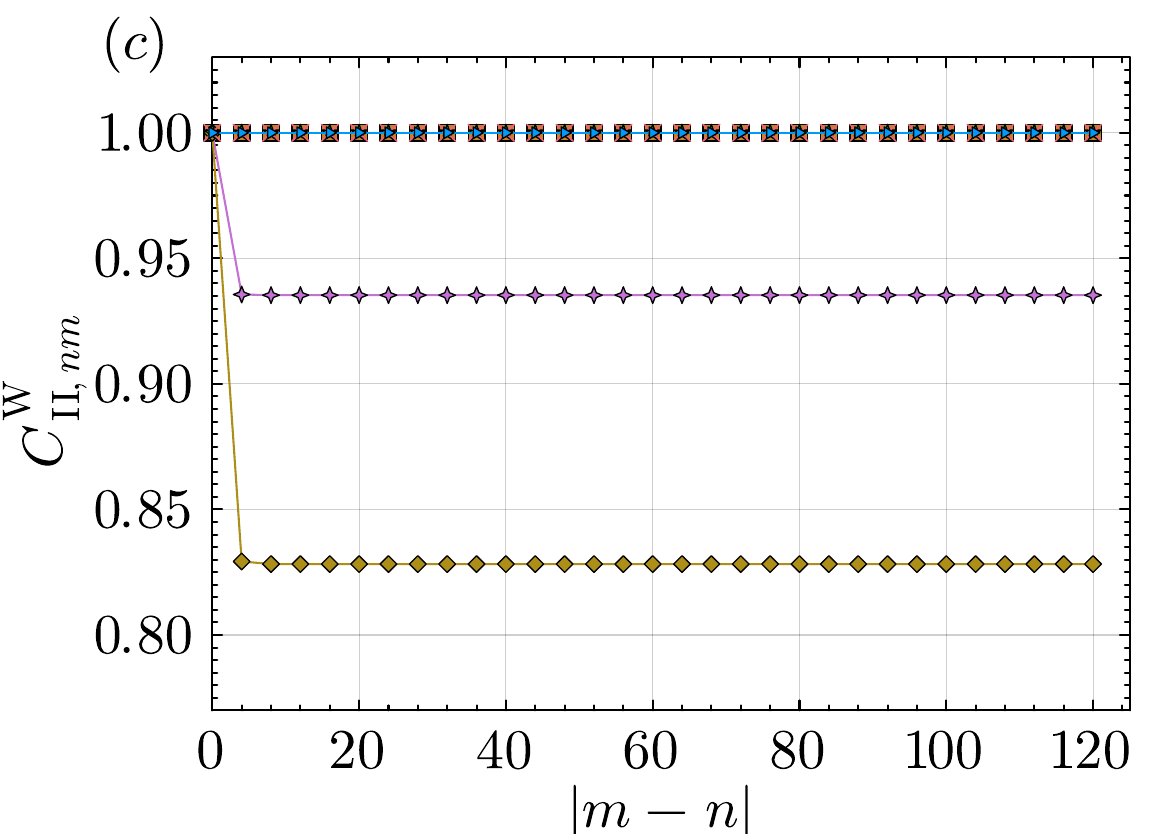}%
     }
    \caption{ $(a)$ \renyi-1 strong-string correlator $C_{\text{I}, nm}^S$ as a function of the length of the string. $C_{\text{I}, nm}^S \rightarrow 0$ with increasing $|m-n|$ for any of the perturbations considered. The legend shows the perturbations that we consider. The numbers in the legend are the perturbation strengths. These are the coefficients with which the perturbation is added to the superoperator (and not the coefficient of the jump operators). For example, ''$0.05, L_{mp}$'' corresponds to the Lindbladian $\mathcal{L} = \mathcal{L}_{\mathcal{C}} + 0.05 \sum_{j} \mathcal{D} \left[ L_{mp, j} \right] $.
    $(b)$ \renyi-2 strong-string correlator $C_{\text{II}, nm}^S$ as a function of the length of the string. $C_{\text{II}, nm}^S$ decays exponentially with $|m-n|$. The curve for $\lambda = 1$ is not visible within the limits of the plots because $C_{\text{II}, nm}^S=0$. We see that it decays exponentially to zero for all perturbations. 
    $(c)$ \renyi-2 weak-string correlator $C_{\text{II}, nm}^W$ as a function of the length of the string. $C_{\text{II}, nm}^W$ remains nonzero for all perturbations. }
    \label{fig:other_pert}
\end{figure*}
\begin{figure}[htbp]
     \centering
     \includegraphics[width=0.37\columnwidth]{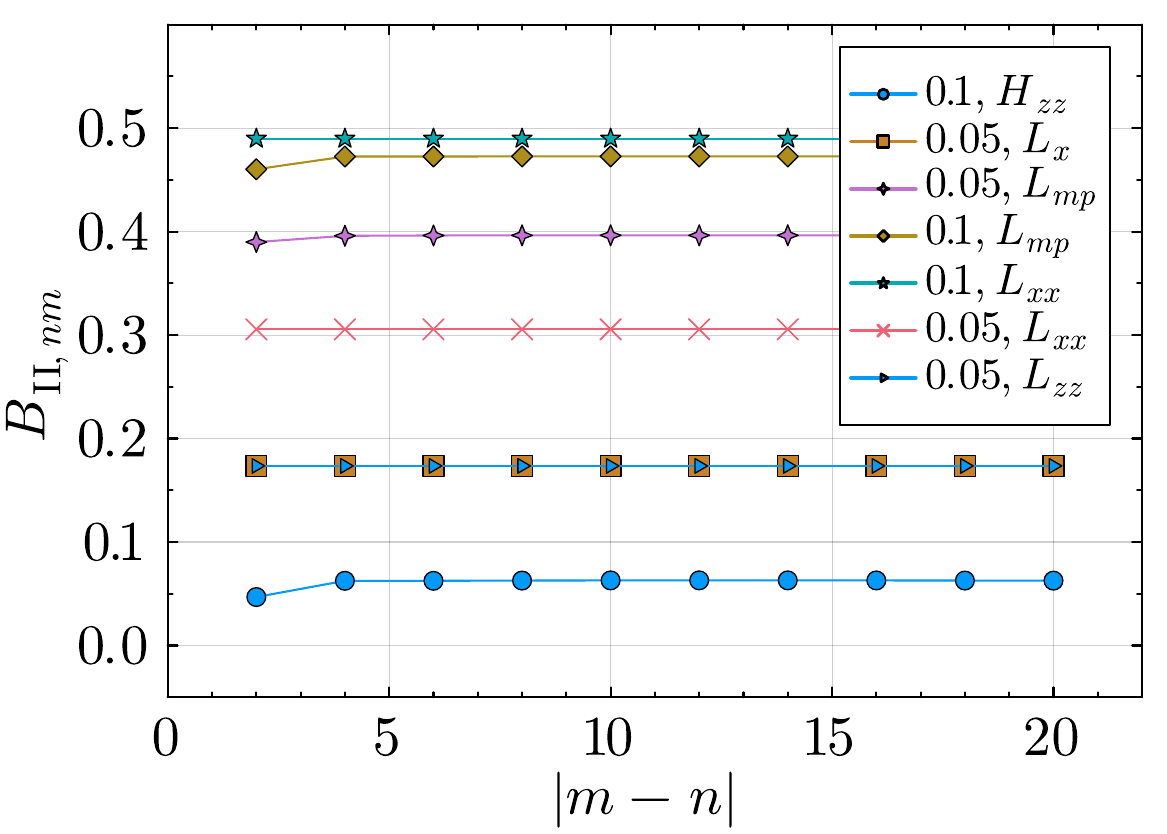}
     \caption{$B_{\text{II}, nm}$ for various other perturbations that we have tried. We find that the correlator saturates to a nonzero constant as $|m-n|$ is increased indicating that the strong symmetry is broken. The legend indicates the perturbation strength and the perturbation type. Similar to~\cref{fig:other_pert}, the numbers in the legend indicate the coefficient with which the perturbed superoperator is added to $\mathcal{L}_{\mathcal{C}}$ and not the coefficient of the jump operator. }
     \label{fig:b2_other_pert}
\end{figure}

In this appendix, we present the results for other symmetric perturbations that we have examined, which include Hamiltonian perturbations as well as Lindbladian perturbation.
Specifically, we consider the Hamiltonian perturbation $H_{zz} = \sum_{j} Z_{2j} Z_{2j+2}$ and the Lindbladian perturbation of the form 
$\sum_j \gamma _{\alpha} \mathcal{D} \left[ L_{\alpha, j} \right]$, where the jump operators $L_{\alpha, j} $ can take one of the following (with $\alpha \in \{x, mp, xx, zz\} $):
\begin{align}
L_{x, j} = X_{2j+1}, \quad 
L_{mp,2j} = \ket{-}\bra{+}_{2j} \otimes \ket{-} \bra{+}_{2j+2}, \quad 
L_{xx, j} = X_j X_{j+1} , \quad 
L_{zz, j} = Z_{2j} Z_{2j+2}~, 
\end{align}
and the sum is over all the allowed values of $j$. These perturbations are added separately with a small perturbation strength to the parent Lindbladian $\mathcal{L}_{\mathcal{C}}$. It is interesting to note that in the CZ dual picture, $L_{x, j} \to \tilde{L}_{x, j}  = L_{2,j}=Z_{j-1}X_{j}Z_{j+1}$ [see Eq.~\eqref{eq:parent_jumps} in the main text], while $L_{zz, j} $ stays invariant as $L_{zz, j} \to \tilde{L}_{zz, j} = Z_{2j} Z_{2j+2} $, so that those two perturbations generate identical strong string order correlators; this agrees well with the numerical DMRG calculations, see Fig.~\ref{fig:other_pert} and the following discussions.

We calculate the steady state of the perturbed Lindbladian in the $s_{\text{ket}} = s_{\text{bra}} = w = +1$ symmetry sector using DMRG. 
We then calculate the six string order parameters defined in~\cref{sec:string_order_numerics} and the three connected correlators defined in~\cref{sec:ssb} to probe SW-SSB.
We find that all the correlators indicate strong-to-weak SSB on even sites, with some of the correlators plotted in~\cref{fig:other_pert,fig:b2_other_pert}.
For these plots, the system size is $2N = 300$ qubits, and the left end of the correlator for $C^S_{\text{I}, nm}, C^S_{\text{II}, nm}$ is kept fixed at $n=101$ while the right end is varied in $m\in \{ 103, 105, \ldots , 221\} $. 
For $C_{\text{II}, nm}^W$, we fix $n=102$ and vary $m \in \{ 104, 106, \ldots, 222\}$; for $B_{\text{II}, nm}$, we fix $n = 140$ and vary $m\in\{142, 144, \ldots, 160\}$.
We find that the connected correlators $A_{\text{I}, nm}$ and $A_{\text{II},nm}$ are identically zero.
Also, unlike the $\mathcal{L}_\lambda$ Lindbladian considered in~\cref{sec:ssb}, the $B_{\text{II}, nm}$ correlator is independent of $|m-n|$  for a generic perturbation as shown in~\cref{fig:b2_other_pert}.
In conclusion, our numerical results suggest that the steady-state mixed-state SPT state $\rho_{\mathcal{C}}$ is generically unstable to SW-SSB (unless the perturbation only introduces weak symmetry defects like the one in~\cref{sec:weak_defects}).

\section{Finite-size scaling of string order parameters}

The results of~\cref{sec:string_order_numerics,sec:ssb} show that the steady-state mixed-state SPT order is unstable for any small perturbation $\lambda > 0$, leading to strong-to-weak SSB.
In this appendix, we perform finite-size scaling of the \renyi-1 strong-string order parameter $C_{\text{I}, nm}^S$ which further shows the scaling relation between the perturbation strength and system size $N$. 
First, we show our scaling collapse result for the following Lindbladian:
\begin{equation}
    \mathcal{L}_{\gamma_{\text{ZZ}}} = \mathcal{L}_\mathcal{C} + \gamma_{\text{ZZ}} \sum_{j=1}^{N-1} \mathcal{D} [Z_{2j}Z_{2j+2}]~.
\end{equation} 

\begin{figure}[htbp]
\captionsetup[subfigure]{labelformat=empty,captionskip=-20pt}
 \centering
 \subfloat[\label{fig:scaling_collapse_cs1_Lzz}]{%
 \hfill
  \includegraphics[width=0.43\textwidth]{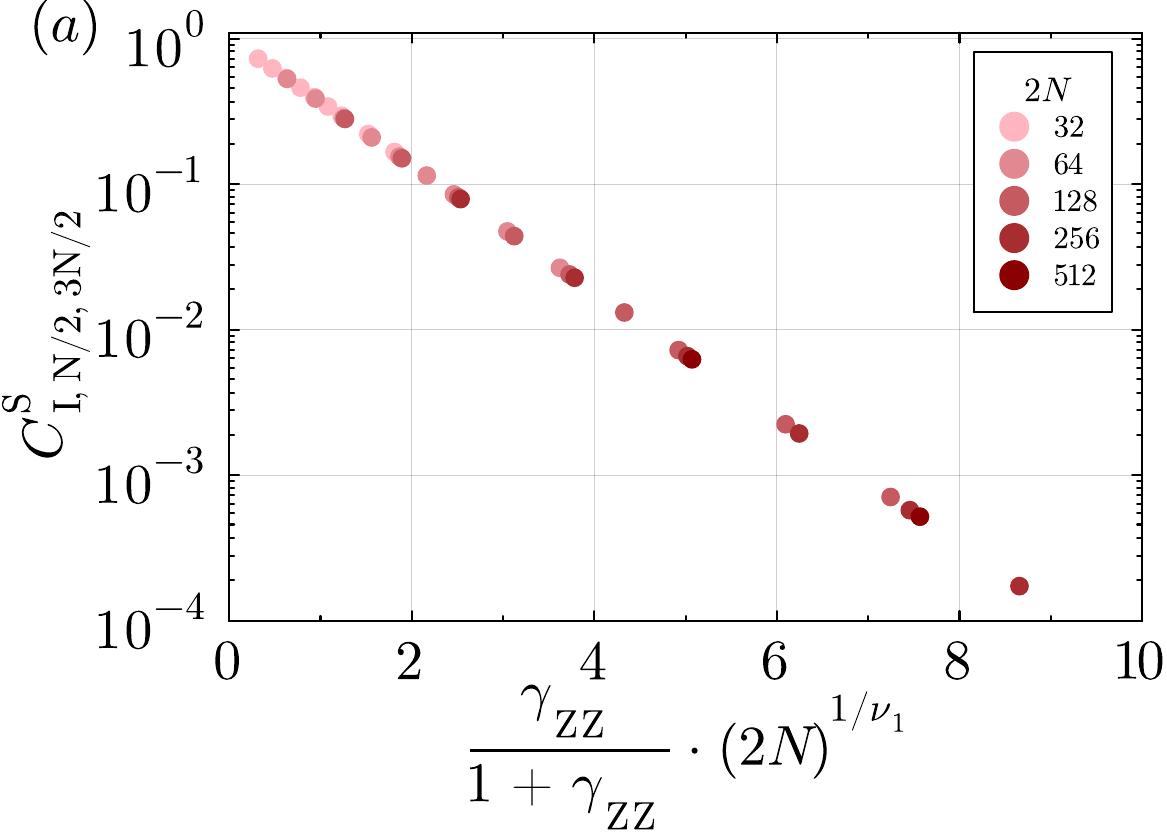}%
}
     \hfill
     \subfloat[\label{fig:scaling_collapse_cs2_Lzz}]{%
     \includegraphics[width=0.43\textwidth]{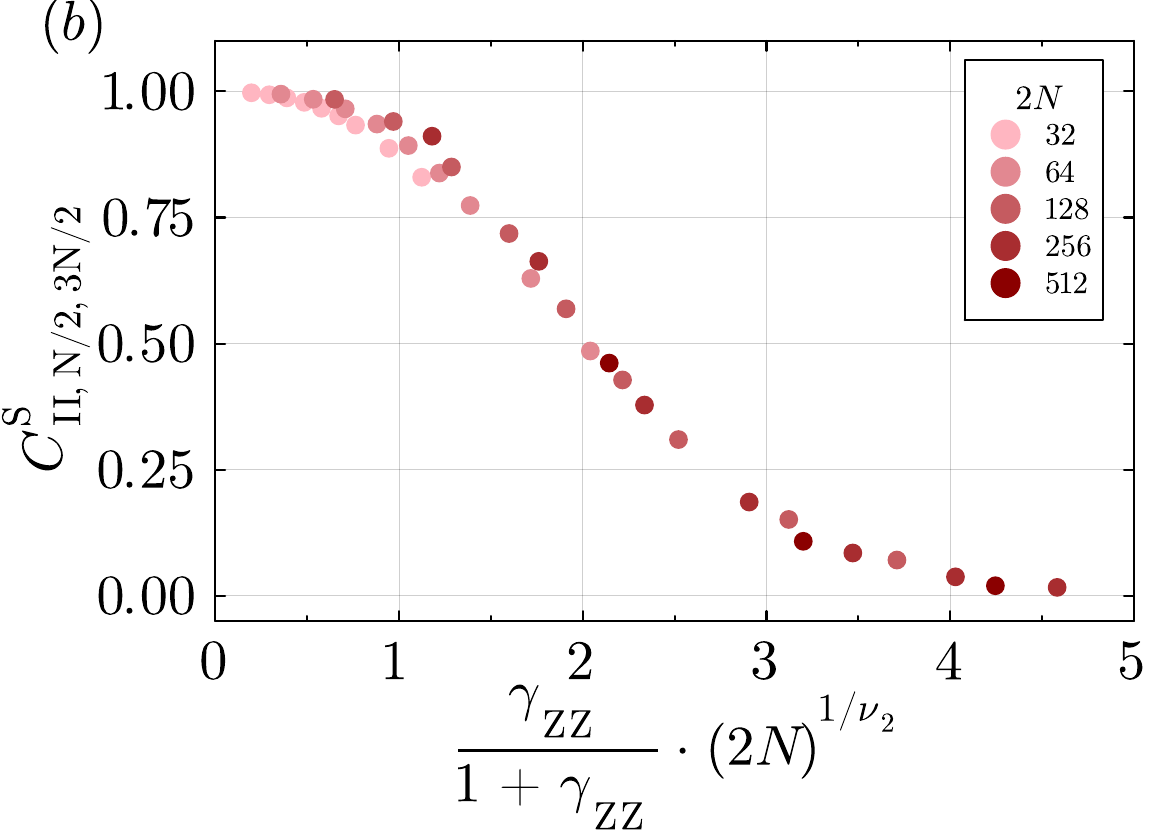}%
     }
     \hfill
    \caption{ Scaling collapse of $(a)$ \renyi-1 $C_{\text{I}, nm}^S$ and $(b)$ \renyi-2 $C_{\text{II}, nm}$ strong string correlators of the steady state of $\mathcal{L}_{\gamma_{\text{ZZ}}}$. We fix the end points $n$ and $m$ to be $N/2$ and $3N/2$, respectively. Here we find $\nu_1 =1$ in $(a)$ and $\nu_2 = 1.16$ in $(b)$ from the scaling collapse.}
    \label{fig:scaling_collapse_Lzz}
\end{figure}
For a range of system sizes that vary from $2N = 32$ to $2N = 512$ and perturbation strengths $\gamma_{\text{ZZ}} \leq 0.06$, we calculate the strong-string order parameters $C_{\text{I}, nm}^S$ and $C_{\text{II}, nm}^S$, where $n = N/2$ and $m = 3N/2$ so the string length is $|n-m|=N$.     
We use the following scaling ansatz:
\begin{equation}
    C_{\text{I}, N/2, 3N/2}^S = \mathcal{F} \left( \frac{\gamma_{\text{ZZ}}}{1 + \gamma_{\text{ZZ}}}  (2N)^{1/\nu_1} \right), \quad\quad\quad C_{\text{II}, N/2, 3N/2}^S = \mathcal{G} \left( \frac{\gamma_{\text{ZZ}}}{1 + \gamma_{\text{ZZ}}}  (2N)^{1/\nu_2} \right)~,
\end{equation}
for some function $\mathcal{F}$.
We find good data collapse for $\nu_1=1$ as shown in~\cref{fig:scaling_collapse_cs1_Lzz}, which also shows that $\mathcal{F}(x)$ decays exponentially with $x$.
Note that here we simply try $\nu_1 = 1$ and find good collapse rather than fitting the data to a function and finding the value of $\nu_1$ for which the fit is the best.
For the \renyi-2 correlator, $\C_{\text{II}, nm}^S$, we fit a sixth degree polynomial and find the value of $\nu_2$ that minimizes the residual error of the fit.
Doing so gives us the optimal value of $\nu_2 = 1.16$. and we find that the data collapses for this value of $\nu_2$ as shown in~\cref{fig:scaling_collapse_cs2_Lzz}.

We perform a similar analysis for the interpolated Lindbladian $\mathcal{L}_{\lambda}$ in~\cref{eq:interpolating_lind}, and we employ the scaling ansatz
\begin{equation}
    C_{\text{I}, N/2, 3N/2}^S = \mathcal{F} \left( \lambda (2N)^{1/\nu_1} \right),\quad\quad\quad C_{\text{II}, N/2, 3N/2}^S = \mathcal{G} \left( \lambda (2N)^{1/\nu_2} \right)~.
\end{equation}
Again we find an excellent collapse for $\nu_1 = 1$ as shown in ~\cref{fig:scaling_collapse_cs1_Lzz} without any fitting.
For $C_{\text{II, nm}}^S$, we again minimize the residual error in fitting a sixth degree polynomial to the data to determine the optimal value of $\nu_2$ at which the collapse happens.
We find the optimal value of $\nu_2$ to be  $1.19 $, and the data collapses reasonably well as shown in~\cref{fig:scaling_collapse_cs2_Lzz}.

\begin{figure}[htbp]
\captionsetup[subfigure]{labelformat=empty,captionskip=-20pt}
 \centering
 \subfloat[\label{fig:scaling_collapse_cs1_lam}]{%
 \hfill
  \includegraphics[width=0.43\textwidth]{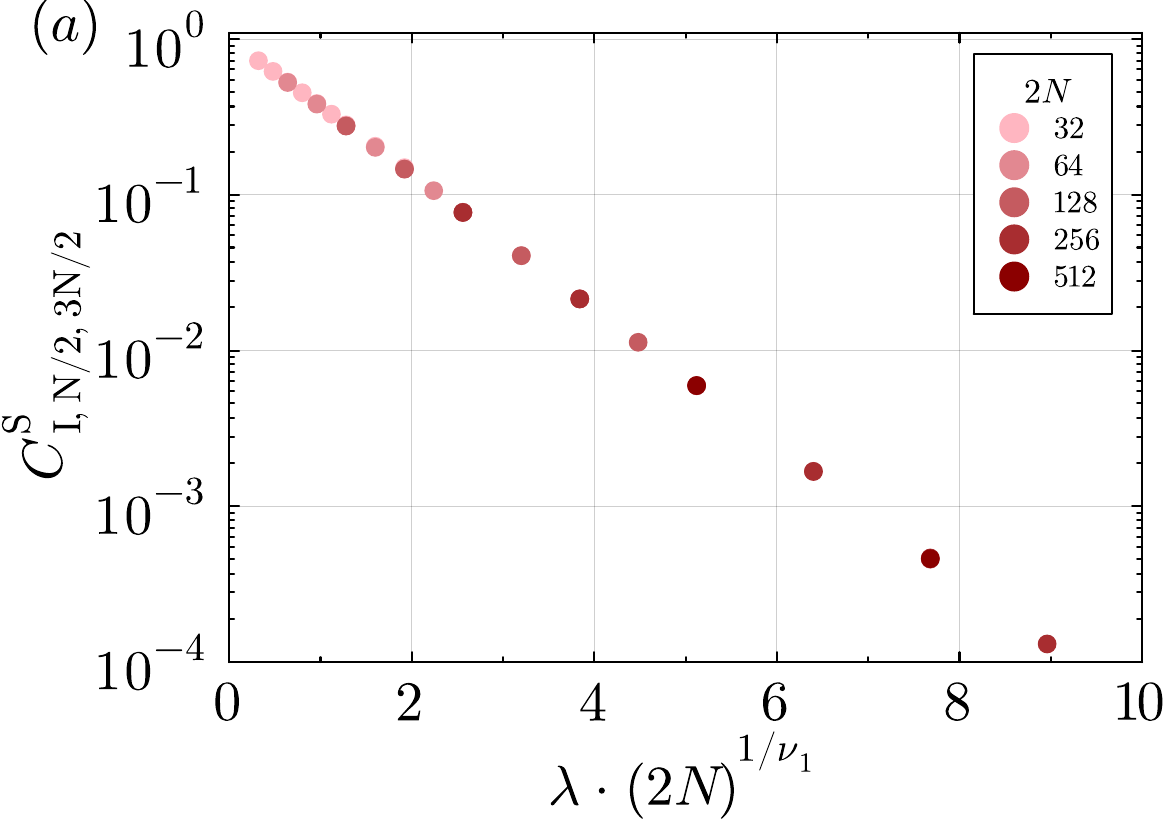}%
}
     \hfill
     \subfloat[\label{fig:scaling_collapse_cs2_lam}]{%
     \includegraphics[width=0.43\textwidth]{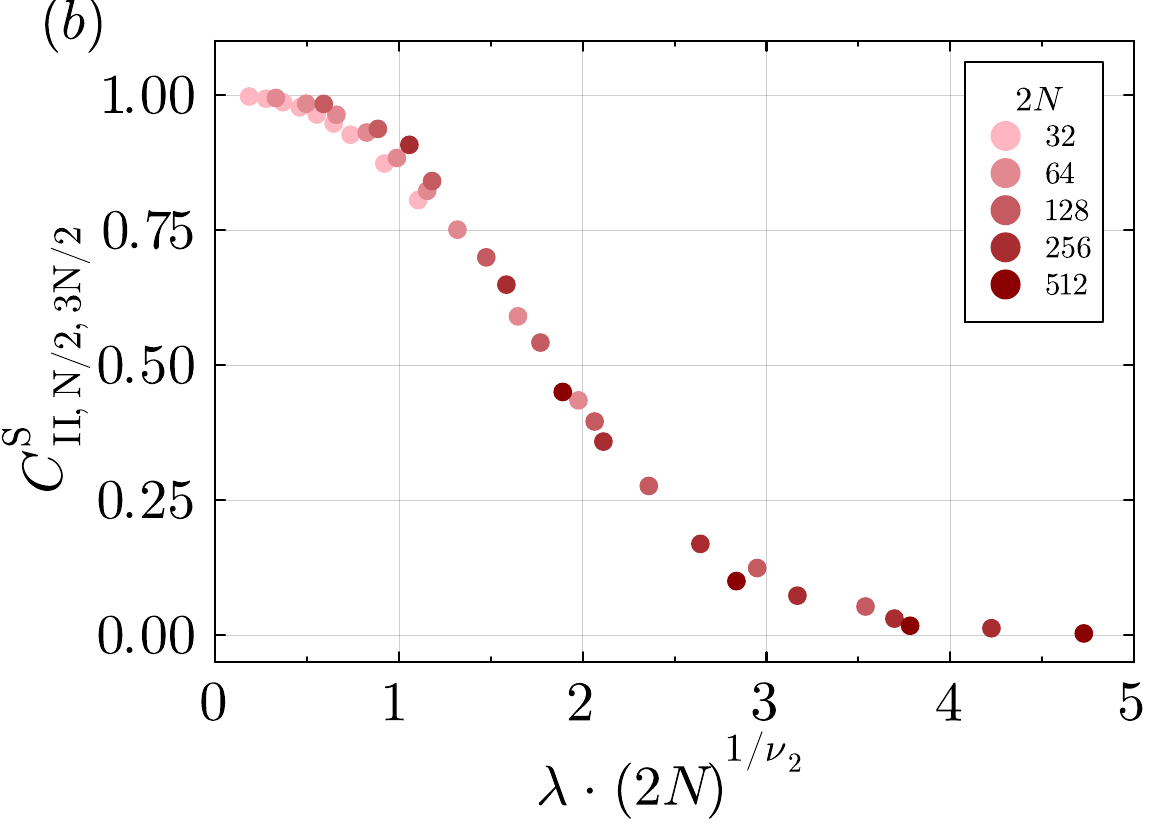}%
     }
     \hfill
    \caption{ Scaling collapse of $(a)$ \renyi-1 $C_{\text{I}, nm}^S$ and $(b)$ \renyi-2 $C_{\text{II}, nm}$ strong string correlators of the steady state of $\mathcal{L}_{\lambda}$. We fix the end points $n$ and $m$ to be $N/2$ and $3N/2$, respectively. Here we find $\nu_1 =1$ in (a) and $\nu_2 = 1.19$ in (b).}
    \label{fig:scaling_collapse_lam}
\end{figure}

\section{Free fermion solution to the Lindbladian dynamics with ZZ perturbation}

\label{app_sec:fermion}

As mentioned in the main text, dynamics generated by the parent Lindbladian as well as a certain class of perturbations can be exactly solved by the mapping to a non-Hermitian free fermion Hamiltonian. In this appendix, we discuss in more detail  the solution and its implications for mixing times and various order parameters.

We consider the following Lindbladian acting on all the even sites:
\begin{align}
\mathcal{L}_{\text{even}} &= \sum_{\substack{j=1}}^{N} \mathcal{L}_{\text{even},j} 
, \\
\mathcal{L}_{\text{even},j} 
&= (1-\lambda ) 
\gamma _{j} \eta _{\leftarrow} \mathcal{D} [
Z_{2j} Z_{2j+2}
( \mathbbm{1} - X_{2j}) /2]
\nonumber \\
& +(1-\lambda ) 
\gamma _{j} \eta _{\rightarrow} \mathcal{D} [
Z_{2j-2} Z_{2j}
( \mathbbm{1} - X_{2j}) /2] 
\nonumber \\
& + \lambda  \tilde{\gamma} _{2,j} 
\mathcal{D} [Z_{2j-2} Z_{2j}]
. 
\end{align}
Our goal is to solve for the spectrum of $\mathcal{L}_{\text{even}} $. To achieve this, we first note that, since the Lindbladian has a weak symmetry generated by $X_{2j}$ for all $j$ (which is a subgroup of the strong symmetry $Z_2^{S}$), the dynamics generated by the Lindbladian is decoupled between the $X_{2j}$ basis populations (i.e., the diagonal part) versus the coherences (i.e., the off-diagonal part) of the density matrix. Therefore, we can exactly map the action of the Lindbladian to a non-Hermitian Hamiltonian acting on the vectors spanned by the corresponding basis states to the $X_{2j}$ population or the coherence sectors. For example, focusing on the population sector that also hosts the even-parity steady state, we consider the following mapping (introducing the $X$ basis states $X_{2j} | \pm \rangle _{2j} = \pm | \pm \rangle _{2j} $, and we use $\left | \psi \right ) $ to distinguish the mapped state from the pure state wavefunction of the original system):
\begin{align}
\label{seq:LtonHM.map}
| + \rangle _{2j} \langle + |  \Rightarrow 
\left | \downarrow \right ) _{j}
, \quad 
| - \rangle _{2j} \langle - |  \Rightarrow 
\left | \uparrow \right )_{j}
, 
\end{align}
in which case the Lindbladian equation of motion can be mapped to a linear equation of motion of the state vector $|\rho_t)$:
\begin{align}
\partial_t \rho_t = \mathcal{L}_{\text{even}}  \rho_t
\Rightarrow 
 \partial_t |\rho_t) = \mathcal{M}_{\text{even}}  |\rho_t)~. 
\end{align}
Note that the equation above takes a form similar to the standard Schr{\"{o}}dinger equation, but with a non-Hermitian dynamical generator (which in this case is given by the non-Hermitian operator $i\mathcal{M}_{\text{even}} $).
The generator $\mathcal{M}_{\text{even}} $ of non-Hermitian dynamics is given by 
\begin{align}
&\mathcal{M}_{\text{even}} = \sum_{\substack{j=1}}^{N} \mathcal{M}_{\text{even},j} \nonumber \\
\label{seq:nHferm.dynmat}
=& (1-\lambda ) \sum_{\substack{j=1}}^{N} 
\gamma _{j} ( \eta _{\rightarrow} \mathcal{M}_{\rightarrow,j} +\eta _{\leftarrow} \mathcal{M}_{\leftarrow,j} )
+ \lambda  \tilde{\gamma} _{2,j} \mathcal{M}_{ZZ,j} 
,  
\end{align}
which can be written in terms of the spin raising and lowering operators $\hat \sigma  _{j}^{-} = \left | \downarrow  ) _{j}  ( \uparrow \right | 
= (\hat \sigma  _{j}^{+} ) ^{\dag} $ as
\begin{align}
\mathcal{M}_{\rightarrow,j} =& 
\hat \sigma  _{j}^{-} \hat \sigma  _{j+1}^{+} +\hat \sigma  _{j}^{-} \hat \sigma  _{j+1}^{-}  - \hat \sigma  _{j}^{+} \hat \sigma  _{j}^{-} 
\\
\mathcal{M}_{\leftarrow,j} =& \hat \sigma  _{j}^{-} \hat \sigma  _{j-1}^{+} +\hat \sigma  _{j}^{-} \hat \sigma  _{j-1}^{-}  - \hat \sigma  _{j}^{+} \hat \sigma  _{j}^{-} 
, \\
\mathcal{M}_{ZZ,j} = &  
(\hat \sigma  _{j-1}^{-} +\hat \sigma  _{j-1}^{+} )(\hat \sigma  _{j}^{-} +\hat \sigma  _{j}^{+} ) -  \mathbbm{1} 
. 
\end{align}
It is interesting to note that the non-Hermitian Hamiltonian $\mathcal{M}_{\text{even}} $ describes a one-dimensional spin chain that only involves quadratic terms acting on nearest-neighbor sites, which can be mapped to a free fermion non-Hermitian Hamiltonian via the standard Jordan-Wigner (JW) transformation. More concretely, introducing fermion operators via the JW prescription as 
\begin{align}
\hat \sigma  _{j}^{-} = e ^{i \pi 
\sum _{\ell=1} ^{j-1} \hat f _{\ell}^{\dag} \hat f _{\ell}}
\hat f _{j} 
, \, 
\hat \sigma  _{j}^{+} = \hat f _{j}^{\dag} e ^{-i \pi 
\sum _{\ell=1} ^{j-1} \hat f _{\ell}^{\dag} \hat f _{\ell}}
, 
\end{align}
we obtain 
\begin{align}
\mathcal{M}_{\rightarrow,j} =& 
-\hat f _{j}  \hat f _{j+1}^{\dag} 
-\hat f _{j}  \hat f _{j+1}  - \hat f  _{j}^{\dag} \hat f _{j}  
\\
\mathcal{M}_{\leftarrow,j} =& - \hat f _{j}  \hat f _{j-1}^{\dag} 
+ \hat f _{j}  \hat f _{j-1}  - \hat f  _{j}^{\dag} \hat f _{j} 
, \\
\mathcal{M}_{ZZ,j} = &  
(\hat f _{j-1} ^{\dag} -\hat f  _{j-1} )
(\hat f  _{j} + \hat f _{j} ^{\dag}  ) -  \mathbbm{1} 
. 
\end{align}
The spectrum of the free fermion non-Hermitian Hamiltonian $\mathcal{M}_{\text{even}} $ can be generally computed by solving for its right eigenmodes $\hat \beta _{\ell} $, which are the solutions to the following eigenvalue equation: 
\begin{align}
\label{seq:nHferm.eigv.eq}
[ \mathcal{M}_{\text{even}}  , 
\hat \beta _{\ell, \pm} ] =  \mathcal{E} _{\ell,\pm } \hat \beta _{\ell, \pm} 
. 
\end{align}
Here, we make use of the fact that the eigenvalues of 
$ \mathcal{M}_{\text{even}}$ come in pairs $\nu _{\ell,+ } = - \nu _{\ell,- } ^{*}$,  as $ \mathcal{M}_{\text{even}}$ is invariant under the  anti-unitary transformation consisting of the combination of Hermitian conjugation and reflection of the one-dimensional chain $\hat f _{j} \to \hat f _{N+1-j} $. Without loss of generality, we require $\text{Re} \mathcal{E} _{\ell,+ } \ge 0$.

If the lattice is translationally invariant, we can take $ \gamma _{2} = \tilde{\gamma} _{2,j} =1 $ and $\eta _{\leftarrow} + \eta _{\rightarrow} =1$ without loss of generality. The eigenvalue equation Eq.~\eqref{seq:nHferm.eigv.eq} can then be solved analytically by Fourier transforming the real space free fermion to momentum space. After some algebra, we obtain 
\begin{align}
\label{seq:nHferm.eigvals}
\mathcal{E} _{k, \pm }  
=  - i (\eta _{\rightarrow}  - \eta _{\leftarrow} ) (1-\lambda )\sin k 
 \pm  ( 1+ \lambda - \cos k  +\lambda \cos k)~,  
\end{align}
with the corresponding right eigenmode operators given by 
\begin{align}
\hat \beta _{k,+} & \propto  
\sin  \frac{k}{2} \hat f _{k}  
-  i  \lambda  \cos  \frac{k}{2} \hat f _{-k} ^{\dag} 
. 
\end{align}
The steady state as defined by $\mathcal{M}_{\text{even}}  |\rho_{t \rightarrow \infty}) = 0$ can thus be written as 
\begin{align}
\label{seq:nHferm.rhoNESS.def}
|\rho_{t \rightarrow \infty}) = e ^{\sum _{j,\ell} \frac{C _{j\ell} }{2}
\hat f _{j} ^{\dag} \hat f _{\ell} ^{\dag}  }
| \text{vac} ) 
,  
\end{align}
where the coefficients $C _{j\ell}$ can be expressed as 
\begin{align} 
C _{j\ell} = \frac{1}{N } \sum _{k } \frac{i  \lambda  \cos  \frac{k}{2} }{  \sin  \frac{k}{2} }  e ^{ik(j- \ell) }
. 
\end{align}
For a chain with even number of sites, we have $k = \pm \frac{\pi}{N},  \pm \frac{3\pi}{N}
, \ldots ,  \pm \frac{N-1}{N}\pi$, and the above sum over $k$ can be evaluated explicitly as 
\begin{align} 
C _{j\ell} & = \frac{2 \lambda }{ N } \sum _{m=1 } ^{\frac{N}{2}} \frac{
\cos  \frac{2m-1}{2N} \pi \sin  \frac{(2m-1)(j- \ell)}{N} \pi }{\sin \frac{2m-1}{2N} \pi  }  
\\
\label{seq:quad.fermion.cij.sol}
& = \lambda \left[ \delta _{j\ell} -1 \right ]
\text{sgn}(j - \ell) 
. 
\end{align}
Note that, in the unperturbed case ($\lambda=0$), the steady state is $|\rho_{t \rightarrow \infty}) =| \text{vac} )$. 
We also note that the steady state solution given by Eq.~\eqref{seq:quad.fermion.cij.sol} agrees with the results in Ref.~\cite{lushnikov1986binary}, which considered the specific case with $\eta _{\rightarrow} = \eta _{\leftarrow}  $. However, Ref.~\cite{lushnikov1986binary} did not consider the case with bidirectional hoppings in the unperturbed Lindbladian, and to the best of our knowledge, previous works did not compute the correlators in this non-Hermitian free fermion problem, which we show below. 

As shown above, the steady state structure in fact does not depend on the directionality of hopping (as parametrized by $\eta _{\rightarrow} / \eta _{\leftarrow} $). 
Transforming the fermion operators back into the spin ones, we have 
\begin{align}
\label{seq:rhoss.wf.spin}
|\rho_{t \rightarrow \infty}) = e ^{ \lambda  \sum _{j<\ell}  
\hat \sigma  _{j}^{+}  
\prod _{m=j+1} ^{\ell-1} (-\hat \sigma ^{z} _{m} )
\hat \sigma  _{\ell}^{+} }
\left | \downarrow  \downarrow \ldots \downarrow \right  ) 
.  
\end{align}
We can further compute all two-body correlators making use of the solution in Eq.~\eqref{seq:nHferm.rhoNESS.def}, as 
\begin{align}
\frac{ (  \rho_{t \rightarrow \infty} | ( f  _{n} ^{\dag} -  f _{n} ) 
( f  _{m} +  f _{m} ^{\dag}  )  |\rho_{t \rightarrow \infty} )}{( \rho_{t \rightarrow \infty} | \rho_{t \rightarrow \infty} ) }
= & 
\begin{cases}
0 & n >m\\
\frac{\lambda-1}{\lambda+1} & n = m\\
\frac{4 \lambda   }{(1+\lambda) ^{2}}
\left ( \frac{1-\lambda}{1+\lambda} \right ) ^{m-n-1}
& n <m
\end{cases}
. 
\end{align}

\subsection{Mixing time}

We note that the full Lindbladian spectrum in the population sector can be generated by Eq.~\eqref{seq:nHferm.eigvals}. The Lindblad equation of motion for the coherences in the $X_{2j}$ basis can be mapped to a similar non-Hermitian Hamiltonian Eq.~\eqref{seq:nHferm.dynmat} by a modified mapping, which also corresponds to a non-Hermitian free fermion Hamiltonian, and we expect the coherences to decay faster compared to the minimal decay rate in the population sector. We can thus directly compute the spectral (Lindbladian) gap $\Delta _{\mathcal{L}} $, i.e.~the smallest real part of the Lindbladian eigenvalues, via the smallest real part of the free fermion non-Hermitian Hamiltonian eigenvalues $\text{Re} \mathcal{E} _{\ell,+ } \ge 0$. From Eq.~\eqref{seq:nHferm.eigvals}, we can calculate the Lindbladian gap as 
\begin{align} 
\Delta _{\mathcal{L}} (\lambda ) = \text{Re} \mathcal{E} _{k = \pm \frac{N-1}{N}\pi,+ }
= & 1-\cos \frac{\pi}{N}  + \lambda (2+\cos \frac{\pi}{N} )
.  
\end{align}
In the large-system-size limit, the spectral gap of the unperturbed Lindbladian with $\lambda =0$ thus scales as 
\begin{align} 
N \to \infty : \quad 
\Delta _{\mathcal{L}} (\lambda =0) \sim N^{-2}
. 
\end{align}
We note that this also sets a lower bound on the mixing time scaling of the open boundary Lindbladian in the thermodynamic limit, as in the open-boundary-conditions case, relaxation dynamics in the bulk is agnostic to the boundary conditions.

\subsection{Connected correlators \& strong-string order parameter}

It is worth noting that, when the Lindbladian steady state only involves populations in the $X_{2j}$ basis, then the correlators $A_{\text{I}, nm} $, $A_{\text{II}, nm} $ as defined in the main text must be trivially zero. 

Further, we can directly use the wavefunction in the mapping Eq.~\eqref{seq:LtonHM.map} to compute the \renyi-2 correlators and the \renyi-2 strong string order parameters, as by definition, the \renyi-2 expectation value of an operator $O$ can be written in terms of $| \rho  ) $ as 
\begin{equation} 
\langlee  O \otimes  O  \ranglee_\rho 
\vcentcolon=  
\frac{\Tr  ( \rho  O \rho  O^{T} )}{\Tr(\rho ^{2} ) }~
= \frac{ (  \rho | O' |\rho  )}{(  \rho | \rho  ) }~
,
\end{equation}
and expectation averages $( \rho | O|\rho )/( \rho | \rho  )$ of the mapped wavefunction can be computed using the free fermion state by Wick's theorem. As an example, we consider 
\begin{equation} 
B_{\text{II}, nm} \vcentcolon= \langlee  Z_n Z_m \otimes Z_n Z_m  \ranglee  - \langlee  Z_n  \otimes Z_n  \ranglee  \langlee  Z_m \otimes Z_m  \ranglee~,
\end{equation}
and one can straightforwardly show that $\langlee  Z_n  \otimes Z_n  \ranglee =0$ for all even $n$ and any steady state with only populations in the $X_{2j}$ basis, so that we have $B_{\text{II}, nm} = \langlee  Z_n Z_m \otimes Z_n Z_m  \ranglee $. From Eqs.~\eqref{seq:LtonHM.map} and~\eqref{seq:rhoss.wf.spin}, we thus have 
\begin{align}
& B_{\text{II}, nm} = 
\langlee  Z_n Z_{m} \otimes Z_n Z_{m}  \ranglee 
= \frac{ (  \rho | \sigma^{x} _{n} \sigma^{x} _{m} |\rho  )}{(  \rho | \rho  ) }~
, 
\end{align}
which can be rewritten in terms of the fermion operators as
\begin{align}
B_{\text{II}, nm}  
= & \frac{ (  \rho | ( f  _{n} ^{\dag} -  f _{n} ) \prod _{\ell=n+1} ^{m-1} (1-2 f _{\ell}^{\dag}  f _{\ell}) 
( f  _{m} +  f _{m} ^{\dag}  )  |\rho  )}{(  \rho | \rho  ) }
=  \frac{4 \lambda   }{(1+\lambda) ^{2}}
. 
\end{align}

We can also compute the two-copy strong symmetry order parameter, defined in Eq.~\eqref{eq:strong_2copy_nontrivial_correlator}, as 
\begin{align}
& C_{\text{II},nm}^S = 
\langlee \mathcal{S}_{nm}^S \otimes \IId  \ranglee_\rho
=   \frac{ (  \rho | \prod _{\ell=n} ^{m} (1-2 f _{\ell}^{\dag}  f _{\ell})  |\rho  )}{(  \rho | \rho  ) }
= \left ( \frac{1-\lambda }{1+\lambda} \right ) ^{|m-n|}
.  
\end{align}

\section{Overview of perturbation theory}

\label{app_sec:perturbation}

As mentioned in~\cref{sec:exactly_solvable_perturbation} of the main text, the exactly solvable mapping only applies to certain kinds of perturbations.
In this appendix, we discuss a general perturbation theory which can capture the physics of sufficiently small perturbations that cannot be exactly solved by the procedure of~\cref{sec:exactly_solvable_perturbation}. 
In particular, we will start from the CZ dual of the parent Lindbladian $\tilde{\mathcal{L}}_{\mathcal{C}}$ and add a perturbation $\mathcal{L}_{\text{pert}}$ which is a linear combination of the dissipators formed by the jump operators shown in~\cref{eq:dual_jumps} with a small perturbation strength.
Referring to~\cref{sec:dualperspective} of the main text, we define $\rho_0^0 = \tilde \rho_{\mathcal C}$ as the density matrix of the unperturbed state with $2N$ qubits, where $\tilde \rho_{\mathcal C}$ is the density matrix defined in~\cref{eq:CZ decohered cluster}:
\begin{equation}\label{eq:rhoc}
	\begin{aligned}
		\rho_{0}^0 &=  \bigotimes_{i \in {\rm odd}} \frac{{\mathbb I}_{i}}{2} \bigotimes_{j \in {\rm even}}  \ket{+}\bra{+}_{j}  = \bigotimes_{i \in {\rm odd}} \frac{{\mathbb I}_{i}}{2} \bigotimes_{j \in {\rm even}}  (\frac{1}{2} {\mathbb I}_{j} + \frac{1}{2} X_{j})~.
	\end{aligned}
\end{equation}
For simplicity, we only consider up to  the first order in perturbation theory. 
Note that each jump operator of~\cref{eq:dual_jumps}, when acting on $\rho_0^0$ defined in~\cref{eq:rhoc}, will either leave $\rho_0^0$ unchanged or give the following matrices:
\begin{equation}\label{eq:perturbation_ij}
	\rho_{i,j}^0 =   \bigotimes_{m \in {\rm odd}} \frac{{\mathbb I}_m}{2} \bigotimes_{n \in \rm{even}} F_n~,
\end{equation}
where $i< j$, and 
\begin{equation}
	F_n = \left\{ 
	\begin{aligned}
		&(\ket{+}\bra{+})_n = \frac{1}{2} {\mathbb I}_n^{2\times 2} + \frac{1}{2} X_n, \quad n \neq 2i, 2j~,\\
		&(\ket{-}\bra{-})_n = \frac{1}{2} {\mathbb I}_n^{2\times 2} - \frac{1}{2} X_n, \quad n = 2i, 2j~.
	\end{aligned}
    \right.
\end{equation}
The above perturbation process can be mapped to an  intuitive ``particle hopping'' picture, where the particles are defect sites with $Z_{i-1} X_i Z_{i+1} = -1$.  The $\rho_0^0$ ($\rho^0_{i,j}$) has zero (two) defect sites with $Z_{i-1} X_i Z_{i+1} = -1$, and can be mapped to a classical state with zero (two) particles. More specifically, we use the notation $| \cdot )$ to denote a classical basis state [c.f.~\cref{seq:LtonHM.map}].
We map the original problem with $2N$ qubits to a chain of hard-core bosons with $N$ sites, such that the density matrix in the original problem of qubits is mapped to the classical states in the hard-core boson problem as follows:
\begin{equation}
    \begin{aligned}
        \rho_0^0 \mapsto |{\rm vac}) \equiv \bigotimes_{l = 1}^{N} |0)_l~, \quad \rho_{i,j}^0 \mapsto |i,j)  \equiv \bigg{[}\bigotimes_{l=1,l\neq i,j}^{N}|0)_l  \bigg{]}\bigotimes |1)_i \bigotimes |1)_j~.
    \end{aligned}
\end{equation}
Here $|0)_i$ ($|1)_i$) describes the classical state where no (one) boson sits on site $i$ of the hard-core boson chain. The density matrix $\rho_0^0$ defined in~\cref{eq:rhoc} is mapped to the classical state with no bosons on the chain, which we denote as $|{\rm vac})$. The density matrix $\rho_{i,j}^0$ in~\cref{eq:perturbation_ij} is mapped to a classical state $|{i,j})$ that describes the state with two bosons on sites $(i,j)$ on the chain of hard-core bosons. 

After mapping to the particle hopping picture, the perturbed dynamics can be captured by the Markovian transition matrix ${\mathcal P} = {\mathcal P}_0 + \lambda {\mathcal P}_{p}$. The unperturbed transition matrix ${\mathcal P}_0$ corresponds to ${\mathcal L}_0$ and consists of processes in which one of the two bosons in a classical state $|{i,j})$ hops rightwards.
For example, it can take $|{i,j})$ to $|{i,j + 1})$.
The perturbation to the transition matrix, $\lambda {\mathcal P}_{p}$, corresponds to ${\mathcal L}_{\rm pert}$.
The perturbation $\lambda {\mathcal P}_p$ leads to  two new effects in addition to the effects produced by ${\mathcal P}_0$. 
The first effect is that it can generate a pair of bosons at sites $i,j$ out of $|{\rm vac})$ [i.e.~it can take the classical state from $|{\rm vac})$ to the state $|i,j)$].
The second effect is that it can lead to the simultaneous hopping of two bosons sitting on sites $(i,j)$ to sites $(p,q)$, where $i \neq p$ and $j \neq q$. 
In other words, it takes one classical state $|{i,j})$ to another classical state $|{p,q})$.

We further denote the steady state of the unperturbed transition matrix, i.e, the right eigenvector of ${\mathcal P}_0$ with eigenvalue zero, as $|{R_0})$, so that ${\mathcal P}_0 |R_0) = 0$. 
Taking the first-order perturbed right eigenvector as $|{R}) = |{R_0}) + \lambda |{R_1})$, we have
\begin{equation}
	({\mathcal P}_0 + \lambda {\mathcal P}_p) (|{R_0}) + \lambda |{R_1})) = 0~.
\end{equation}
To leading order in $\lambda$, we have
\begin{equation}\label{Eq_R1}
	{\mathcal P}_0 |{R_1}) + {\mathcal P}_p |{R_0}) = 0\quad \implies \quad  |{R_1}) = - {\mathcal P}_0^{-1} {\mathcal P}_p |{R_0})~,
\end{equation}
where ${\mathcal P}_0^{-1}$ is the pseudoinverse of ${\mathcal P}_0$.

The steady state of the transition matrix ${\mathcal P}$ of the hard-core boson hopping problem can then be described by
\begin{equation}
	|{\vec c}) = (c_0, c_{1,2}, c_{1,3}, \ldots, c_{N-1,N})~,
\end{equation}
where $c_0$ is the coefficient of the vacuum state $|{\rm vac})$, $c_{1,2}$ is the coefficient of the state $|{1,2})$, $c_{1,3}$  is the coefficient of the state $|{1,3})$, etc... These coefficients are normalized to one according to
\begin{equation}\label{Eq_C_Normalization}
	\sum_{i=0}^{N(N-1)/2+1} c_i = c_0 + \sum_{i,j} c_{i,j}= 1~.
\end{equation}

After solving for the steady states of the perturbed particle hopping problem, we can determine the steady state of the original problem defined on $2N$ qubits via 
\begin{equation}
	\ket{\vec c} \mapsto \rho_{\vec c} \equiv c_0 \rho_0^0 + \sum_{i,j} c_{i,j} \rho_{i,j}^0~. 
\end{equation}
Here $\rho_{\vec c}$ captures the steady state in the presence of perturbation ${\mathcal L}_{\rm pert}$ on top of ${\rho}_0^0 = \tilde \rho_{\mathcal C}$. It is easy to check that ${\rm Tr}(\rho_{\vec c}) = 1$ under the normalization condition we considered.

One can then compute the string order parameters in the perturbed steady state up to first order in the perturbation. Consider the strong-string operator $\mathcal{S}_{nm}^S$, shown in~\cref{eq:string_op},  which acts on the original model of qubits as an example:
\begin{equation}
	\mathcal{S}_{nm}^S = Z_n \big{(}\bigotimes_{k \in {\rm even}} X_{k} \big{)}Z_m~,
\end{equation} 
with $n$ and $m$ odd, $n<k<m$, and $k$ even. Since we are working in the CZ dual picture, we consider the CZ dual of the string operator, $\tilde{\mathcal{S}}_{nm}^S$. $\tilde{\mathcal{S}}_{nm}^S$ is defined as
\begin{equation}\label{Eq_Xi}
	\begin{aligned}
		\tilde{\mathcal{S}}_{nm}^S & \equiv U^\dagger_{\rm CZ} \mathcal{S}_{nm}^S U_{\rm CZ} = U^\dagger_{\rm CZ} Z_n \bigg{(} \bigotimes_{\substack{n<k<m,\\k~{\rm even}} } X_k \bigg{)} Z_m U_{\rm CZ} \\
        &= Z_n \bigg{(} \bigotimes_{\substack{n<k<m,\\k~{\rm even}}} Z_{k-1} X_k Z_{k+1} \bigg{)} Z_m =  \bigotimes_{n<k<m,~k~{\rm even}} X_k,
	\end{aligned}
\end{equation}
where $U_{\rm CZ}$ is given in \cref{eq:U_CZ}.
It is easy to check that $\tilde{\mathcal{S}}_{nm}^S \rho_0^0 = \rho_0^0$. Similarly, we have $\tilde{\mathcal{S}}_{nm}^S \rho_{i,j}^0 = -\rho_{i,j}^0 $ for $n < 2i < m < 2j$ or $2i < n < 2j < m$, and $\tilde{\mathcal{S}}_{nm}^S \rho_{i,j}^0 = \rho_{i,j}^0$ otherwise. Given the above, we can compute the string order parameters $\tilde{C}_{\text{I},nm}^S = \langle \tilde{\mathcal{S}}_{nm}^S \rangle_\rho$ defined in Eq.~(\ref{eq:def_C_I_S}) as 
\begin{equation}\label{Eq_One_Copy}
	\begin{aligned}
		\tilde{C}_{\text{I},nm}^S = \langle \tilde{\mathcal{S}}_{nm}^S \rangle_\rho= \frac{{\rm Tr}(\tilde{\mathcal{S}}_{nm}^S \rho_{\vec c})}{{\rm Tr}\left[\rho_{\vec c}\right]} &= {\rm Tr}[\tilde{\mathcal{S}}_{nm}^S (c_0 \rho_0^0 + c_{1,2}\rho_{1,2}^0 + c_{1,3}\rho_{1,3}^0 + \cdots)]/{\rm Tr}\left[\rho_{\vec c}\right] = 1 - 2\sum_{(i,j) \in \zeta} c_{i,j}~,
	\end{aligned}
\end{equation}
where $\sum_{(i,j) \in \zeta}$ sums over all the cases that satisfy $n < 2i < m < 2j$ or $2i < n < 2j < m$. Note that we have used the normalization condition for $c_{0}$ and $c_{i,j}$ given in Eq.~(\ref{Eq_C_Normalization}). 
Thus, given a perturbation $\mathcal{L}_{\text{pert}}$, we can determine the perturbation to the transition matrix, $\mathcal{P}_p$. We can then calculate the coefficients $c_0$ and $ c_{i,j}$ efficiently numerically. Plugging these coefficients into~\cref{Eq_One_Copy} gives us the string correlator up to first order in perturbation theory.

\section{Stabilizer-state calculations}\label{app:stabilizer}

In this appendix, we briefly outline the algorithm to calculate quantities like $|\la \psi | \phi \ra |^2$ and $\la \psi|A | \phi \ra \la \phi| B |\psi\ra $ for stabilizer states $|\psi\ra$ and $|\phi\ra$ and Pauli operators $A$ and $B$.

Assume the stabilizer states $|\psi\ra$ and $|\phi\ra$ are stabilized by the stabilizers generated by $\la g_1, \cdots, g_n \ra$ and $\la h_1, \cdots, h_n \ra$, respectively. 
We then have $|\psi\ra \la \psi| = \prod_{i=1}^{n}\frac{1}{2}(I + g_i)$.
We therefore see that $|\la \psi | \phi \ra| ^2=\la \phi |  \prod_{i=1}^{n}\frac{1}{2}(I + g_i)|\phi \ra$, which is the probability of measuring $g_1,\dots g_n$ on $|\phi \ra$ with all $+1$ outcomes.
We therefore also see that, if $A$ is a Pauli operator, $A |\psi\ra \vcentcolon = |\tilde{\psi}\ra$ is also a stabilizer state. 
So $\la \psi|A | \phi \ra \la \phi|A|\psi\ra = |\la \tilde{\psi} | \phi \ra| ^2$ can be calculated accordingly.

We now describe how to compute $\bra{\psi}A \ket{\phi}\bra{\phi}B \ket{\psi}$.
Consider $\bra{\tilde\psi}\tilde A \ket{\phi}\braket{\phi \vert \tilde\psi}$,
where $\ket{\tilde\psi} = B \ket{\psi}$ is still a stabilizer state, and $\tilde A = BA$.
We must therefore compute $\bra{u} P \ket{v}\braket{v\vert u}$ for stabilizer states $u,v$ and a Pauli operator $P$.
Note that $\bra{u} P \ket{v}\braket{v\vert u}=\bra{u} \frac{1}{2}(1+P) \ket{v}\braket{v\vert u}-\bra{u} \frac{1}{2}(1-P) \ket{v}\braket{v\vert u}$. Again, assume the stabilizers of $\ket{v}$ are generated by $g_i$, then $\bra{u} \frac{1}{2}(1\pm P) \ket{v}\braket{v\vert u}$ is the probability of measuring $g_1 \cdots g_n , P$ with outcomes all $+1$ for $g_i$ but outcomes $\pm 1$ for $P$.
However, note that, for a stabilizer state $\ket{\psi}$, the probability difference between the measurement outcome $+1$ and $-1$ is exactly $\la \psi | P |\psi\ra$.

This yields the following procedure (see \cref{alg:stab-overlap}).
The idea is to force the state $\ket{u}$ into a $+1$ eigenstate of each stabilizer generator of $\ket{v}$, while keeping track of the probabilities of $\ket{u}$ being in such an eigenstate.
Then, we can simply compute the expectation value of $P$ in $\ket{u}$. To calculate $|\la u | v\ra|^2$, we can simply take $P=I$.

\begin{algorithm}[H]
\caption{Procedure to compute $\bra{u} P \ket{v}\braket{v\vert u}$}
\begin{algorithmic}[1]
\label{alg:stab-overlap}
\State $p \gets 1$
\For{$h$ a stabilizer generator of $v$}
    \State $m \gets \bra u h\ket u$
    \If{$m = - 1$}
        \State \Return 0
    \ElsIf{$m=0$}
        \State $p \gets p / 2$
        \State Postselect $u$ onto the $+1$ eigenstate of $h$
    \EndIf
\EndFor
\State \Return $p \cdot \bra u P \ket u$
\end{algorithmic}
\end{algorithm}

Note that in \cref{alg:stab-overlap}, there are three possible branches for $m$: $m=-1$ (the \texttt{if} branch), $m=0$ (the \texttt{else if} branch), and $m=1$ (the \texttt{else} branch).
The \texttt{else} branch is not included in the pseudocode for \cref{alg:stab-overlap} because it is redundant.
The \texttt{else} condition ($m = 1$) would proceed as $p \gets p$ and postselect $u$ onto the $+1$ eigenstate of $h$.
But if $m = 1$, then $u$ is already in the $+1$ eigenstate of $h$, thus making the \texttt{else} condition redundant.


\end{document}